\definecolor{NoteColor}{RGB}{250,0,0} 
\theoremstyle{plain}
\newtheorem{theorem}{Theorem}
\newtheorem{proposition}{Proposition}
\newtheorem{lemma}{Lemma}
\newtheorem{corollary}{Corollary}
\theoremstyle{definition}
\newtheorem{definition}{Definition}
\newtheorem{remark}{Remark}
\newcommand{\conv}{\to}
\newcommand{\set}[1]{\left\{ #1 \right\}}%
\newcommand{\eps}{\varepsilon}%
\newcommand{\abs}[1]{\left| #1 \right| }%
\newcommand{\norm}[1]{\| #1 \|}%
\def\argmin{\mathop{\rm arg\,min}\limits}
\def\argmax{\mathop{\rm arg\,max}\limits}
\newcommand{\R}{\mathbb{R}}
\newcommand{\N}{\mathbb{N}}
\newcommand{\Cov}{{\rm Cov}}
\newcommand{\E}{{\rm E}}
\DeclareMathOperator{\diag}{diag}
\DeclareMathOperator{\spec}{Spec}
\DeclareMathOperator{\tr}{tr}
\DeclareMathOperator{\vect}{vect}
\newcommand{\pimax}{\pi_{\max}}
\newcommand{\lmax}{\lambda_{\max}}
\newcommand{\lmin}{\lambda_{\min}}
\newcommand{\js}{{j=1,\allowbreak  2, \allowbreak \ldots, \allowbreak G}}
\newcommand{\ks}{{k=1,\allowbreak  2, \allowbreak \ldots, \allowbreak p}}
\newcommand{\is}{{i=1,\allowbreak  2, \allowbreak \ldots, \allowbreak n}}
\newcommand{\sumin}{\sum_{i=1}^n}
\newcommand{\sumjG}{\sum_{j=1}^G}
\newcommand{\sold}{{(s)}}
\newcommand{\snew}{{(s+1)}}
\newcommand{\toldj}{\tau_{i,j}^\sold}
\newcommand{\Toldj}{T_{j}^\sold}
\newcommand{\Toldo}{T_{0}^\sold}
\newcommand{\emconv}{\text{em}}
\newcommand{\mathleft}{\@fleqntrue\@mathmargin\parindent}
\newcommand{\mathcenter}{\@fleqnfalse}
\newcommand{\pkg}[1]{{\textsf{#1}}}
\definecolor{ChangeColor}{RGB}{117, 0, 0} 
\newcommand{\CHANGE}[1]{{  #1 } }
\title{\Large   CONSISTENCY, BREAKDOWN ROBUSTNESS,  AND ALGORITHMS 
               FOR ROBUST IMPROPER MAXIMUM LIKELIHOOD CLUSTERING}
  \author{Pietro Coretto
          \thanks{The author gratefully acknowledges financial support from  MIUR research grant PRIN 2010J3LZEN, and  the use of the high-performance computing infrastructure funded by University of Salerno (research program  ASSA098434).}\hspace{.2cm}\\ 
           Department of Economics and Statistics\\ University of Salerno, Italy \\
           \and
           Christian Hennig%
	   \thanks{The author gratefully acknowledges support from the  EPSRC grant EP/K033972/1.}\hspace{.2cm}\\
           Department of Statistical Science \\ University College London, UK}
\date{~~}
\newmdenv[
  backgroundcolor=gray!20,
  frametitle=,
  skipabove=\topsep,
  skipbelow=\topsep,
]{reminder}
\begin{document}

\begin{reminder}
{\centering \color{red} \textbf{\textsf{This is a preprint. The revised version of this paper is published as}}\\}
\vspace{5pt}
P. Coretto and C. Hennig (2017). %
``Consistency, breakdown robustness, and algorithms for robust improper maximum likelihood clustering''. %
\textit{Journal of Machine Learning Research}, Vol. 18(142), pp. 1-39 %
(\href{http://jmlr.org/papers/v18/16-382.html}{\sf download link}).
\end{reminder}

{\let\newpage\relax\maketitle}


\begin{quote}{
\small {\bf Abstract.}~The robust improper maximum  likelihood estimator (RIMLE)  is a new method for robust multivariate clustering finding approximately Gaussian clusters. It maximizes a pseudo-likelihood defined by adding a component with improper constant density for accommodating outliers to a Gaussian mixture. A special case of the RIMLE is MLE for multivariate finite Gaussian  mixture models. In this paper we treat  existence, consistency, and  breakdown theory for the RIMLE comprehensively. RIMLE's existence is proved under non-smooth covariance matrix constraints. It is shown that these can be implemented via a computationally feasible Expectation-Conditional Maximization algorithm.  \\ 

{\bf Keywords.}~Robustness, improper density, mixture models, model-based clustering, maximum likelihood, ECM-algorithm.

{\bf MSC2010.}~62H30, 62F35.
}
\end{quote}

\section{Introduction}\label{sec_introduction}

Maximum likelihood estimation (MLE) in a Gaussian mixture model with  mixture components interpreted as clusters
is a popular approach to cluster analysis (see, e.g.,  \cite{Fraley_Raftery_2002}). In many datasets not all observations can be 
assigned appropriately to clusters that can be properly modelled by a Gaussian distribution, and it is also well known that the MLE can be strongly affected by outliers  (\cite{Hennig_2004}). In this paper we investigate the recently introduced ``robust improper maximum likelihood estimator'' \citep[RIMLE, see][]{Coretto_Hennig_2014_comparison}, a method for robust clustering with clusters that can be approximated by multivariate Gaussian distributions. 
The basic idea of RIMLE is to fit an improper density to the data that is made up by a Gaussian mixture density and a ``pseudo mixture component''  defined by a small constant density, which is meant to capture outliers and observations in low density areas of the data that cannot properly be assigned to a Gaussian mixture component (called ``noise'' in the following). This is inspired by the addition of a uniform ``noise component'' to a Gaussian mixture suggested by \cite{Banfield_Raftery_1993}. \cite{Hennig_2004} showed that using an improper density improves the breakdown robustness of this approach for one-dimensional datasets.   
As in many other statistical problems, violations of the model assumptions may cause problems in cluster analysis. Our general attitude to the use of statistical models in cluster analysis is that the  models should not be understood as reflecting some underlying but in practice unobservable ``truth'', but rather as thought constructs implying a certain behaviour of methods derived from them (e.g., maximizing the likelihood), which may or may not be appropriate in a given application (more details on the general philosophy of clustering can be found in  \cite{Hennig_Liao_2013, Hennig_2015}). Using a model such as a mixture of multivariate Gaussian distributions, interpreting every mixture component as a ``cluster'', implies that we look for clusters that are approximately ``Gaussian-shaped'', but we do not want to rely on whether the data really were generated i.i.d. by  a Gaussian mixture. We focus on situations in which the  number of clusters $G$ is fixed. 

There is a number of proposals already in the literature for accounting for the presence of noise and outliers in model-based clustering problems. The contributions can be divided in two groups: methods based on mixture modelling, and methods based on fixed partition models. Within the first group \cite{Banfield_Raftery_1993} and \cite{Coretto_Hennig_2011} dealt with uniform distributions added  as ``noise components''  to a finite Gaussian mixture. \cite{McLachlan_Peel_2000b} proposed to model data based on Student $t$-distributions.  \cite{Cuesta_Albertos_etal_1997}  and \cite{GarciaEscudero_Gordaliza_1999} introduced and studied trimming in order to robustify the $k$-means partitioning method. Robust partitioning methods with homoscedastic clusters based on ML--type procedures where proposed in \cite{Gallegos_2002} and \cite{Gallegos_Ritter_2005}. Heteroscedasticity in ML-type partitioning  methods has been introduced with the TCLUST algorithm of  \cite{GarciaEscudero_etal_2008}  and the ``k--parameters clustering'' of \cite{Gallegos_Ritter_2013}. More references and an in-depth overview are giuven in \cite{GarciaEscudero_etal_2015}. Different from the methods based on fixed partition models, mixture models and RIMLE allow a smooth transition between different clusters and between clustered observations and noise, which improves parameter estimation in the presence of overlap between mixture components. The one-dimensional version of the RIMLE was  introduced in \cite{Coretto_Hennig_2010} and was investigated based on Monte Carlo experiments. Extension of the methods to the multivariate setting is not straightforward. Existence and consistency of the MLE for the multivariate Gaussian mixtures is a long standing problem due to the ill-posedness of the likelihood function. Even for ML for plain multivariate Gaussian mixtures (i.e. the RIMLE with the improper constant density set to zero), consistency theory is limited to the situation in which the model is assumed to hold precisely, and restrictive conditions are required (e.g., \cite{Redner_Walker_1984}).  \cite{Chen_Tan_2009} and \cite{Alexandrovich_2014} propose and study a penalized ML estimator. \cite{GarciaEscudero_etal_2014} studied a classification ML estimator for Gaussian mixture that is based on the TCLUST idea.  

In this paper we study the theoretical properties of the RIMLE as well as its computation.  A comprehensive treatment of existence, consistency and robustness is given. This treatment includes the case of ML for multivariate Gaussian mixture as special case.  Particularly, the robustness properties of RIMLE are superior to those of the mixture-based methods proposed by \cite{Banfield_Raftery_1993} and \cite{McLachlan_Peel_2000b}, as demonstrated later in the paper.
For fitting plain Gaussian mixtures, some issues that are treated here arise as well, particularly the need to constrain the covariance matrices in order to avoid degeneration of the likelihood. Some literature on this is cited in Section \ref{subsec_rimle_constraints}. The consistency results given here in Section \ref{sec_existence_consistency} are of a nonparametric nature and show the consistency of the RIMLE for the RIMLE-functional defined for a general class of sampling distributions. Similar results have been shown for a partition likelihood model (\cite{Gallegos_Ritter_2013}) and for alternative, trimming-based approaches to robust clustering (e.g., \cite{GarciaEscudero_etal_2008,Gallegos_Ritter_2009}). Compared to these results, there is an additional difficulty for the RIMLE, namely that degeneration of the likelihood needs to be prevented also in the case that almost all observations are assigned to the noise component and the remaining observations are fitted arbitrarily well. This may look like a disadvantage, but in the literature cited above such problems are only avoided by fixing the trimming rate.  An analysis like the one given here, and in 
\cite{Coretto_Hennig_2014_comparison}, is required for understanding the case in which both the proportion of points considered as ``noise'' and the density level at which this happens are flexible. \cite{Coretto_Hennig_2014_comparison} introduce the OTRIMLE, a  data-adaptive choice of the improper constant density, the method's tuning constant for achieving robustness. That paper also includes a comprehensive simulation study comparing the different approaches to robust clustering.   In the study, every method turns out to be superior for one or more setups, but the OTRIMLE achieves the most satisfactory overall performance. 

The paper is organized as follows. We first discuss in Section \ref{sec_data} an artificial dataset to illustrate benefits of RIMLE compared with some existing  clustering approaches. The RIMLE is introduced and defined in Section \ref{sec_rimle}. In Section \ref{sec_existence_consistency} existence and consistency of the RIMLE are proved. Section \ref{sec_practical} treats the computation of the RIMLE and the choice of input parameters for the algorithms.  Section \ref{sec_breakdown} studies the breakdown robustness of RIMLE. \CHANGE{Numerical experiments are presented in Section \ref{sec:experiment}.} Section  \ref{sec_concluding_remarks} concludes the paper.


\CHANGE{

\section{Artificial data examples} \label{sec_data}
In order to demonstrate some issues in robust clustering, we generated two artificial data sets in dimension $p=20$ from two sampling designs, called AsyNoise and GEM respectively,  also considered for the numerical experiments presented in Section \ref{sec:experiment}. The two data sets are shown in Figure \ref{fig:dataset_asynoise} and \ref{fig:dataset_gem}. A detailed description of the sampling designs is given in Section \ref{sec:experiment}. These data sets cause trouble to most clustering methods including those which account for noise. Here we discuss the results produced by appropriate clustering methods. Every method requires tuning, the choice of these tunings is extensively discussed in Section \ref{sec_choice_of_delta} and \ref{sec:experiment}. All methods treated in this section are implemented as explained in Section \ref{sec:experiment}.   
\begin{figure}[!t]
\centering
  \includegraphics[height=0.5\textheight]{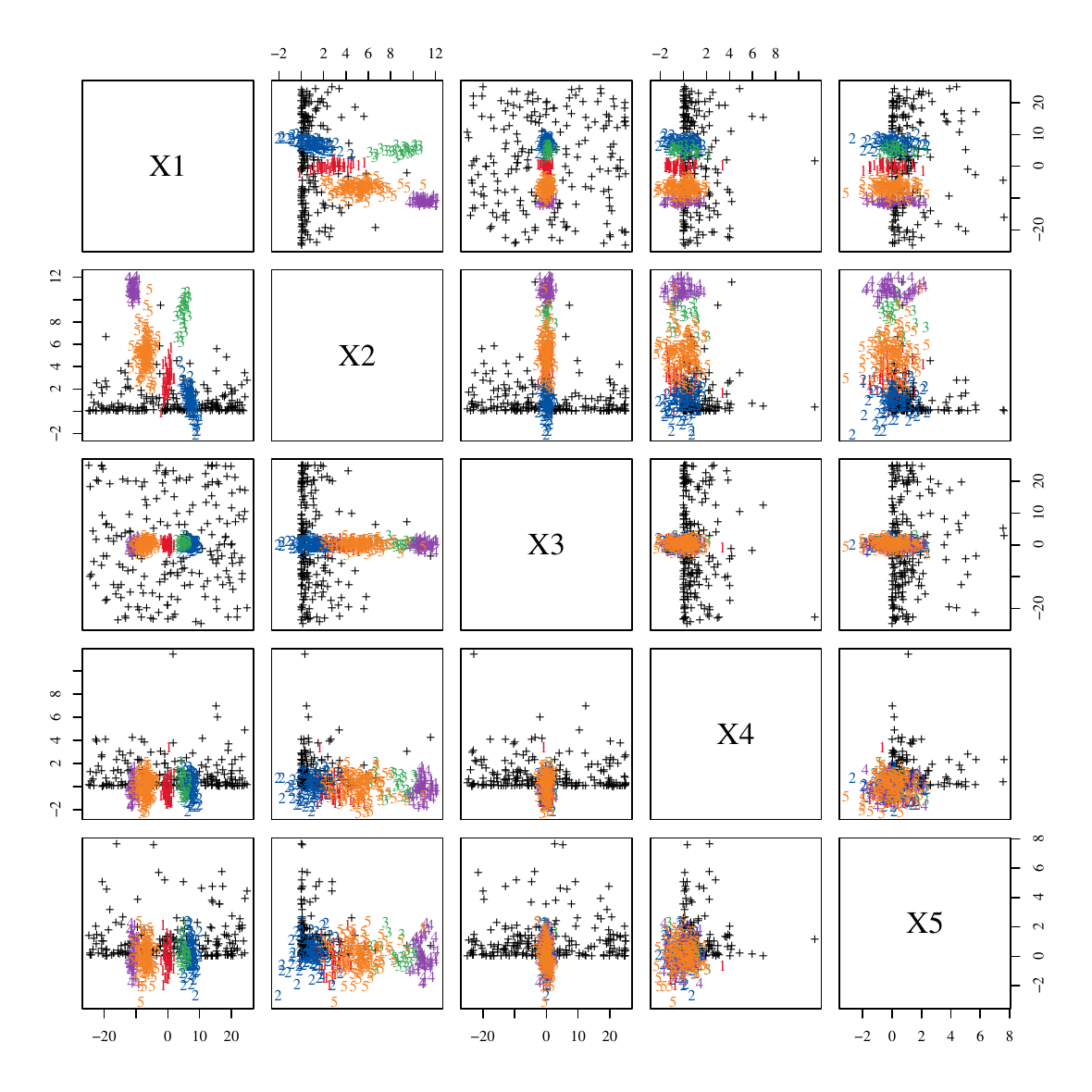}
  \caption{Scatter plots of $n=500$ data points sampled from the AsyNoise design defined in Section \ref{sec:experiment}. Marginals 1 to 5 are represented, further dimensions show a  similar pattern. Colors denote the 5 clusters, while noise is represented by the ``$+$'' symbol.}
  \label{fig:dataset_asynoise}
\end{figure}
\begin{figure}[!t]
\centering
  \includegraphics[width=0.8\linewidth]{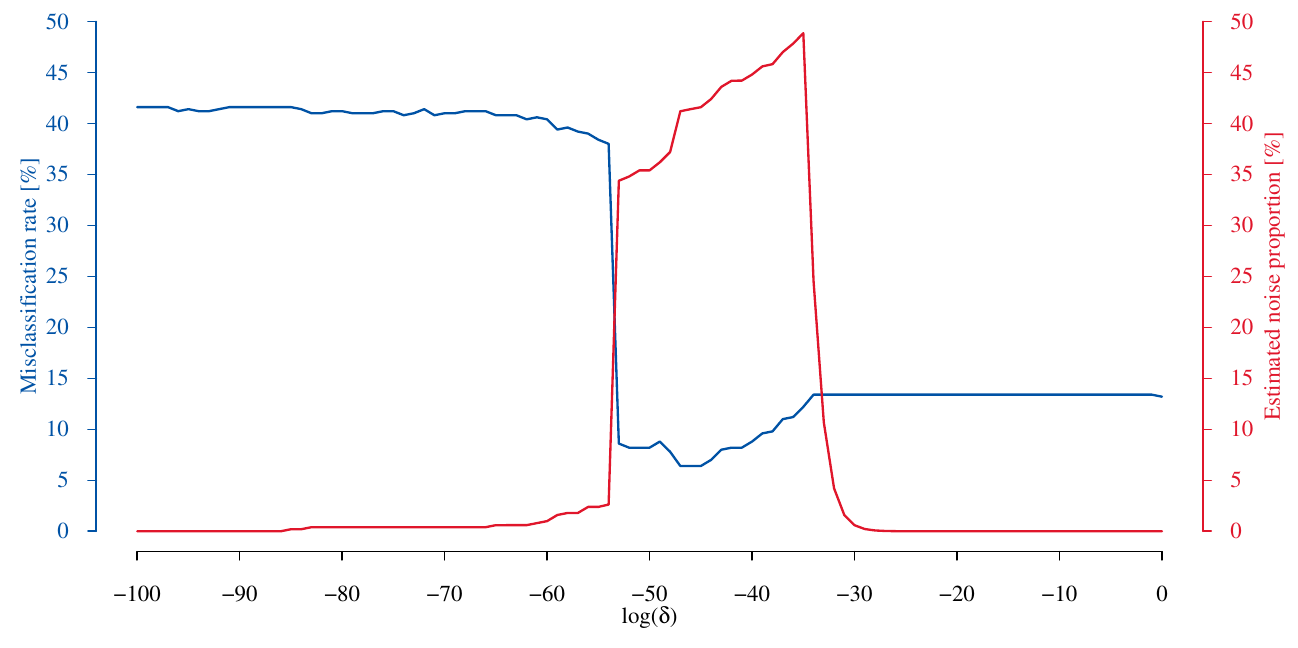}
  \caption{Misclassification rates, and estimated noise proportion by the RIMLE over a grid of $\log(\delta)$ values for the data set in Figure \ref{fig:dataset_asynoise}.}
  \label{fig:rimle_profiling_asynoise}
\end{figure}

In AsyNoise (Figure \ref{fig:dataset_asynoise}) there are 500 observations in 5 moderately separated clusters from student-t distributions with varying degrees of freedom, 187 observations (37.4\%) are background noise. Plain Gaussian mixture clustering without noise component fixing the number of clusters $G$ at 5 (as was done for all methods here) using the popular \pkg{R} package \pkg{mclust} of \cite{Mclust_Software_Manual_2012} puts the clustered points into two big clusters and assigns the noise to the remaining clusters achieving a misclassification rate of 61.4\% (i.e., the best  misclassification rate that can be achieved by permutation of the cluster  labels so that no cluster is identified with the noise). 
One could wonder whether the data set may be an easy job for clustering methods that take into account outliers, but this is not the case. The \pkg{mclust} software allows for a noise component as defined in \cite{Banfield_Raftery_1993}, which needs to be initialized. In all situations, \pkg{mclust} chooses an optimal covariance matrix parameterization based on data.  The dimensionality of the noise in the given data set seems to be too high for \pkg{mclust} to figure out that all of this really is noise. The resulting misclassification rate is 38\%. The TCLUST method implemented in the  \pkg{tclust} software of  \cite{Fritz_etal_2012} with the eigenratio constraint (a tuning parameter, see Section \ref{subsec_rimle_constraints})  set to 100 and the trimming rate set to 33\% produces a misclassification rate of 49\%. The trimming rate was fixed here imagining that one knows the true underlying expected noise proportion (which is 33\% for AsyNoise), which should normally be (unrealistically) advantageous for TCLUST. None of the five underlying clusters is found. 
\pkg{tclust}  also includes the ``ctlcurves''-tool  for guiding the user toward the choice of a suitable trimming rate. Unfortunately, this graphical tool does not give any clear indication for this data set. The true clusters can be characterized by having a considerably higher density than the noise region, so density based clustering would seem to be another promising approach, but it suffers from the high dimensionality of the data, too. We applied the DBSCAN algorithm (\cite{EsKrSaXu_1996}) using the \pkg{dbscan} \pkg{R}-package of \cite{R_dbscan}. We used various values for its tuning parameters \texttt{eps} (neighborhood radius) and \texttt{MinPts} (minimum number of points in the neighborhood). What happens for reasonable values of \texttt{MinPts} (e.g., 5) is that for small \texttt{eps} everything is classified as noise, and for large \texttt{eps} there are only 2 clusters.  In a certain short \texttt{eps}-interval more clusters are found and certain values even deliver 5  clusters. The best partition is found with \texttt{eps}=$4.25$. In this case, DBSCAN's misclassification rate is 26\%; 63\% of points are assigned to the noise. The RIMLE method treated in this paper is more appropriate. We computed it for several values of $\log(\delta)$ ($\delta$ is the value of the improper noise density; other constants were chosen as  $\gamma=100$ and $\pimax=0.5$, see Algorithm  \ref{algo:ecm} and Sections \ref{sec_rimle} for details).  Results are shown in Figure \ref{fig:rimle_profiling_asynoise}. When choosing $\log(\delta)$ appropriately, namely $\log(\delta) \in[-53,-36]$, the RIMLE gets the structure of the data set right, and it stably produces a misclassification rate of in the range [6.4\%, 11\%] with an estimated noise proportion in the range $[34.38\%, 45.8\%]$.  Values of $\log(\delta)$ below -100 do not change the results.  For large values of $\log(\delta)$ too much noise is found, hence the RIMLE's noise proportion constraint (see Section \ref{sec_rimle}) becomes active and the resulting estimated noise proportion gets close to zero.   The OTRIMLE criterion of  \cite{Coretto_Hennig_2014_comparison} selects an optimal value  $\log(\delta)=-40$, which is in the region where RIMLE shows its optimal performance. The RIMLE at $\log(\delta)=-40$ produces a misclassification rate of 8.8\% with estimated noise proportion equal to 44.8\%. 
\begin{figure}[!t]
\centering
  \includegraphics[height=0.5\textheight]{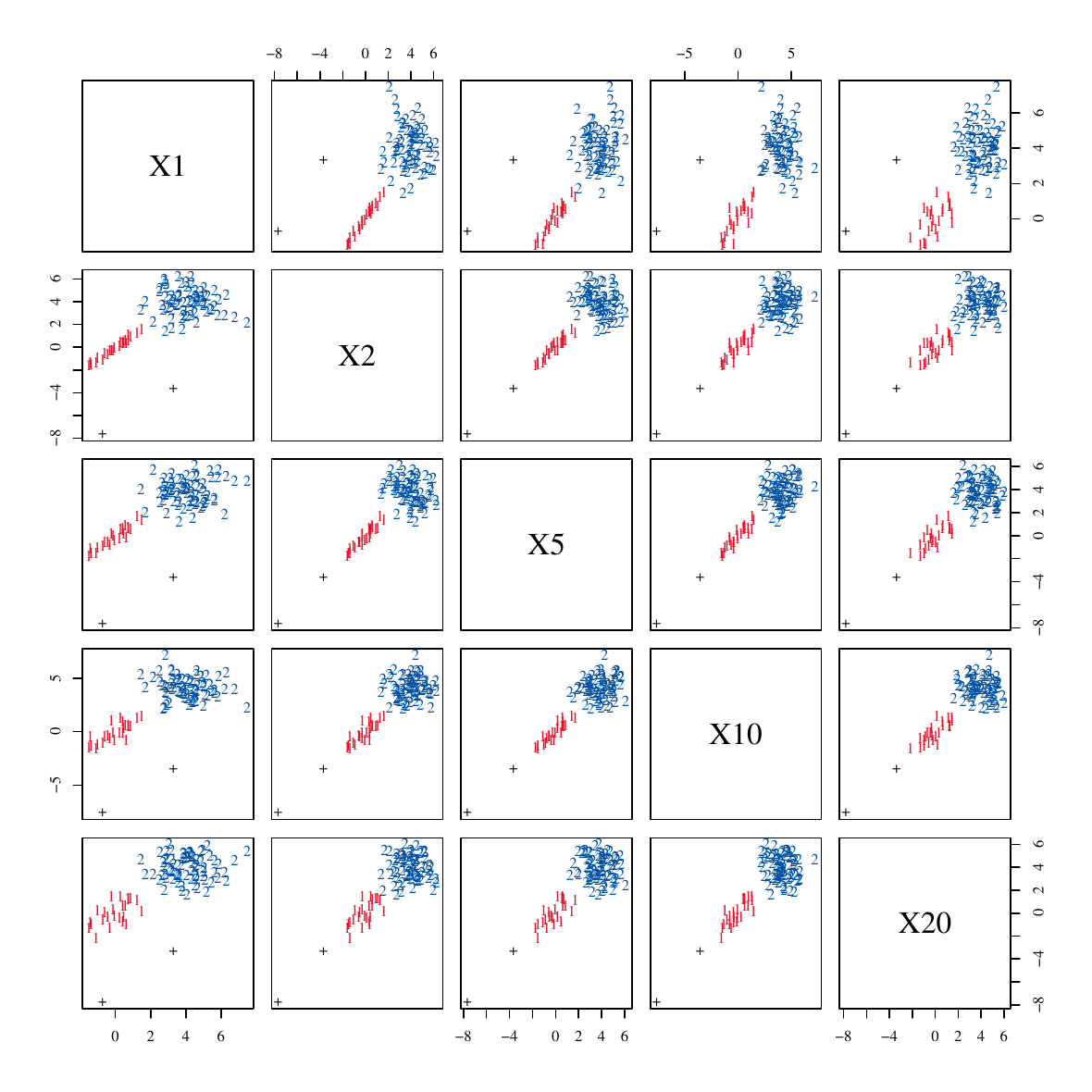}
  \caption{Scatter plots of $n=100$ points  sampled from the GEM sampling design defined in Section \ref{sec:experiment}. Marginals 1,2,5,10, and 20 are represented, further dimensions show a  similar pattern. Colors denote the 2 clusters, while noise is represented by the ``$+$'' symbol.}
  \label{fig:dataset_gem}
\end{figure}
\begin{figure}[!t]
\centering
  \includegraphics[width=0.8\linewidth]{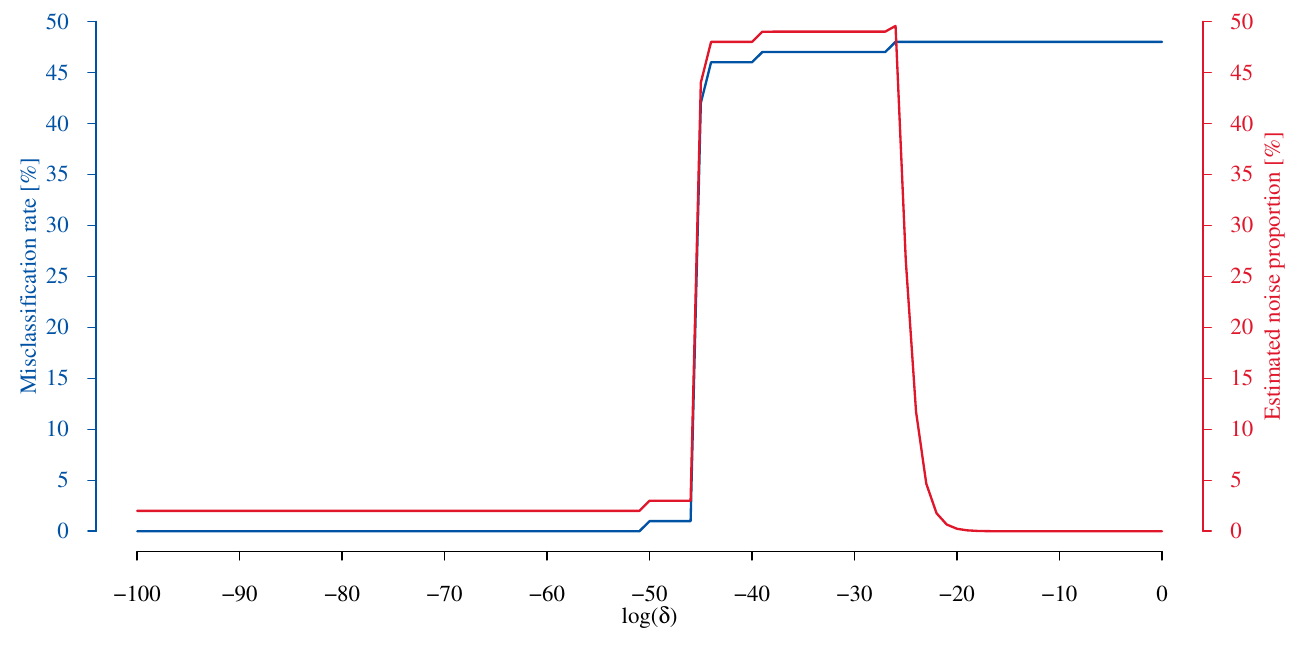}
  \caption{Misclassification rates, and estimated noise proportion by the RIMLE over a grid of $\log(\delta)$ values for the data set in Figure \ref{fig:dataset_gem}. }
  \label{fig:rimle_profiling_gem}
\end{figure}

In GEM  (Figure \ref{fig:dataset_gem}), 100 points are sampled from two normal populations with extremely different scatters, 2 points (2\%) are outliers almost lying on a hyperplane. These outliers are not extremely separated from the regular points, and this can cause trouble to robust methods. \pkg{mclust} without the noise component assigns the two outliers to cluster 1 and achieves a misclassification rate of 2\%, although the estimated mixture parameters are enormously biased. \pkg{mclust} with noise assigns most of the points of the second cluster to the noise component and the resulting misclassification rate is 47\%. In this case, the ctlcurves-tool suggests a trimming rate of 6\% for TCLUST, but the corresponding misclassification rate is 15\%. When the trimming rate is set to 2\% (which is the true expected noise proportion for the GEM design), TCLUST misclassifies 11\% of the observations. This is because  TCLUST gets the covariance structure of the second cluster wrong (see Section \ref{sec:experiment} for more details). DBSCAN does a better job with this data set, but its performance depends crucially on the appropriate setting of its two tunings. With \texttt{minPts}=3 and \texttt{eps}$\in [6.2,9.6]$, DBSCAN can reconstruct the two clusters and the noise correctly. However, in real situations, the question is how to set these parameters to achieve the best solution. As for the previous data set the RIMLE has been computed for several values of $\log(\delta)$ maintaining all other parameters as before. The result can be seen in Figure \ref{fig:rimle_profiling_gem}. For any $-\infty < \log(\delta) \leq -46$ the RIMLE is 100\% accurate and estimates a noise proportion of 2\%. The OTRIMLE method for the data-driven choice of $\delta$ selects $\log(\delta)=-200$, which is in the region where the RIMLE achieves the best results.

}


\section{Basic definitions}\label{sec_rimle}

\subsection{RIMLE and clustering}\label{subsec_rimle_clustering}
The robust improper maximum likelihood estimator (RIMLE) is based on the ``noise component''-idea for robustification of the MLE based on the Gaussian mixture model. This models the noise  by a uniform distribution, but in fact we are interested in more general  patterns of noise or outliers. However, regions of high density  are rather associated with clusters than with noise, so  the noise regions should be those with the lowest density. This kind of  distinction can be achieved by using the uniform density as in \cite{Banfield_Raftery_1993}, but in the presence of gross outliers the dependence of the uniform distribution on the convex hull of the data causes a robustness problem (\cite{Hennig_2004}).  The uniform distribution is not really used here as a model for the noise,  but rather as a technical device to account for whatever goes on in low density regions. The RIMLE drives this idea further by using an improper  uniform distribution the density value of which does not depend on how far awy extreme points in the data are from the main bulk. In the following, assume an observed sample  $x_1, x_2, \ldots, x_n,$
where $x_i$ is the realization of a random variable $X_i \in \R^p$ with $p > 1$; $X_1,\ldots,X_n$ i.i.d. The goal is to cluster the sample points into $G$ distinct groups.  RIMLE then maximizes a pseudo-likelihood, which is based on the improper pseudo-density
\begin{equation}\label{eq:psi}
\psi_\delta(x, \theta)=\pi_0 \delta + \sum_{j=1}^{G} \pi_j \phi(x; \mu_j, \Sigma_j),
\end{equation}
where $\phi(\cdot, \mu, \Sigma)$ is the Gaussian density with mean $\mu$ and covariance matrix $\Sigma$, $\pi_0,\pi_j \in [0,1]$ for $\js$, $\pi_0 + \sum\nolimits_{i=1}^G \pi_j = 1$, while $\delta$ is the improper constant density. The parameter vector $\theta$ contains  all Gaussian parameters plus all proportion parameters including $\pi_0$, ie. $\theta = (\mu_{1},\ldots,\mu_{G} ,\vect(\Sigma_{1}),\ldots,\vect(\Sigma_{G}), \pi_{0},\ldots,\pi_{G}),$ where $\vect(A)$ is the vectorized upper (or lower) triangle including the main diagonal of the symmetric  square matrix $A$. $\delta$  and the number of Gaussian components $G$ are considered fixed and known.  Although this does not define a proper probability model, it yields a useful procedure for data modelled as a proportion of $(1-\pi_0)$ of a mixture of Gaussian distributions, which have high enough density peaks to be interpreted as clusters plus a proportion 
$\pi_0$ times something unspecified with density $\le \delta$ (which may even contain further Gaussian components with so few points and/or so large within-component variation that they are not considered as ``clusters''). 

The definition of the pseudo-model in  \eqref{eq:psi} requires that the value of $\delta$ is fixed in advance. The choice of $\delta$ will be discussed in Section \ref{sec_choice_of_delta}. 

Given  the sample improper pseudo-log-likelihood function
\begin{equation}\label{eq:ln}
l_n(\theta) = \frac{1}{n} \sum_{i=1}^{n} \log \psi_\delta(x_i, \theta),
\end{equation}
the RIMLE is defined as 
\begin{equation}\label{eq:rimle}
\theta_n(\delta) =   \argmax_{\theta \in \Theta_n} l_{n}(\theta),
\end{equation}
where $\Theta_n$ is a constrained parameter space defined in Section \ref{subsec_rimle_constraints}.   $\theta_n(\delta)$ is then used to cluster points using pseudo posterior probabilities for belonging to the Gaussian components or the improper uniform. These pseudo posterior probabilities are given by 
\begin{equation*}
\tau_{j}(x_i, \theta):= \\
\begin{cases}
  \frac{\pi_0 \delta}{\psi_\delta(x_i,\theta)}                    &   \text{if} \; j=0  \\
  \frac{\pi_j \phi(x_i,\mu_j,\Sigma_j)}{\psi_\delta(x_i,\theta)}   &   \text{if} \; \js;  \\
\end{cases}
\quad \text{for} \; i=1,2,\ldots,n.
\end{equation*}
Points are assigned to the component for which the pseudo posterior probability is maximized. The assignment rule is then given by  
\begin{equation}\label{eq_J}
J(x_i, \theta):=\argmax_{j\in \set{0,1,2,\ldots,G}} \tau_{j}(x_i, \theta).
\end{equation}
The assignment based on maximum posterior probabilities  is common to all model-based clustering methods. Here, an improper density is involved, and so these are ``pseudo posterior probabilities''. 

\CHANGE{ We also define a population version of the RIMLE for later deriving consistency results for the sequence  $\set{\theta_n(\delta)}_{n \in \N}$. Let $\E_P f(x)$ be the expectation of $f(x)$ under $P$. The RIMLE population target function and the constrained parameter set can be obtained by replacing the empirical measure with $P$, and the population version of 
$l_n({\theta})$ is given by  
\begin{equation*}
L(\theta)=\E_P \log(\psi_\delta(x,\theta)).
\end{equation*}
Define  $L_G = \sup_{\theta\in\Theta_G(P)} L(\theta)$, where $\Theta_G(P)$ is a constrained parameter space defined in Section \ref{subsec_rimle_constraints}.}

\subsection{The constrained parameter space}\label{subsec_rimle_constraints}

Some notation: the $k$th element of $\mu_j$ is denoted by $\mu_{j,k}$ for $\ks$ and $\js$. Let $\lambda_{j,k}$ be the $k$th  eigenvalue of $\Sigma_j$,  define  $\Lambda(\theta) =\set{\lambda_{j,k}: \; \js; \; k=1,2, \ldots,p }$, $\lmin(\theta) = \min_{j,k}\{\Lambda(\theta)\}$,  $\lmax(\theta)=\max_{j,k}\{\Lambda(\theta)\}$. 
\begin{remark}\label{remark:phi}
 The $p$-dimensional Gaussian density can be written in terms of the eigen-de\-com\-po\-si\-tion of the covariance matrix:
\begin{equation*}
\phi(x; \mu, \Sigma) = (2\pi)^{-\frac{p}{2}} \left( \prod_{k=1}^p \lambda_k \right)^{-\frac{1}{2}} \exp{\left(  -\frac{1}{2} \sum_{k=1}^p  \lambda_k^{-1} (x-\mu)' v_kv_k' (x-\mu) \right)},
\end{equation*}
where $\lambda_k$ is the $k$-th eigenvalue of $\Sigma$, and $v_k$ is its associated eigenvector, for $\ks$. Let $\lambda_0=\min\{\lambda_k; \; \ks\}$. Then,
$\lim_{\lambda_0 \searrow 0} \phi(\mu; \mu,\Sigma) = \infty$, with $\phi(\mu; \mu,\Sigma) = O(\lambda_0^{-p/2})$ as $\lambda_0 \searrow 0$. On the other hand  $\lim_{\lambda_0 \searrow 0} \phi(x; \mu,\Sigma) = 0$ for all $x \neq \mu$, with $\phi(x; \mu,\Sigma) = o(\lambda_0^q)$ for any fixed $q$ as $\lambda_0 \searrow 0$. This implies that 
\begin{equation*}
\lim_{\lambda_0 \searrow 0} \; \phi(\mu; \mu,\Sigma)\phi(x; \mu,\Sigma) \to 0 \quad \text{for any} \;x \neq \mu
\end{equation*}
Furthermore, each of the  density components in $\psi_\delta(\cdot)$ can be bounded above in terms of $\lmax(\theta)$ and $\lmin(\theta)$:
\begin{equation}\label{eq:phi_max}
\phi(x; \mu_j, \Sigma_j) \leq (2\pi\lmin(\theta))^{-\frac{p}{2}}  \exp \left\{ -\frac{1}{2} \lmax(\theta)^{-1}\norm{x-\mu_j}^2 \right\}  \leq (2\pi\lmin(\theta))^{-\frac{p}{2}}.
\end{equation}
\end{remark}
The optimization  problem in \eqref{eq:rimle} requires that $\Theta_n$ is suitably defined, otherwise $\theta_n(\delta)$ may not exist. Consider a sequence $(\theta_m)_{m\in\N}$,  as discovered by \cite{Kiefer_Wolfowitz_1956}, the likelihood of a Gaussian mixtures degenerates if $\lambda_{1,1,m} \searrow 0$ if $\mu_{1,m}=x_1$, and this holds for \eqref{eq:ln}, too.
We here use the eigenratio constraint 
\begin{equation} \label{eq_cov_constraint}
\lmax(\theta) / \lmin(\theta)  \leq \gamma < +\infty
\end{equation}
with a constant $\gamma\ge 1$, where $\gamma=1$ constrains all component covariance matrices to be spherical and equal, as in $k$-means clustering.  This type of constraint has been proposed by \cite{Dennis_1981}, while \cite{Hathaway_1985} showed consistency of the scale-ratio constrained MLE for one-dimensional Gaussian mixtures. \cite{Ingrassia_2004} and \cite{Ingrassia_Rocci_2007} introduced EM algorithms for implementing these constraints for multivariate datasets. TCLUST by \cite{GarciaEscudero_etal_2008} and  \cite{GarciaEscudero_etal_2014} also makes use of eigenratio constraints. Moreover there are a number of alternative constraints, see \cite{Ingrassia_Rocci_2011, Gallegos_Ritter_2009}. It may be seen as a disadvantage of \eqref{eq_cov_constraint} that the resulting estimator will not be affine equivariant (this would require allowing $\lmax(\theta) / \lmin(\theta)\to\infty$ within any component). \CHANGE{ Affine equivariance can be achieved by defining a sphered version of the RIMLE as 
\begin{displaymath}
\theta_n^*(\delta)=\theta_n^*(\delta,x_1,\ldots,x_n)=\theta_n(\delta,x_1^*, x_2^*, \ldots, x_n^*)
\end{displaymath}
with $x_i^*=A(x_i-m),\ i=1,\dots,n$, where $S(x_1,\ldots,x_n)^{-1}=A'A$; 
$S(x_1,\ldots,x_n)$ could be the sample covariance matrix or another scale 
matrix and $m$ the mean vector or another location estimator. 
This yields affine equivariance because 
the sphered versions of $\{x_1,\ldots,x_n\}$  and 
$\{Bx_1+b,\ldots,Bx_n+b\}$ with some invertible $p\times p$-matrix $B$ 
and $b\in\R^p$ are 
the same. Affine equivariance is not necessarily desirable though, see 
\cite{Hennig2015b}, Sec. 31.3.4.}

\CHANGE{ This defines the parameter space
\begin{equation}\label{eq_Theta_tilde}
\tilde{\Theta}:= \left\{
\theta:  \; \; %
\pi_j \geq 0 \; \forall j\geq 1, \; %
\pi_0+\sum_{j=1}^G \pi_j =1; \; %
\frac{\lmax(\theta)}{\lmin(\theta)} \leq \gamma
\right\}. 
\end{equation}}
\CHANGE{Occasionally, later, the notation $\|\theta\|$ will refer to the
Euclidean norm of a vector pieced together from all the parameters
collected in $\theta$, in which all covariance matrices are interpreted as 
subvectors of all the matrix entries.} 

Although \eqref{eq_cov_constraint} ensures the boundedness of the likelihood in standard mixture models and TCLUST, for RIMLE this is not enough. \CHANGE{ 
The Gaussian components could degenerate on a few points and all other points 
could be fitted by the improper uniform component.
Therefore we impose an additional constraint:
\begin{equation} \label{eq_noise_constraint}
\frac{1}{n}  \sum_{i=1}^n  \tau_0(x_i, \theta) \leq \pi_{\text{max}},
\end{equation}
for fixed $0< \pimax < 1$. The quantity $ n^{-1} \sum_{i=1}^n  \tau_0(x_i,\theta)$ can be interpreted as the estimated proportion of noise points.
This constraint depends on the dataset. Unfortunately the similar
looking constraint $\pi_0\leq\pimax$ independent of the data 
will not do, because this could not stop more than a portion of $\pi_0$ points
to be fitted by the improper uniform component.
  
There is therefore a constrained effective parameter space for RIMLE estimation
depending on the dataset:}
\begin{equation}\label{eq_Theta_n}
\Theta_n:= \left\{
\theta \in \tilde{\Theta}:  \; \; %
\pi_j \geq 0 \; \forall j\geq 1, \; %
\pi_0+\sum_{j=1}^G \pi_j =1; \;
\frac{1}{n}  \sum_{i=1}^n  \tau_0(x_i, \theta) \leq \pi_{\text{max}}; \; %
\frac{\lmax(\theta)}{\lmin(\theta)} \leq \gamma
\right\}. 
\end{equation}
\CHANGE{ Analogously, existence and consistency of the RIMLE functional can only be showed on a parameter subset of $\tilde{\Theta}$ that depends on the underlying distribution and enforces that enough probability mass is fitted by Gaussian components rather than the improper uniform:
\begin{equation}\label{eq_Theta_G}
\Theta_G(P):= \left\{
\theta\in \tilde{\Theta}:  \;  %
\pi_j \geq 0 \; \forall j \geq 1; \; %
\pi_0+\sum_{j=1}^G \pi_j =1; \;
 \E_P\frac{\pi_0\delta}{\psi_\delta(x,\theta)}\le\pimax; \; %
\frac{\lmax(\theta)}{\lmin(\theta)} \leq \gamma
\right\}.
\end{equation}}

\section{RIMLE existence and consistency}\label{sec_existence_consistency}

We first show existence of the RIMLE for finite samples 
Let $\#(A)$ denote the cardinality of the set $A$. 
Let $\underline{x_n}=\{x_1, x_2, \ldots, x_n\}$. 
Lemma \ref{lemma_finite_n_eigen_1} concerns the important case of plain Gaussian mixtures
($\delta=0$) and requires a weaker assumption A0(a) for existence than A0 required for the RIMLE with $\delta>0$. Here are some assumptions:
\begin{description}
\item[A0(a)] $\#(\underline{x_n})>G$.
\item[A0~~~]    $\#(\underline{x_n})>G+\lceil n \pimax  \rceil$. 
\end{description}

\begin{lemma}\label{lemma_finite_n_eigen_1}
Assume A0(a), $\delta=0$. Let  $(\theta_m)_{m \in \N}$ be a sequence such that $\lmax(\theta_m) / \lmin(\theta_m)$ $\leq\gamma$. Assume also that  for some $\js$ and $\ks$,  $\lambda_{k,j,m} \searrow 0$ as  $m \to \infty$;   then $\sup l_n(\theta_m) \to -\infty.$
\end{lemma}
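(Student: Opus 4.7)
The plan is to combine the eigenratio constraint with a pigeonhole argument on the distinct sample points. With $\delta=0$, the pseudo-log-likelihood reduces to the ordinary Gaussian mixture log-likelihood, $l_n(\theta) = n^{-1}\sum_{i=1}^{n}\log \sum_{j=1}^G \pi_j \phi(x_i;\mu_j,\Sigma_j)$, and the hypothesis $\lambda_{k,j,m}\searrow 0$ together with $\lmax(\theta_m)\leq\gamma\, \lmin(\theta_m)$ forces $s_m := \lmax(\theta_m) \to 0$ and $\lmin(\theta_m)\geq s_m/\gamma \to 0$. Applying \eqref{eq:phi_max} and $\sum_{j=1}^G \pi_{j,m} \leq 1$ gives the uniform ``coarse'' bound
\begin{equation*}
\log \psi_0(x_i,\theta_m) \;\leq\; C - \tfrac{p}{2}\log s_m \qquad (1 \leq i \leq n,\ m \in \N),
\end{equation*}
where $C$ depends only on $p$ and $\gamma$.

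For the crux of the argument I would use A0(a) to pick $M:=\#(\underline{x_n})\geq G+1$ distinct sample values and set $d := \tfrac{1}{2}\min\{\|x_i - x_{i'}\|:x_i \neq x_{i'}\}>0$. Since two distinct sample points cannot both lie within distance $d$ of any single mean $\mu_{j,m}$ (triangle inequality), with only $G$ means at most $G$ distinct sample values can lie within distance $d$ of some mean; hence, at each $m$, there exists an index $i^*(m)$ with $\|x_{i^*(m)} - \mu_{j,m}\| \geq d$ for all $j$. Feeding this separation into \eqref{eq:phi_max} produces the sharper bound $\log \psi_0(x_{i^*(m)},\theta_m) \leq C - \tfrac{p}{2}\log s_m - \tfrac{d^2}{2 s_m}$. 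Combining the coarse bound on the remaining $n-1$ terms with this sharp bound on the $i^*(m)$-term and averaging gives
\begin{equation*}
l_n(\theta_m) \;\leq\; C - \tfrac{p}{2}\log s_m - \tfrac{d^2}{2n\, s_m},
\end{equation*}
which tends to $-\infty$ because $1/s_m$ dominates $|\log s_m|$ as $s_m \to 0^{+}$.

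The main obstacle in making this rigorous is the pigeonhole step: a naive argument of the form ``$G$ means cannot fit more than $G$ points'' is not quite enough, since for unrestricted covariances the Gaussians could in principle fit many sample points loosely at once. What makes the argument work is that the eigenratio constraint forces every $\Sigma_{j,m}$ to become uniformly concentrated at the common rate $s_m$, so the same exponential penalty $\exp(-\|\cdot - \mu_{j,m}\|^2/(2 s_m))$ applies to every mixture component simultaneously. Once this uniformity is exploited through the separation $d>0$ coming from A0(a), the remaining rate comparison between $\log(1/s_m)$ and $1/s_m$ is routine.
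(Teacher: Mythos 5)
Your proof is correct and follows essentially the same route as the paper's: the eigenratio constraint forces all eigenvalues to zero at a common rate, A0(a) guarantees a sample point bounded away from all $G$ means, and the exponential decay of every component's density at that point dominates the $O(\lmin^{-p/2})$ blow-up at the remaining points. The only real difference is in execution and is welcome: your pigeonhole argument with the fixed separation $d$ gives an explicit bound depending only on $\lmax(\theta_m)$, hence uniform over means and proportions at each $m$, whereas the paper passes to subsequences along which the means converge or leave every compact set; your version therefore handles the $\sup$ in the statement more directly.
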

\begin{proof}
$\lambda_{k,j,m} \searrow 0$ implies $\lmax(\theta_m)\searrow 0,\ 
\lmin(\theta_m)\searrow 0$ at the same speed because of
\eqref{eq_cov_constraint}. Assume w.l.o.g. (otherwise consider a suitable
subsequence) that $(\theta_m)_{m \in \N}$ is such that $\mu_{j,m},$ $\js$ either
leave every compact set for $m$ large enough or converge, and assume
w.l.o.g., that if their limits are in $\{x_1,\ldots,x_n\}$, they are in  
$\underline{x_G}=\{x_1,\ldots,x_{G}\}$. A0(a) implies that 
$\exists\ x_i\not\in\underline{x_G}$, and $\exists \nu>0$ such that for all such $x_i,\ \js$ and large enough $m: \|x_i-\mu_{j,m}\|\ge \nu$. Because the likelihood
\begin{equation}\label{eq_prod_delta_0}
\mathcal{L}_{n}(\theta_m) =  %
\prod_{x_i\in\underline{x_G}} \Bigl\{\sum_{j=1}^{G}  \pi_{j}  \phi(x_{i}; \mu_{j,m}, \Sigma_{j,m}) \Bigr\} %
\prod_{x_i\not\in\underline{x_G}} \Bigl\{\sum_{j=1}^{G} \pi_{j} \phi(x_{i}; \mu_{j,m}, \Sigma_{j,m})  \Bigr\}, %
\end{equation}
and Remark  \ref{remark:phi}, the first product is of order $O(\lmin(\theta_m)^{-p/2})^G$, and the second one of order $o(\lmin(\theta_m)^q)$ for any fixed $q$, which implies 
that $\mathcal{L}_{n}(\theta_m) \to 0$ and  $l_{n}(\theta_m) \to -\infty$.
\end{proof}

\begin{lemma}\label{lemma_finite_n_eigen_2}
Assume A0, $\delta>0$.  $(\theta_m)_{m \in N}$ is a sequence in $\Theta_n$. Assume also that  for some $\js$ and $\ks$,  $\lambda_{k,j,m} \searrow 0$  as $m \to \infty$.   Then $l_n(\theta_m) \to -\infty.$ 
\end{lemma}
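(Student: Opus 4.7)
The plan is to use the noise-proportion constraint~\eqref{eq_noise_constraint} to convert the shrinking eigenvalues into forced decay of $\pi_0^{(m)}$, and then argue that the logarithmic blow-up of the Gaussian density at potential spike centres cannot beat the resulting superpolynomial decay of $\psi_\delta$ at data points that are not matched by any Gaussian component. The eigenratio constraint gives $\lmax(\theta_m)\leq\gamma\lmin(\theta_m)\to 0$, so every component covariance degenerates. By compactness I pass to a subsequence along which $\pi_{j,m}\to\pi_j^*\in[0,1]$ for $j=0,1,\ldots,G$ and each $\mu_{j,m}$ either converges to some $\mu_j^*\in\R^p$ or satisfies $\norm{\mu_{j,m}}\to\infty$. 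Let $M$ be the (finite) set of those finite limits, set $F=\set{i:x_i\in M}$ and $N=\set{1,\ldots,n}\setminus F$. By A0, the distinct values in $\underline{x_n}\setminus M$ number at least $\#(\underline{x_n})-G>\lceil n\pimax\rceil$, hence $|N|>n\pimax$. For each $i\in N$ and $j$, $\norm{x_i-\mu_{j,m}}$ stays above some $c>0$ for large $m$, and~\eqref{eq:phi_max} combined with Remark~\ref{remark:phi} gives
\[
\phi(x_i;\mu_{j,m},\Sigma_{j,m})\leq\alpha_m:=(2\pi\lmin(\theta_m))^{-p/2}\exp\!\left(-\frac{c^2}{2\gamma\lmin(\theta_m)}\right),
\]
where $\alpha_m\to 0$ faster than any polynomial in $\lmin(\theta_m)$.

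Split on $\pi_0^*$. If $\pi_0^*>0$, then for each $i\in N$ we have $\psi_\delta(x_i,\theta_m)\to\pi_0^*\delta$ and $\tau_0(x_i,\theta_m)\to 1$, so $\frac{1}{n}\sum_{i=1}^n\tau_0(x_i,\theta_m)\to|N|/n>\pimax$, eventually violating~\eqref{eq_noise_constraint} and contradicting $\theta_m\in\Theta$; hence $\pi_0^*=0$. The noise constraint then yields $\frac{1}{|N|}\sum_{i\in N}\tau_0(x_i,\theta_m)\leq n\pimax/|N|=:c_1<1$, and by pigeonhole a fixed index $i^*\in N$ satisfies $\tau_0(x_{i^*},\theta_m)\leq c_1$ along a further subsequence. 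Writing $g_i^{(m)}:=\sum_{j=1}^G\pi_j^{(m)}\phi(x_i;\mu_{j,m},\Sigma_{j,m})$, the inequality $\tau_0(x_{i^*},\theta_m)\leq c_1$ rearranges to $g_{i^*}^{(m)}\geq\frac{1-c_1}{c_1}\pi_0^{(m)}\delta$, and since $g_{i^*}^{(m)}\leq\alpha_m$ this forces $\pi_0^{(m)}=O(\alpha_m)$. Consequently, for every $i\in N$, $\psi_\delta(x_i,\theta_m)\leq\pi_0^{(m)}\delta+\alpha_m=O(\alpha_m)$.

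Summing the log-pseudo-density, for $i\in F$ the bound~\eqref{eq:phi_max} yields $\log\psi_\delta(x_i,\theta_m)\leq-\tfrac{p}{2}\log(2\pi\lmin(\theta_m))+O(1)$, and for $i\in N$ the previous step gives $\log\psi_\delta(x_i,\theta_m)\leq\log\alpha_m+O(1)$. Therefore
\[
n\,l_n(\theta_m)\leq-|N|\cdot\frac{c^2}{2\gamma\lmin(\theta_m)}+O\!\left(\log\frac{1}{\lmin(\theta_m)}\right)\longrightarrow-\infty,
\]
because the scale $1/\lmin(\theta_m)$ dominates the logarithmic scale. Since this argument applies along every subsequence (the only branch compatible with $\theta_m\in\Theta$ for all $m$ being $\pi_0^*=0$), $l_n(\theta_m)\to-\infty$ along the full sequence.

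The main obstacle is extracting the forced rate $\pi_0^{(m)}=O(\alpha_m)$; without it, $\pi_0^{(m)}$ could vanish slowly enough that $\pi_0^{(m)}\delta$ still dominated $g_i^{(m)}$ at every unfitted point, keeping $\log\psi_\delta$ only logarithmically negative in $\lmin(\theta_m)$ and potentially matched by the $F$-contribution. It is the noise constraint~\eqref{eq_noise_constraint}, combined with the counting $|N|>n\pimax$ that assumption A0 provides, that pins this rate down; this is precisely why A0 (rather than the weaker A0(a) used for Lemma~\ref{lemma_finite_n_eigen_1}) is needed here.
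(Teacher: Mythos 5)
Your proof is correct and follows essentially the same route as the paper's: both arguments use A0 together with the noise constraint \eqref{eq_noise_constraint} to show that $\pi_{0,m}$ must decay at the superpolynomial rate of the Gaussian densities at the $>n\pimax$ unmatched data points (the paper via a direct contradiction with the limit of the sum in \eqref{eq_npc_small_eigenvalues}, you via the $\pi_0^*>0$ dichotomy and a pigeonhole), and then conclude as in Lemma \ref{lemma_finite_n_eigen_1} that the decay at the unmatched points overwhelms the $O(\lmin^{-p/2})$ blow-up at the at most $G$ matched ones. Your write-up is somewhat more explicit about the subsequence extraction and the final quantitative bound, but it is the same argument.
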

\begin{proof}
Using the definitions of the proof of Lemma \ref{lemma_finite_n_eigen_1}, instead of \eqref{eq_prod_delta_0} now 
\begin{equation}\label{eq_prod_delta_positive}
\mathcal{L}_{n}(\theta_m) =  %
\prod_{x_i\in\underline{x_G}} \Bigl\{ \pi_{0,m}\delta +\sum_{j=1}^{G}   \pi_{j}  \phi(x_{i}; \mu_{j,m}, \Sigma_{j,m}) \Bigr\} %
\prod_{x_i\not\in\underline{x_G}} \Bigl\{\pi_{0,m}\delta + \sum_{j=1}^{G} \pi_{j} \phi(x_{i}; \mu_{j,m}, \Sigma_{j,m})  \Bigr\} %
\end{equation}
has to be considered, so that the limit behaviour of $(\pi_{0,m})_{m \in \N}$
is relevant. 
\eqref{eq_noise_constraint} implies
\begin{equation}\label{eq_npc_small_eigenvalues}
\frac{1}{n} \sum_{x_i\in\underline{x_G}}  \left( 1 + \sum_{j=1}^G \frac{\pi_j \phi(x_i, \mu_{j,m}, \Sigma_{j,m})}{\pi_{0,m} \delta}   \right)^{-1}+ %
\frac{1}{n} \sum_{x_i\not\in\underline{x_G}}  \left( 1 + \sum_{j=1}^G \frac{\pi_j \phi(x_i, \mu_{j,m}, \Sigma_{j,m})}{\pi_{0,m} \delta}  \right)^{-1} %
\leq \pimax.
\end{equation}
Suppose that $(\pi_{0,m})_{m \in \N}$ does
not converge to zero as $O(\phi(x_i, \mu_{j,m}, \Sigma_{j,m}))$ for at least one 
$x_i\not\in\underline{x_G}$.
For  $m \to \infty$,  the left term of \eqref{eq_npc_small_eigenvalues} is $\ge 0$, and the right term (at least a subsequence) converges to $\frac{\#(\underline{x_n}\setminus \underline{x_G})}{n}$, which A0 requires to be
$>\pimax$ with contradiction, thus $\pi_{0,m}=O(\phi(x_i, x_j, \Sigma_{j,m}))$.
Therefore, by the same argument as in the proof of Lemma \ref{lemma_finite_n_eigen_1}, the right product in \eqref{eq_prod_delta_positive} vanishes fast enough so that $l_{n}(\theta_m) \to -\infty$.
\end{proof}

From these Lemmas:
\begin{theorem}[Finite Sample Existence]\label{th:finite_sample_existence}
Assume A0. Then $\theta_n(\delta)$ exists for all $\delta \geq 0$.
\end{theorem}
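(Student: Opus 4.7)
The plan is to show that every maximizing sequence in $\Theta$ admits a subsequence whose limit (suitably patched) lies in $\Theta$ and attains the supremum. First, observe that $\sup_{\theta \in \Theta} l_n(\theta) > -\infty$ by writing down an explicit feasible $\theta$---for instance $\pi_0 = \pimax/2$, equal weights on the $G$ Gaussians, all covariances equal to $I_p$, all means equal to the sample mean---which plainly lies in $\Theta$ and yields a finite $l_n$. Fix any maximizing sequence $(\theta_m)_{m \in \N} \subset \Theta$.

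The second step is to confine the eigenvalues along this sequence. Lemmas~\ref{lemma_finite_n_eigen_1} and \ref{lemma_finite_n_eigen_2} (both applicable under A0) rule out $\lmin(\theta_m) \searrow 0$: that would force $l_n(\theta_m) \to -\infty$ and contradict maximization. An upper bound on $\lmax(\theta_m)$ is equivalent via the eigenratio constraint to an upper bound on $\lmin(\theta_m)$. Suppose for contradiction that $\lmin(\theta_m) \to \infty$ along a subsequence; the pointwise inequality $\phi(x_i; \mu_{j,m}, \Sigma_{j,m}) \leq (2\pi \lmin(\theta_m))^{-p/2}$ from Remark~\ref{remark:phi} drives every Gaussian component to zero uniformly in $i$. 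For $\delta = 0$ this sends $\psi_0 \to 0$ and again $l_n \to -\infty$; for $\delta > 0$ it forces $\tau_0(x_i, \theta_m) \to 1$ for every $i$, contradicting the constraint $n^{-1} \sum_i \tau_0(x_i, \theta_m) \leq \pimax < 1$. Hence $\alpha \leq \lmin(\theta_m) \leq \lmax(\theta_m) \leq \gamma \beta$ for some $0 < \alpha \leq \beta < \infty$.

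With the covariances confined to a compact set of positive definite matrices and the weights lying in $[0,1]^{G+1}$, pass to a subsequence on which $\pi_{j,m} \to \pi_j^*$, $\Sigma_{j,m} \to \Sigma_j^*$, and for each $j$ either $\mu_{j,m} \to \mu_j^*$ or $\|\mu_{j,m}\| \to \infty$; let $J_\infty$ index the divergent means. Define the candidate $\theta^*$ from these limits, setting $\mu_j^* := 0$ for $j \in J_\infty$. Since $\phi(x_i; \mu_{j,m}, \Sigma_{j,m}) \to 0$ for $j \in J_\infty$ while $\phi(x_i; 0, \Sigma_j^*) > 0$, one obtains $\psi_\delta(x_i, \theta^*) \geq \lim_m \psi_\delta(x_i, \theta_m)$ for each $i$, hence $l_n(\theta^*) \geq \lim_m l_n(\theta_m) = \sup l_n$. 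The eigenratio condition on $\theta^*$ passes to the limit by continuity of eigenvalues, and because $\tau_0(x_i, \cdot) = \pi_0 \delta / \psi_\delta(x_i, \cdot)$ is non-increasing in $\psi_\delta$, one has $\tau_0(x_i, \theta^*) \leq \lim_m \tau_0(x_i, \theta_m)$, so the noise constraint is inherited by $\theta^*$. Thus $\theta^* \in \Theta$ and the supremum is attained.

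The anticipated main obstacle is the diverging-mean step: a naive coordinate-wise limit of $(\theta_m)$ is not a valid parameter when some $\mu_{j,m}$ escape to infinity, and the patched $\theta^*$ must be shown both to dominate the sequence in pseudo-likelihood and to remain inside $\Theta$. The key observation that makes the patching compatible with membership in $\Theta$ is that replacing a vanishing Gaussian contribution by a genuine one raises $\psi_\delta$ and therefore lowers $\tau_0$, so neither the noise nor the eigenratio constraint is strained.
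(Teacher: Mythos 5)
Your proposal follows essentially the same route as the paper's proof: exhibit a feasible point with finite $l_n$, invoke Lemmas \ref{lemma_finite_n_eigen_1} and \ref{lemma_finite_n_eigen_2} to exclude vanishing eigenvalues, use the noise constraint to exclude exploding eigenvalues, patch diverging means by relocating them, and conclude by compactness and continuity. Two small slips need repair, though. First, your explicit feasible point with $\pi_0=\pimax/2$ is not ``plainly'' in $\Theta$: for large $\delta$ one has $\tau_0(x_i,\theta)=\pi_0\delta/(\pi_0\delta+(1-\pi_0)\phi(\cdot))$ close to $1$ for every $i$, violating \eqref{eq_noise_constraint}. Take $\pi_0=0$ (as the paper does, with $\pi_1=1$, $\mu_1=x_1$) or note, as the paper's proof opens, that for fixed remaining parameters a sufficiently small $\pi_0$ always satisfies the constraint. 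Second, in the $\lmin(\theta_m)\to\infty$ case with $\delta>0$, the claim that $\tau_0(x_i,\theta_m)\to 1$ is only valid when $\pi_{0,m}\delta$ dominates the vanishing Gaussian part; if $\pi_{0,m}\to 0$ fast enough, $\tau_0$ need not tend to $1$, but then $\psi_\delta(x_i,\theta_m)\to 0$ for all $i$ and $l_n(\theta_m)\to-\infty$, which contradicts maximization anyway. So the dichotomy on the behaviour of $\pi_{0,m}$ must be stated (the paper's Step C does this implicitly via $l_n(\ddot\theta_m)\to\log(\ddot\pi_0\delta)$ and $\ddot\pi_0<1$). With these two patches your argument is complete; your explicit verification that the relocation of divergent means raises $\psi_\delta$ and hence lowers $\tau_0$, so that the patched parameter stays in $\Theta$, is a point the paper's Step D leaves implicit and is worth keeping.
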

\begin{proof}
$\Theta_n$ depends on $\delta$ via \eqref{eq_noise_constraint}. $\Theta_n$ is not empty for any $\delta$, because for any fixed values of the other parameters, small enough $\pi_0$ will fulfil \eqref{eq_noise_constraint}.
Next show that there exists a compact set $K_n \subset \Theta_n$ such that $\sup_{\theta \in K_n} l_n(\theta) = \sup_{\theta \in \Theta_n} l_n(\theta)$. \\
{\it Step A:} consider $\theta$ such that  $\pi_1=1$, $\mu_1=x_1$, $\Sigma_j=I_p$ for all $\js$, arbitrary $\mu_j$ and $\pi_j=0$ for all $j\neq 1$.  For this,
$l_n(\theta) = \sum_{i=1}^n \log \phi(x_i; x_1,\Sigma_1) > -\infty$, thus $\sup_{\theta \in \Theta_n} l_n(\theta)>-\infty$.\\
{\it Step B:} consider a sequence $(\dot \theta_m )_{m\in N}$. It needs to be proved that if $(\dot \theta_m )_{m\in N}$ leaves a suitably chosen compact set $K_n$, it cannot achieve as large values of $l_n$ as one could find within $K_n$. 
Lemma \ref{lemma_finite_n_eigen_2} (Lemma \ref{lemma_finite_n_eigen_1} for $\delta=0$) rules out the possibility of any  $\lambda_{k,j,m} \searrow 0$.\\
{\it Step C:} \eqref{eq:phi_max} implies that $l_n(\theta)$ can be bounded from above in terms of $\pi_0, \lmin$ and $\delta$:
\begin{equation*}
l_n(\theta) =    \frac{1}{n} \sum_{i=1}^n \log \left( \pi_0 \delta + \sum_{j=1}^G \pi_j \phi(x_i; \mu_j, \Sigma_j)  \right) %
\leq   \log \left( \pi_0 \delta + (1-\pi_0)   (2\pi\lambda_{\min}(\theta))^{-\frac{p}{2}} \right). 
\end{equation*}
Consider $\dot \theta  \in \Theta_n$ such that  $\dot{\lambda}_{k,j} < +\infty$ for all $\ks$ and $\js$ (using the obvious notation of components of ``dotted''
parameter vectors). Also consider a sequence 
$(\ddot \theta_m)_{m\in \N}$  such that $\ddot \theta_m \conv \ddot \theta$ where $\ddot \theta$ is equal to $\dot \theta$ except that $\ddot{\lambda}_{k,j,m} \conv +\infty$ for some $k \in\{1,\ldots,p\}$ and $j \in\{1,\ldots,G\}$. 
By \eqref{eq_cov_constraint}, $\lambda_{\min}(\ddot \theta_m) \conv +\infty$ and thus $l_n(\ddot \theta_m) \conv \log(\ddot \pi_0\delta)$. Clearly $\ddot \pi_0<1$ because otherwise  $n^{-1}\sum_{i=1}^n  \tau_0(x_i, \ddot \theta)=1$, violating \eqref{eq_noise_constraint}. Therefore 
 $\lim_{m\conv \infty }l_n(\ddot\theta_m) \leq l_n(\dot \theta)$.\\
{\it Step D:} now consider $||\dot \mu_{j,m}|| \to +\infty$, for $j=1$, w.l.o.g. Choose $\ddot \theta_m$ equal to $\dot \theta_m$ except now $\ddot \mu_{1,m}=0$ for all $m$. Note that $\phi(x_i;\dot \mu_{1,m}, \dot \Sigma_1) \to 0$ for all $\is$, which implies that  $\psi_\delta(x_i,\dot \theta_m) < \psi_\delta(x_i,\ddot \theta_m)$ for large enough $m$, for which then $l_n(\dot \theta_m) < l_n(\ddot \theta_m)$. Applying this argument to all $j$ with  $||\dot \mu_{j,m}|| \to +\infty$ shows that better $l_n$ can be achieved inside a compact set.\\

Continuity of $l_n$ now guarantees existence of $\theta_n(\delta)$. 
\end{proof}

\CHANGE{ We now derive consistency for the sequence  
$\set{\theta_n(\delta)}_{n \in \N}$
as estimator of $L_G$.}
Consistency of the RIMLE can be achieved only if  $L_G$ exists. In order to ease the notation we define $\eta(x, \theta)= \sum_{j=1}^{G} \pi_{j} \phi(x; \mu_j, \Sigma_j).$ Consider the following assumptions on $P$:
\begin{description}
\item[A1] For every $x_1,\ldots,x_G\in \R^p: P\{x_1,\ldots,x_G\}<1-\pi_{max}$.
\item[A2] $L_G>L_{G-1}$, where for $G\in \N$ let $L_{G}=\sup_{\theta\in\Theta_G(P)}L(\theta)$.  
\item[A3] There exist $\epsilon_1,\epsilon_2>0$ so that for every  $\theta$ with $\pi_0<\epsilon_1:$ $L(\theta) \le L_G-\epsilon_2$.
\end{description}
\begin{remark}\label{remark:assumption} 
Assumption A1 requires that no set of $G$ points carries probability $1-\pimax$ or more. Otherwise the log-likelihood can be driven to $\infty$ by fitting $G$ mixture components to $G$ points with all covariance matrix  eigenvalues converging to zero. The improper noise component could take  care of all other points. Note that assumptions A2 and A3 are not both required, but only any single one of them. 
 
\CHANGE{ A2 states that $G$ mixture components fit the data better than $G-1$. If this is not the case, there is at least one redundant component, and one cannot make sure that $L(\theta)$ is bounded away from $L_G$ for large $n$ in some distance from the ``true'' RIMLE-functional as the redundant component can be moved around, see Theorem \ref{th:asy_existence}. In case that A2 is not fulfilled, a weaker result can still be achieved, namely the existence of a not necessarily unique consistent sequence of local maximizers of $l_n$. This requires A3, which states that a noise proportion bounded away from zero is required for maximizing $L$. If} neither A2 nor A3 are fulfilled, $P$ can be fitted perfectly by fewer than $G$ mixture components and no noise. In this case one cannot stop the remaining mixture components from \CHANGE{leaving every compact set, and therefore one cannot expect consistency of all components for any method; as long as there is still noise bounded away from zero, those mixture components still can contribute to fitting what otherwise would be noise, and fits become worse if these degenerate.} 

Note that this is less often the case than one
might expect; for example, a plain Gaussian mixture with $G-1$ components may still fulfill A3: if the density of one of the components is uniformly smaller than $\delta$, a better pseudo-likelihood can obviously be achieved by assigning its proportion to the noise component than by choosing $\pi_0=\pi_G=0$ and otherwise the true parameters. A Gaussian component that ``looks like noise'' rather than like a ``cluster'' will be treated as noise. 
\end{remark}
\begin{lemma}\label{lasexist} %
For any probability measure $P$ on $\R^p$,  $L_G>-\infty$.
\end{lemma}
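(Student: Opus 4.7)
The plan is to exhibit a single parameter $\theta_0 \in \Theta_G$ at which $L(\theta_0) > -\infty$; since $L_G = \sup_{\theta \in \Theta_G} L(\theta) \geq L(\theta_0)$, this is enough. The key observation is that when $\delta > 0$ and $\theta_0$ is chosen with $\pi_0 > 0$, the pseudo-density satisfies the deterministic lower bound $\psi_\delta(x, \theta_0) \geq \pi_0 \delta$ uniformly in $x \in \R^p$. Consequently $\log \psi_\delta(x, \theta_0) \geq \log(\pi_0 \delta) > -\infty$ pointwise, and integrating against $P$ yields $L(\theta_0) \geq \log(\pi_0 \delta) > -\infty$ irrespective of how heavy-tailed $P$ may be. So the proof reduces to producing such a $\theta_0$ that actually lies in $\Theta_G$.

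For the construction I would take all Gaussian components identical, with $\mu_j = 0$ and $\Sigma_j = R^2 I$ for $\js$, with $\pi_j = (1 - \pi_0)/G$ and $\pi_0 \in (0, \pimax)$ to be chosen. The eigenratio constraint is then trivially satisfied ($\lmax(\theta_0)/\lmin(\theta_0) = 1 \leq \gamma$) and the mixing weights are valid. The only non-trivial requirement in \eqref{eq_Theta_G} is the noise constraint $\E_P [\pi_0 \delta / \psi_\delta(x, \theta_0)] \leq \pimax$. To control it, I would use $P(\|X\| \leq R) \to 1$ as $R \to \infty$ to fix $R$ with $P(\|X\| > R) \leq \pimax/2$. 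On $\{\|x\| \leq R\}$ the pointwise bound $\phi(x; 0, R^2 I) \geq c := (2 \pi R^2)^{-p/2} e^{-1/2} > 0$ holds, so $\pi_0 \delta / \psi_\delta(x, \theta_0) \leq \pi_0 \delta / (\pi_0 \delta + (1-\pi_0) c)$ there, and this right-hand side can be made $\leq \pimax/2$ by taking $\pi_0 > 0$ small enough (explicitly, $\pi_0/(1-\pi_0) \leq \pimax c / [\delta(2 - \pimax)]$). Splitting $\E_P[\pi_0\delta/\psi_\delta(x,\theta_0)]$ across $\{\|x\| \leq R\}$ and its complement, where the integrand is trivially $\leq 1$, yields a total bound of $\pimax/2 + \pimax/2 = \pimax$, so $\theta_0 \in \Theta_G$ as required.

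The main obstacle is precisely this verification: the two pieces of the construction pull in opposite directions. One needs $\pi_0 > 0$ to activate the uniform lower bound on $\psi_\delta$, but a large $\pi_0$ threatens the noise constraint, and if the Gaussian mass is concentrated away from where $P$ sits, then $\psi_\delta \approx \pi_0 \delta$ over most of the space and $\E_P[\pi_0\delta/\psi_\delta]$ is driven toward $1$. The resolution is to first spread the Gaussian mass over a ball capturing almost all of $P$ and only then pick $\pi_0$ quantitatively small in terms of $c$, $\delta$, and $\pimax$. The argument uses $\delta > 0$ essentially; in the degenerate case $\delta = 0$ the lower bound $\log(\pi_0 \delta)$ collapses to $-\infty$, and an additional moment or tail assumption on $P$ (such as $\E_P\|X\|^2 < \infty$) would be required.
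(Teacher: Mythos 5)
Your proof is correct and follows essentially the same route as the paper: exhibit one $\theta_0\in\Theta_G$ by concentrating a Gaussian component (there, conditional mean and covariance on a compact set of large $P$-mass; here, $N(0,R^2I)$ with $R$ large) so that its density is bounded below on that set, then take $\pi_0>0$ small enough that the noise constraint holds, and conclude via $\psi_\delta\ge\pi_0\delta$. The only differences are cosmetic (equal weights on identical components versus one active component, and splitting the constraint budget as $\pimax/2+\pimax/2$ versus $q$ and $\pimax-q$), and your closing caveat about $\delta>0$ matches the implicit assumption in the paper's own argument.
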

\begin{proof}
Choose compact $K\subset \R^p$ with $P(K)>0$. Let $q=1-P(K)$ and choose $K$ big enough that $\pimax>q$.  Choose $\mu_1=\E_P(x|x\in K),\ \Sigma_1=\Cov_P(x|x\in K),\ \pi_2=\ldots=\pi_G=0$. If all eigenvalues of $\Sigma_1$ are zero, choose $\Sigma_1=I_p$. Let $\lambda_{\max,1}$ be the largest eigenvalue of $\Sigma_1$. If  $\lambda_{\max,1}/\lambda_{i,1} > \gamma$ for any eigenvalue $\lambda_{i,1}$ of  $\Sigma_1$, modify  $\Sigma_1$ by replacing all eigenvalues smaller than $\gamma\lambda_{\max,1}$ by  $\gamma\lambda_{\max,1}$ in its spectral decomposition. Let $\phi_{\min}=\min_{x\in K} \phi(x,\mu_1,\Sigma_1)>0$. Choose 
\begin{displaymath}
\pi_0=\frac{(\pimax-q)\phi_{\min}}{2((1-\pimax)\delta+(\pimax-q)\phi_{\min})} > 0,\ \pi_1=1-\pi_0.
\end{displaymath}
Observe that the resulting $\theta\in \Theta_G(P)$ (with all other parameters  chosen arbitrarily) by 
\begin{align*}
\E_P\frac{\pi_0\delta}{\psi_{\delta}(x,\theta)} %
   & =  \int_K  \frac{\pi_0\delta} {\pi_0\delta+\eta(x,\theta)} dP(x) + \int_{K^c}\frac{\pi_0\delta}{\pi_0\delta+\eta(x,\theta)} dP(x) \leq \\
   &\leq (1-q)\frac{\pi_0\delta}{\pi_0\delta+(1-\pi_0)\phi_{\min}}+q. 
\end{align*}
This is smaller than $\pimax$ if  $\pi_0<\frac{(\pimax-q)\phi_{\min}}{(1-\pimax)c+(\pimax-q)\phi_{\min}}$. Furthermore,  $L(\theta)\ge (1-q)\log(\pi_0\delta+(1-\pi_0)\phi_{\min})+q\log(\pi_0\delta)>-\infty$.
\end{proof}
\begin{lemma}\label{laslambda} Assume A1. There are $\lambda^*_{\min}>0,\ \lambda^*_{\max}<\infty,\ \epsilon>0$,  so that  
\begin{description}
\item[(a)]
$L(\theta)\le L_G-\epsilon$ for every $\theta$ with  $\lmin(\theta)<\lmin^*$ or $\lmax(\theta)>\lmax^*$,
\item[(b)] for $x_1,x_2,\ldots$ i.i.d. with ${\cal L}(x_1)=P$, 
for sequences $(\theta_n)_{n\in\N}$ with
$\lmin(\theta_n)<\lmin^*$ or $\lmax(\theta_n)>\lmax^*$ for large enough $n$: 
$l_n(\theta_n)\le l_n(\theta_n(\delta))-\epsilon$ a.s.
\end{description}
\end{lemma}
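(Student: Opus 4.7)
My plan is to argue both claims by contradiction, relying on the coupling $\lmin(\theta) \le \lmax(\theta) \le \gamma \lmin(\theta)$ from the eigenratio constraint and on the pointwise Gaussian bound of Remark \ref{remark:phi}, namely $\phi(x;\mu,\Sigma) \le (2\pi\lmin)^{-p/2}\exp(-\|x-\mu\|^2/(2\lmax))$. A pervasive device will be splitting $\R^p$ into a small $\rho$-neighbourhood of the centre set $A(\theta) = \{\mu_1,\ldots,\mu_G\}$ and its complement.

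For part (a), suppose some $\theta_m \in \Theta_G$ has $\lmin(\theta_m)\to 0$ or $\lmax(\theta_m)\to \infty$ along a subsequence while $L(\theta_m)\to L_G$. The eigenratio bound forces the two extremes to be exclusive, yielding two sub-cases. If $\lmax\to\infty$, then $\lmin\to\infty$ too and each $\phi$ is uniformly $O(\lmax^{-p/2})$, so $\psi_\delta \to \pi_0\delta$ uniformly in $x$. If $\pi_0(\theta_m)$ stays bounded below along a subsequence, the noise-constraint expectation $\E_P[\pi_0\delta/\psi_\delta]$ is pushed to $1>\pimax$, contradicting $\theta_m\in\Theta_G$; if $\pi_0(\theta_m)\to 0$, then Jensen gives $L(\theta_m)\le \log\E_P\psi_\delta\to-\infty$, contradicting $L_G>-\infty$ from Lemma \ref{lasexist}.

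If instead $\lmin\to 0$ (hence $\lmax\to 0$), set $A_m=\{\mu_1(\theta_m),\ldots,\mu_G(\theta_m)\}$ and $B_m(\rho)=\{x:\,d(x,A_m)\le\rho\}$. A tightness/inner-regularity argument shows $\sup_{|Y|\le G} P(B(Y;\rho))\downarrow \sup_{|Y|\le G} P(Y)$ as $\rho\downarrow 0$, and A1 makes the limit strictly smaller than $1-\pimax$, so one fixes $\rho>0$ with $P(B_m(\rho)^c)\ge \pimax+\beta$ uniformly in $m$ for some $\beta>0$. Off $B_m(\rho)$, the exponential factor drives $\eta$ super-polynomially to $0$, so $\psi_\delta\to \pi_0\delta$ there. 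If $\pi_0$ is bounded below, the noise constraint is again overpowered by the $B_m(\rho)^c$ mass; if $\pi_0\to 0$, the $B_m(\rho)^c$-contribution to $L$ is at most $(\pimax+\beta)\log(\pi_0\delta+o(1))$ while the $B_m(\rho)$-contribution is at most $(1-\pimax-\beta)\frac{p}{2}|\log(2\pi\lmin)|+O(1)$. The strict gap $\beta$ supplied by A1 is precisely what forces the negative term to beat the positive one for every admissible joint rate of $(\pi_0,\lmin)$, giving $L(\theta_m)\to-\infty$.

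For part (b), I take the same $\epsilon,\lmin^*,\lmax^*$ as in (a) (shrinking $\epsilon$ if needed) and fix a bounded $\theta^\ast\in\Theta_G$ with $L(\theta^\ast)\ge L_G-\epsilon/4$. The SLLN applied separately to $l_n(\theta^\ast)$ and to the noise-ratio at $\theta^\ast$ gives $\theta^\ast$ eventual feasibility in the sample parameter space $\Theta$ and $l_n(\theta^\ast)\to L(\theta^\ast)$ a.s., so $l_n(\theta_n(\delta))\ge L_G-\epsilon/3$ eventually. For $\theta_n$ with bad eigenvalues I rerun the argument of (a) with $P_n$ in place of $P$; the only step requiring uniformity is the bound on $P_n(B_m(\rho))$ for the shifting $A_m$, which follows from the fact that unions of at most $G$ Euclidean balls form a Vapnik--Chervonenkis class, so $\sup_{|Y|\le G}|P_n(B(Y;\rho))-P(B(Y;\rho))|\to 0$ a.s. This transfers the strict gap $\beta$ of part (a) to the sample, yielding $l_n(\theta_n)\le L_G-2\epsilon/3\le l_n(\theta_n(\delta))-\epsilon/3$ a.s. The hard part will be the borderline regime in (a) with $\lmin\to 0$ and $\pi_0\to 0$ simultaneously, where near-centre and off-centre contributions are of comparable magnitude; making the negative one dominate uniformly over the joint rate relies crucially on the strict inequality in A1. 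Part (b) then reduces to transferring that gap uniformly, for which the VC property of balls suffices.
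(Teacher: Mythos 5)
Your overall route coincides with the paper's: the $\lmax\to\infty$ case is handled identically (the eigenratio constraint pushes all component densities uniformly to zero, so either the noise constraint is violated or $\pi_0\to 0$ and $L\to-\infty$), the $\lmin\to 0$ case uses the same split of $\R^p$ into a $\rho$-neighbourhood of the centres and its complement with probability at least $\pimax+\beta$ from A1, and part (b) is the same empirical transfer (your VC-class remark is in fact a welcome sharpening of the paper's bare appeal to Glivenko--Cantelli for the $\theta_n$-dependent sets). The gap is exactly in the step you yourself flag as ``the hard part'': the regime $\lmin(\theta_m)\to 0$ with $\pi_{0,m}\to 0$. Your dichotomy ``$\pi_0$ bounded below (constraint violated) versus $\pi_0\to 0$'' is too coarse, and the claim that the strict gap $\beta$ alone makes $(\pimax+\beta)\log(\pi_{0,m}\delta+o(1))$ dominate $(1-\pimax-\beta)\frac{p}{2}\log(1/\lmin)$ ``for every admissible joint rate'' is false as stated: if, say, $\pi_{0,m}=\lmin(\theta_m)$, both contributions are of order $\log(1/\lmin)$ and the comparison of their coefficients, $\pimax+\beta$ versus $(1-\pimax-\beta)p/2$, can go either way depending on $p$ and $\pimax$.

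What is missing is the quantitative consequence of the noise constraint that the paper extracts. On the complement $A_{m,\epsilon}$ of the $\rho$-neighbourhood one has $\eta(x,\theta_m)\le(2\pi\lmin(\theta_m))^{-p/2}\exp(-\rho^2/(2\gamma\lmin(\theta_m)))$, and feeding this into $\E_P[\pi_{0,m}\delta/\psi_\delta(x,\theta_m)]\le\pimax$ together with $P(A_{m,\epsilon})\ge\pimax+\beta$ yields
\begin{equation*}
\pi_{0,m}\;\le\;\frac{\pimax}{\delta\beta}\,\max_{x\in A_{m,\epsilon}}\eta(x,\theta_m)\;=\;O\!\left(\lmin(\theta_m)^{-p/2}e^{-\rho^2/(2\gamma\lmin(\theta_m))}\right),
\end{equation*}
i.e.\ $\pi_{0,m}$ is forced to vanish exponentially fast in $1/\lmin(\theta_m)$, which rules out the slow rates above. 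With this bound, $\log\psi_\delta(x,\theta_m)\le -c/\lmin(\theta_m)+O(\log(1/\lmin(\theta_m)))$ on $A_{m,\epsilon}$, and the $-c/\lmin$ term swamps the $+\frac{p}{2}\log(1/\lmin)$ contribution from the neighbourhood of the centres irrespective of $p$, $\pimax$ or $\beta$. So A1 is indeed crucial, but through the denominator $\beta=P(A_{m,\epsilon})-\pimax$ in this upper bound on $\pi_{0,m}$, not through a direct weighting of the two logarithmic terms. Inserting this estimate closes part (a), and your part (b) then goes through as written.
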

\begin{proof}
Start with part (a). First consider a sequence $\set{\theta_m}_{m\in\N}\in \Theta_G(P)^\N$ with  $\lambda_{\max}(\theta_m)\to\infty$. The eigenvalue ratio constraint forces all  covariance matrix eigenvalues to infinity, and therefore  $\sup_x \phi(x,\mu_{j,m},\Sigma_{j,m})\searrow 0$. But this means that  $\E_P\frac{\pi_0\delta}{\psi_{\delta}(x,\theta)}\to 1 >\pimax$ and $\theta_m\not\in\Theta_G(P)$ eventually, unless $\pi_{0,m}\searrow 0$, too. If the latter is the case, $\psi_{\delta}(x,\theta)\searrow 0$ uniformly over all $x$ and $L(\theta_m)\searrow -\infty$, which together with Lemma \ref{lasexist} makes it impossible that $L(\theta_m)$ is close to $L_G$ for $m$ large  enough and $\lambda_{\max}(\theta_m)$ too large, proving the existence of  the upper bound $\lambda^*_{\max}<\infty$ as required.\\
Now consider a sequence $\set{\theta_m}_{m\in\N}\in \Theta_G(P)^\N$ with  $\lambda_{\min}(\theta_m)\to 0$.  Define 
$$A_{m,\epsilon} = \set{ x: \; \min_{\js} \|x-\mu_{j,m}\| > \epsilon }.$$ 
A1 ensures that for $0<\epsilon_3$ there exists $\epsilon>0$ so that for all $m\in \N:$ $P(A_{m,\epsilon})\ge \pimax+\epsilon_3$. Based on \eqref{eq:phi_max} derive an upper bound for $\pi_{0,m}$ from the constraint $\int \frac{\pi_{0,m}\delta}{\pi_{0,m}\delta+\eta(x,\theta_m)}dP(x) \le \pimax$:
\begin{align*}
\int \frac{\pi_{0,m}\delta}{\pi_{0,m}\delta+\eta(x,\theta_m)}dP(x) 
    &=  \int_{A_{m,\epsilon}} \frac{\pi_{0,m}\delta}{\pi_{0,m}\delta+\eta(x,\theta_m)}dP(x)+ 
         \int_{A_{m,\epsilon}^c} \frac{\pi_{0,m}\delta}{\pi_{0,m}\delta+\eta(x,\theta_m)}dP(x) \geq \\
    &\geq  P(A_{m,\epsilon}) \frac{\pi_{0,m}\delta}{\pi_{0,m}\delta + \max_{x\in A_{m,\epsilon}}\eta(x,\theta_m)}, 
\end{align*}
which by \eqref{eq:phi_max}  implies
$$
\pi_{0,m} \le  \frac{\pimax \max_{x\in A_{m,\epsilon}}\eta(x,\theta_m)}{\delta(P(A_{m,\epsilon})-\pimax)} \le \frac{\pimax\exp(-\frac{\epsilon^2}{2\gamma\lmin(\theta_m)})} {\delta \epsilon_3 (2\pi)^{p/2}\lmin(\theta_m)^{p/2}}. 
$$
For the log-likelihood,
\begin{align*}
L(\theta_m) 
   & =    \int_{A_{m,\epsilon}} \log(\pi_{0,m}\delta+\eta(x,\theta_m)) dP(x) + \int_{A_{m,\epsilon}^c} \log(\pi_{0,m}\delta+\eta(x,\theta_m)) dP(x) \leq   \\
   &\leq   \int_{A_{m,\epsilon}}\log\left(\frac{\delta\pimax\exp(-\frac{\epsilon^2}{2\gamma\lmin(\theta_m)})} {\delta\epsilon_3
           (2\pi)^{p/2}\lmin(\theta_m)^{p/2}}+\frac{\exp(-\frac{\epsilon^2}{2\gamma\lmin(\theta_m)})} 
           {(2\pi)^{p/2}\lmin(\theta_m)^{p/2}}\right)dP(x) + \\
   &{ }   {~~~} + \int_{A_{m,\epsilon}^c}\log\left(\frac{\delta\pimax\exp(-\frac{\epsilon^2}{2\gamma\lmin(\theta_m)})} 
             {\delta\epsilon_3 (2\pi)^{p/2}\lmin(\theta_m)^{p/2}}+\frac{1} {(2\pi)^{p/2}\lmin(\theta_m)^{p/2}}\right)dP(x) \leq \\
   &\leq   P(A_{m,\epsilon})\left(-\frac{c_1}{\lmin(\theta_m)}- c_2\log(\lmin(\theta_m))+c_3\right) +\\
   &{}     {~~~} + P(A_{m,\epsilon}^c)\left(o(1)- c_4\log(\lmin(\theta_m))+c_5\right) = \\
   & =    -\frac{c_6}{\lmin(\theta_m)}-c_7\log(\lmin(\theta_m))+c_8
\end{align*} 
for positive constants $c_1,c_2,c_4,c_6,c_7$ and constants $c_3, c_5, c_8$,  all  independent of $\theta_m$. If $\lmin(\theta_m)\searrow 0$, this implies
$L(\theta_m)\searrow-\infty$, proving together with Lemma \ref{lasexist} the existence of  the lower bound $\lambda^*_{\min}>0$.

Part (b) holds because if $(\theta_m)_{m\in\N}$ is chosen as above for 
$m=n\to\infty$ and $P$ is replaced by the empirical distribution $P_n$, Glivenko-Cantelli enforces $P_n(A_{n,\epsilon})-P(A_{n,\epsilon})\to 0$ a.s. \CHANGE{ 
Glivenko-Cantelli applies 
because the class of all $A_{m,\epsilon}$ is a subset of the class of 
intersections of the complements of all closed balls, and therefore a
Vapnik-Chervonenkis class, see \cite{Van_der_Vaart_Wellner_1996}.}
The argument carries over using all other integrals in the finite sample-form, i.e., w.r.t. $P_n$. Lemma \ref{lasexist} carries over because $l_n(\theta)\to L(\theta)$ a.s. by the strong law of large numbers for $\theta$ with $L(\theta)>-\infty$.
\end{proof}
\begin{remark}\label{rasphimax} 
Lemma \ref{laslambda} (a) and \eqref{eq:phi_max} imply that for all $\theta$ with  $L(\theta)>L_G-\epsilon$, $\js$ and all $x$: 
\begin{equation*}
\phi(x,\mu_j,\Sigma_j) \le \phi_{\max}= \frac{1}{(2\pi)^{p/2}(\lmin^*)^{p/2}}.
\end{equation*}
This implies $L_G<\infty$.

The same holds because of Lemma \ref{laslambda} (b) 
for $x_1,x_2,\ldots$ i.i.d. with ${\cal L}(x_1)=P$ for large 
enough $n$ a.s. for all $\theta$ with 
$l_n(\theta)>l_n(\theta_n(\delta))-\epsilon$.
\end{remark}
\begin{lemma}\label{lasmua3}  
Assume A1 and A2.  There is a compact set $K\subset\R^p$  so that 
\begin{description}
\item[(a)] $L$ reaches its supremum for $\mu_1,\ldots,\mu_G\in K$ and is bounded away from the supremum 
if not all of $\mu_1,\ldots,\mu_G\in K$ 
\CHANGE{ (i.e., $\exists \epsilon_4>0$ so that 
it is $\le \sup L-\epsilon_4$)},
\item[(b)] for $x_1,x_2,\ldots$ i.i.d. with ${\cal L}(x_1)=P$ for large enough $n$, $l_n$ reaches its supremum for $\mu_1,\ldots,\mu_G\in K$ and is 
bounded away from the supremum if 
not all of $\mu_1,\ldots,\mu_G\in K$, a.s.
\end{description}
\end{lemma}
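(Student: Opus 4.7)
\emph{Proof plan.} The plan is to leverage Lemma~\ref{laslambda} to restrict to bounded eigenvalues, then argue by contradiction: if some $\mu_{j,m}$ escapes to infinity along a sequence approaching the supremum of $L$, the limiting pseudo-density is supported on strictly fewer than $G$ Gaussian components, contradicting A2 via $L_G>L_{G-1}$. For part (a), fix $\epsilon,\lmin^*,\lmax^*$ from Lemma~\ref{laslambda}(a); outside the slab $\set{\theta:\lmin(\theta)\ge\lmin^*,\;\lmax(\theta)\le\lmax^*}$, $L$ is already bounded away from $L_G$, so we restrict to this slab. Suppose no compact $K$ suffices; then there is a sequence $(\theta_m)$ in the slab with $L(\theta_m)\to L_G$ and $\max_j\norm{\mu_{j,m}}\to\infty$. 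By compactness of $[0,1]$ (proportions) and of the set of positive definite symmetric matrices with eigenvalues in $[\lmin^*,\lmax^*]$, relabel and pass to a subsequence on which $\pi_{j,m}\to\pi_j^*$, $\Sigma_{j,m}\to\Sigma_j^*$, and for each $\js$ either $\mu_{j,m}\to\mu_j^*\in\R^p$ or $\norm{\mu_{j,m}}\to\infty$; let $J\subsetneq\set{1,\ldots,G}$ index the convergent means. Set $\psi^*(x)=\pi_0^*\delta+\sum_{j\in J}\pi_j^*\phi(x;\mu_j^*,\Sigma_j^*)$, which is the pointwise limit of $\psi_\delta(x;\theta_m)$ since \eqref{eq:phi_max} with $\lmax\le\lmax^*$ drives $\phi(x;\mu_{j,m},\Sigma_{j,m})\to 0$ for $j\notin J$. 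Using the uniform bound $\psi_\delta(x;\theta_m)\le\delta+G\phi_{\max}$ from Remark~\ref{rasphimax}, Fatou's lemma applied to $\log(\delta+G\phi_{\max})-\log\psi_\delta(\cdot;\theta_m)$ gives $\limsup_m L(\theta_m)\le\E_P\log\psi^*(x)$. Set $q=\pi_0^*+\sum_{j\in J}\pi_j^*\le 1$ and rescale $\bar\pi_j=\pi_j^*/q$ to obtain $\bar\theta\in\Theta_{|J|}$ with $\psi_\delta(x;\bar\theta)=\psi^*(x)/q$, whence $\E_P\log\psi^*=L(\bar\theta)+\log q\le L(\bar\theta)$; the noise constraint passes to $\bar\theta$ by bounded convergence because $\pi_{0,m}\delta/\psi_\delta(x;\theta_m)\le 1$. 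Since $L_k$ is non-decreasing in $k$ (adding a zero-weight component is admissible), $L(\bar\theta)\le L_{|J|}\le L_{G-1}$, contradicting $L(\theta_m)\to L_G>L_{G-1}$.

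For part (b) I would mirror this at the empirical level. For any fixed compact $K'\subset\R^p$ the class $\set{\log\psi_\delta(\cdot;\theta):\lmin^*\le\lmin(\theta),\;\lmax(\theta)\le\lmax^*,\;\mu_j\in K'}$ is indexed by a compact parameter space and is equicontinuous and uniformly bounded above, so a Glivenko--Cantelli-type argument gives $\sup_\theta\abs{l_n(\theta)-L(\theta)}\to 0$ a.s. Combined with Lemma~\ref{laslambda}(b) and the compactification step of (a) (replacing escaping $\mu_{j,m}$ by a limiting $|J|$-component pseudo-density), the same contradiction with A2 arises a.s.\ for large $n$; the inequality $l_n(\theta_n(\delta))\ge L_G-o(1)$ a.s., needed to make the contradiction bite, follows by applying the pointwise SLLN to some fixed $\tilde\theta\in\Theta_G$ with $L(\tilde\theta)$ close to $L_G$.

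The main obstacle is the passage to the limit in the rescaled parameter $\bar\theta$: verifying the noise constraint (handled by bounded convergence, since the posterior ratio is bounded by $1$) and dispatching the edge cases $|J|=0$ (where $\psi^*=\pi_0^*\delta\le\delta$ forces $\E_P\log\psi^*\le\log\delta<L_G$ directly from A2) and $q\to 0$. For part (b) the additional wrinkle is that $\mu_j$ ranges over the non-compact $\R^p$, so uniform convergence of $l_n$ to $L$ cannot be invoked on all of $\Theta$ at once; the compactification trick of (a) is needed first to reduce to a compact $\mu$-range before the uniform SLLN applies.
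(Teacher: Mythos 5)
Your core argument for the ``bounded away'' half of part (a) is sound and runs essentially parallel to the paper's: control the eigenvalues via Lemma~\ref{laslambda}, pass to a subsequence along which each mean either converges or escapes, and show that the limit is effectively a pseudo-density with at most $G-1$ Gaussian components, so that $\limsup_m L(\theta_m)\le L_{G-1}<L_G$ by A2. The paper reaches the same conclusion by constructing explicit comparison parameters $\theta_{(G-k),m}$ together with domination sets $A_m$ on which the escaping components are negligible, rather than by reverse Fatou at a subsequential limit; both devices work, and your rescaling $\bar\pi_j=\pi_j^*/q$ with the bounded-convergence check of the noise constraint is a legitimate substitute for the paper's reweighting of $\theta_{(G-k),m}$. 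Your sketch of part (b) (uniform SLLN over the compact slab, plus a pointwise SLLN at a near-optimal fixed $\tilde\theta$ to get $\sup_\theta l_n(\theta)>L_{G-1}$ eventually a.s.) also matches the paper's use of Glivenko--Cantelli and Jennrich's theorem.

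Two genuine gaps remain. First, the lemma asserts that $L$ \emph{attains} its supremum with $\mu_1,\ldots,\mu_G\in K$, and you never prove attainment: you must take a maximizing sequence with all means in $K$ and eigenvalues in $[\lmin^*,\lmax^*]$, extract a convergent subsequence $\theta_m\to\theta_+$, and apply the same reverse-Fatou/upper-semicontinuity bound to conclude $L_G=\lim_m L(\theta_m)\le L(\theta_+)\le L_G$; this is exactly the paper's closing step, and your machinery covers it, but it is absent from the write-up. Second, your dispatch of the edge case $J=\emptyset$ is wrong as stated: $\log\delta<L_G$ does not follow from A2, which only gives $L_G>L_{G-1}$, and nothing prevents $\delta$ from being large relative to $L_G$. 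The case is excluded instead by the constraints you already invoke: if $\pi_0^*>0$, your bounded-convergence step yields $\E_P\bigl[\pi_0^*\delta/\psi^*(x)\bigr]=1\le\pimax$, a contradiction with $\pimax<1$; if $\pi_0^*=0$, then $\psi^*\equiv 0$ and $\limsup_m L(\theta_m)=-\infty<L_G$ by Lemma~\ref{lasexist}. With these two repairs the proof is complete.
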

\begin{proof}
Start with part (a). Consider a sequence $\set{\theta_m}_{m\in\N}\in \Theta_G(P)^\N$ with  $\norm{\mu_{jm}}\to\infty$ for $j=1,\ldots,k,\ 1\le k<G$ and a compact set $K$ with $\mu_{jm}\in K$ for $j>k$. Let 
\begin{equation*}
A_m=\set{x: \ \forall j\in\set{1,\ldots,k}: \phi(x,\mu_{j,m},\Sigma_{j,m}) \le  \epsilon_m\sum_{l=k+1}^G\pi_{l,m}\phi(x,\mu_{l,m},\Sigma_{l,m})}, 
\end{equation*}
where  $\epsilon_m\searrow 0$ slowly enough that $P(A_m)\to 1$. Let $\pi^*_m=\sum_{j=1}^k\pi_{j,m}$. Let $\theta_{(G-k), m}$ for $m\in\N$ be defined by $\pi_{0,(G-k),m}= \pi_{0,m},\ \pi_{(j-k),(G-k),m}=\pi_{j,m}(1-\pi_{0,m})(1-\pi^*_m-\pi_{0,m})^{-1}$  for $j=k+1,\ldots,G$ accompanied by  the $\mu_j,\Sigma_j$-parameters belonging to the components  $k+1,\ldots,G$ of $\theta_m$.   Observe, using Remark \ref{rasphimax},
\begin{align*}
L(\theta_m)  %
   & =  \int_{A_m}\log(\psi_{\delta}(x,\theta_{m}))dP(x)+ \int_{A_m^c}\log(\psi_{\delta}(x,\theta_{m}))dP(x) \leq \\
   &\leq \int_{A_m}\log ((1+\epsilon_m)\psi_{\delta}(x,\theta_{(G-k)m})) dP(x) +P(A_m^c)\log (\delta+\phi_{max})
\end{align*}
implying $L(\theta_{(G-k)m})-L(\theta_m) \to 0$. $\E_P [\pi_{0,(G-k),m}\delta (\psi_{\delta}(x,\theta_{(G-k)m}))^{-1}] \le\pimax$ will be fulfilled for $m$ large enough because it is fulfilled for $\theta_m$  by definition and $\psi_{\delta}(x,\theta_{(G-k),m})>\psi_{\delta}(x,\theta_{m})$ on $A_m$ with $P(A_m)\to 1$. $L(\theta_{(G-k)m})<L_{G-1}$ implies that, because of A2, $\theta_m$ is bounded away from 
$L_G$. 

Regarding existence of a maximum with $\mu_1,\ldots,\mu_G\in K$, observe that with Remark \ref{rasphimax}, 
$\psi_\delta(x,\theta)$ can be bounded by $\delta+\phi_{max}$ for 
all $\theta$ for which $L(\theta)>L_G-\epsilon$. Now consider a sequence 
$(\theta_m)_{m\in\N}$ so that $\forall m:\ \mu_{1m},\ldots,\mu_{Gm}\in K$,
with the notation of Lemma \ref{laslambda},
$\lambda^*_{min}\le \lambda_{min}(\theta_m)\le \lambda_{max}(\theta_m)\le \lambda^*_{max}$ and $L(\theta_m)\to L_G$. Because of compactness, w.l.o.g., $\theta_m\to\theta_+$ and, using Fatou's Lemma, $L_G=\lim_{m\to\infty} L(\theta_m)
\le E_P\limsup_m\psi_\delta(x,\theta_m)=L(\theta_+)\le L_G$.

Part (b) holds because if $(\theta_m)_{m\in\N}$ is chosen as above for 
$m=n\to\infty$ and $P$ is replaced by the empirical distribution $P_n$, Glivenko-Cantelli enforces $P_n(A_{n})-P(A_{n})\to 0$ a.s. \CHANGE{ 
Glivenko-Cantelli applies here because a sequence of closed balls 
$(B_n)_{n\in \N}$ can be constructed so that $B_n\subseteq A_n,\ P(B_n)\to 1$ 
a.s.;
the closed balls are a Vapnik-Chervonenkis class, and $P_n(A_{n})\ge P_n(B_n)\to
1$ a.s.}
Furthermore, for $\theta\in \Theta_G(P)$ with $L(\theta)>L_{G-1}$: $l_n(\theta)\to L(\theta)$ a.s. by the strong law of large numbers, so that for large enough $n$: $\sup_{\theta\in\Theta_G(P)}l_n(\theta)>L_{G-1}$ a.s. On the other hand,  
$\theta_{(G-k),n}$ can be chosen optimally in a compact set $K$ because of Lemma
\ref{laslambda}, within which $l_n$ converges uniformly to $L$ a.s.
(Theorem 2 in \cite{Jennrich_1969}), and 
therefore,  
$\limsup_{n\to\infty} l_n(\theta_{(G-k),n})\le L_{G-1}$. With these ingredients, the argument of part (a) carries over.
\end{proof}
\begin{lemma}\label{lasmu}  
Assume A1 and A3. There is a compact set $K\subset\R^p$ so that 
\begin{description}
\item[(a)] $L$ reaches its supremum for $\mu_1,\ldots,\mu_G\in K$,
\item[(b)] for $x_1,x_2,\ldots$ i.i.d. with ${\cal L}(x_1)=P$, there exists a
sequence $(\tilde \theta_n)_{n\in\N}$ maximizing $l_n$ locally for $\mu_1,\ldots,\mu_G\in K$ so that $l_n(\tilde \theta_n)\to L_G$ a.s., and a.s. there is no sequence $\theta_n\in \Theta_G(P)$ so that 
$\limsup_{n\to \infty} l_n(\theta_n)> L_G$.
\end{description}
\end{lemma}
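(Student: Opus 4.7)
My plan is to attack both parts through a common ``collapsing'' device: replace any Gaussian component whose mean escapes to infinity by a zero-weight component, transferring the lost mass to a component whose mean stays bounded. This forces the supremum of $L$ (and, asymptotically, of $l_n$) to be attained with means in a bounded set.

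For (a), I would take $(\theta_m)_{m\in\N} \subset \Theta_G$ with $L(\theta_m) \to L_G$. Lemma \ref{laslambda}(a) and A3 let me assume $\lmin^* \le \lmin(\theta_m)$, $\lmax(\theta_m) \le \lmax^*$, and $\pi_{0,m} \ge \epsilon_3$. Pass to a subsequence along which every $\pi_{j,m}$ and $\Sigma_{j,m}$ converges and each $\mu_{j,m}$ converges either to some $\mu_j^* \in \R^p$ (indices $S$) or to infinity (indices $E = \{1,\ldots,G\}\setminus S$). First, $S \neq \emptyset$: otherwise $\psi_\delta(x, \theta_m) \to \pi_0^* \delta$ pointwise and dominated convergence forces $\E_P \tau_0(x, \theta_m) \to 1 > \pimax$, contradicting $\theta_m \in \Theta_G$. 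Next I claim $\pi_j^* = 0$ for every $j \in E$: if not, pick any $k \in S$ and define $\theta_m'$ by $\pi_{j,m}' := 0$, $\pi_{k,m}' := \pi_{k,m} + \pi_{j,m}$, all other parameters unchanged. Then $\theta_m' \in \Theta_G$ (the eigenvalue ratio is preserved and the noise constraint is relaxed because $\pi_0$ is fixed while $\psi_\delta$ grows pointwise), and the log-ratio $\log[\psi_\delta(x, \theta_m')/\psi_\delta(x, \theta_m)]$ converges pointwise to a strictly positive limit, so by bounded dominated convergence $\lim L(\theta_m') > L_G$, contradicting $L(\theta_m') \le L_G$. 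With $\pi_j^* = 0$ for $j \in E$, the limit $\theta^*$ (with each escaping $\mu_j^*$ reassigned to any fixed point) satisfies $L(\theta^*) = L_G$; take $K$ to be any closed ball containing $\{\mu_j^* : j \in S\}$ and that fixed point.

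For (b), set $\Theta_G^K := \Theta_G \cap \{\theta : \mu_j \in K \text{ for all } j, \; \lmin^* \le \lmin(\theta), \; \lmax(\theta) \le \lmax^*, \; \pi_0 \ge \epsilon_3\}$, a compact set on which $\log \psi_\delta(x, \theta)$ is continuous in $\theta$ and uniformly bounded in $x$ (since $\psi_\delta \ge \epsilon_3 \delta$ there and $\psi_\delta \le \delta + \phi_{\max}$ by Remark \ref{rasphimax}). A standard uniform strong law then yields $\sup_{\theta \in \Theta_G^K} |l_n(\theta) - L(\theta)| \to 0$ a.s., so any $\tilde \theta_n \in \argmax_{\theta \in \Theta_G^K} l_n(\theta)$ (which exists by continuity on compact) satisfies $l_n(\tilde \theta_n) \to \sup_{\Theta_G^K} L = L_G$ by part (a). To rule out $\limsup_n l_n(\theta_n) > L_G$ for any $(\theta_n) \subset \Theta_G$, I use Lemma \ref{laslambda}(b) and A3 to reduce a.s. to the eigenvalue and $\pi_0$ constraints, and then apply the empirical version of the collapsing device: aggregate the weight of any $\mu_{j,n}$ outside a sufficiently large ball into a Gaussian component whose mean lies in $K$ (such a component must exist for large $n$ by the same noise-constraint argument applied to the empirical measure via the LLN), producing $\theta_n' \in \Theta_G^K$ with $l_n(\theta_n') \ge l_n(\theta_n) - o(1)$ a.s.; the uniform SLLN on $\Theta_G^K$ then gives $\limsup_n l_n(\theta_n') \le L_G$, the contradiction.

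The main difficulty is the empirical collapsing step in (b): unlike in (a), where $\mu_{j,m} \to \infty$ forces $\phi(x, \mu_{j,m}, \Sigma_{j,m}) \to 0$ $P$-a.s., in a finite sample a few $x_i$ may lie close to a distant $\mu_{j,n}$ and contribute non-negligibly to $l_n$. Controlling them requires combining the empirical noise constraint (capping at $\lceil n \pimax \rceil$ the number of observations whose pseudo-density can be essentially noise-dominated) with the uniform ceiling $\phi_{\max}$ from Remark \ref{rasphimax}, so that the worst-case per-observation loss, each bounded by $\log(\phi_{\max}/(\epsilon_3 \delta))$, stays $o(1)$ after division by $n$.
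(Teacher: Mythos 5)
Your overall strategy coincides with the paper's: under A3 bound $\pi_0$ below by $\epsilon_3$, use Lemma \ref{laslambda} to confine the eigenvalues, rule out \emph{all} means escaping via the noise constraint and dominated convergence, collapse any remaining escaping components onto a bounded one, and in part (b) combine a uniform SLLN on the resulting compact set with a Glivenko--Cantelli version of the same collapsing argument. However, there is one concretely wrong step in your Claim~2. When you transfer the weight $\pi_{j,m}$ of an escaping component to a bounded component $k\in S$, it is \emph{not} true that $\psi_\delta$ grows pointwise: the change is $\pi_{j,m}\bigl[\phi(x;\mu_{k,m},\Sigma_{k,m})-\phi(x;\mu_{j,m},\Sigma_{j,m})\bigr]$, which is strictly negative on the (far-away, but possibly $P$-non-null) region where the escaping Gaussian dominates the bounded one. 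Hence $\E_P\,\tau_0(x,\theta_m')$ can \emph{increase}, and if the constraint $\E_P\frac{\pi_0\delta}{\psi_\delta(x,\theta)}\le\pimax$ in \eqref{eq_Theta_G} is active at $\theta_m$, your $\theta_m'$ need not lie in $\Theta_G$ for any finite $m$ --- so ``$L(\theta_m')>L_G$'' is not yet a contradiction, since $L_G$ is a supremum over $\Theta_G$ only.

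The step is repairable, and the repair is exactly what the paper does: do not argue membership for each finite $m$, but pass to the limit. Since $\phi(x;\mu_{j,m},\Sigma_{j,m})\to 0$ pointwise for escaping $j$ (eigenvalues being bounded), the limit $\theta'$ of your modified sequence satisfies $\psi_\delta(x,\theta')\ge\limsup_m\psi_\delta(x,\theta_m)$ pointwise, so Fatou applied to $\tau_0$ gives $\E_P\frac{\pi_0'\delta}{\psi_\delta(x,\theta')}\le\liminf_m\E_P\frac{\pi_{0,m}\delta}{\psi_\delta(x,\theta_m)}\le\pimax$, i.e.\ $\theta'\in\Theta_G$, while your dominated-convergence bound (valid because $\epsilon_3\delta\le\psi_\delta\le\delta+\phi_{\max}$ by A3 and Remark \ref{rasphimax}) gives $L(\theta')=\lim_m L(\theta_m')>L_G$ --- the desired contradiction. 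The paper is explicit about precisely this constraint issue in its second case (where the escaping mass is routed to $\pi_0$ instead of to a bounded component), noting that only the limiting $\pi_0$ is guaranteed to satisfy the constraint. With that repair your part (a) goes through, and your part (b), including the finite-sample control of the $O(nP_n(A_n^c))$ observations near escaping means by the uniform bound $\log\bigl((\delta+\phi_{\max})/(\epsilon_3\delta)\bigr)$, matches the paper's intended argument (and is in fact more explicit about it).
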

\begin{proof}
Start with part (a). Consider a sequence $\set{\theta_m}_{m\in\N} \in \Theta_G(P)^\N$ with $\norm{\mu_{j,m}} \to \infty$ for $j=1,\ldots,k,\ 1\le k<G$ (the case $k=G$ is treated at the end), and a compact set $K$ with $\mu_{j,m}\in K$ for $j>k$. By selecting a subsequence if necessary,  assume that there exists $\mu_j=\lim_{m\to\infty} \mu_{j,m},$ $\Sigma_j=\lim_{m\to\infty}\Sigma_{j,m}$ for $k+1\le j\le G$ and that $\pi_{j,m}$
converge for $j=0,\ldots,G$. Let $j^*=\argmax_{k+1\le j\le G}\E_P \phi(x,\mu_{j},\Sigma_{j})$. Suppose $L(\theta_m) \nearrow L_G$ monotonically and, by A3, assume $\pi_{0,m}>\epsilon_1$.

Consider first the case $\sum_{j=1}^k\pi_{j,m}\to\epsilon_3>0$. Construct another sequence $\set{\theta_{*m}}_{m\in\N} \in \Theta_G(P)^\N$ for which  $\mu_{1,*m}=\ldots=\mu_{k,*m}=\mu_{j^*}\in K,$ $\Sigma_{1,*m}=\ldots=\Sigma_{k,*m}=\Sigma_{j^*}$. All other parameters are the  same as in $\theta_m$. Let $A_m=\set{x:\ \forall j\in\{1,\ldots,k}: 2\phi(x,\mu_{j,m},\Sigma_{j,m})\le \phi(x,\mu_{j^*},\Sigma_{j^*})\}$.  Observe  $P(A_m)\to 1$. Now
\begin{equation}\label{easll}
L(\theta_{*m})-L(\theta_m)= %
\int_{A_m}\log\frac{\psi_{\delta}(x,\theta_{*m})} {\psi_{\delta}(x,\theta_{m})}dP(x) 
+ \int_{A_m^c}\log\frac{\psi_{\delta}(x,\theta_{*m})} {\psi_{\delta}(x,\theta_{m})}dP(x).
\end{equation}
For large enough $m$, 
\begin{equation*}
\int_{A_m} \log \frac{\psi_{\delta}(x,\theta_{*m})} {\psi_{\delta}(x,\theta_{m})}dP(x) \ge \epsilon_5>0,
\end{equation*}
whereas (using Remark \ref{rasphimax})
\begin{equation} \label{eq:nullc}
\int_{A_m^c}\log\frac{\psi_{\delta}(x,\theta_{*m})} {\psi_{\delta}(x,\theta_{m})}dP(x) \ge  P(A_m^c)\log\frac{\epsilon_1\delta}{\delta+\phi_{\max}} \to 0.
\end{equation}
Therefore $L(\theta_{*m})-L(\theta_m)>0$ for large enough $m$ so that $L(\theta_m)$ is improved by a $\theta$ with $\mu_j\in K$ for $\js$.\\
Consider now $\sum_{j=1}^k\pi_{jm}\to 0$.  Construct another sequence  $\set{\theta_{*m}}_{m\in\N}\in \Theta_G(P)^\N$ for which  $\mu_{1,*m}=\ldots=\mu_{k,*m}=\mu_{j^*}\in K,$ $\Sigma_{1,*m}=\ldots=\Sigma_{k,*m}=\Sigma_{j^*}$,  $\pi_{1*m}=\ldots=\pi_{k*m}=0$, $\pi_{0*m}=\sum_{j=0}^k\pi_{jm}$, all other parameters taken from $\theta_m$. Set $A_m=\{x:\ \forall j\in\{1,\ldots,k\}: \phi(x,\mu_{jm},\Sigma_{jm})<\delta\}$. Again $P(A_m)\to 1$. With this, 
\eqref{easll} holds again. This time
\begin{equation*}
\int_{A_m}\log\frac{\psi_{\delta}(x,\theta_{*m})} {\psi_{\delta}(x,\theta_{m})}dP(x)>0
\end{equation*}
and again \eqref{eq:nullc}.

Let $\theta_*=\lim_{m\to\infty}\theta_{*m}$ (this exists by construction). Continuity of $L$ implies that $L(\theta_*)=L_G$ and therefore for all $m:$  $L(\theta_*)\ge L(\theta_m)$.  $\sum_{j=1}^k\pi_{j,m}\to 0$  is required  here because $\pi_{0,*m}\ge \pi_{0,m}$ does not necessarily fulfill $\E_P\frac{\pi_{0,*m}c}{\psi_{\delta}(x,\theta_{*m})}\le \pimax$, but  $\lim_{m\to\infty}\pi_{0,*m}=\lim_{m\to\infty}\pi_{0,m}$ does.  \\
Finally, consider $k=G$. With $A_{m,\epsilon}=\{x:\ \forall j\in\{1,\ldots,k\}: \phi(x,\mu_{j,m},\Sigma_{j,m})<\delta \epsilon\}$, observe 
\begin{equation*}
\E_P\frac{\pi_{0,m}\delta}{\psi_{\delta}(x,\theta_{m})}\ge P(A_{m,\epsilon})  \frac{\pi_{0,m}\delta}{\pi_{0,m}(c+\epsilon)}>\pimax,
\end{equation*}
for small enough $\epsilon$ and large enough $m$,  violating for large $m$ the corresponding constraint in $\Theta_G(P)$ as long as $\pi_{0,m}$ is bounded from below, as was assumed. Existence follows in the same way as in the proof of Lemma \ref{lasmua3}.

For part (b) let $\theta^*$ have $\mu_1^*,\ldots,\mu_G^*\in K$ and $L(\theta^*)=L_G$, which exists because of part (a) and Lemma \ref{laslambda}, which ensures further that $\theta^*$ is in a compact $K^*$. Then the strong law of large numbers yields $l_n(\theta^*)\to L_G$ a.s., and Theorem 2 of \cite{Jennrich_1969} implies that for all sequences $(\theta_n)_{n\in\N}\in (K^*)^\N:$ $\limsup_{n\to \infty} l_n(\theta_n)\le L_G$. This also holds for sequences $(\theta_n)_{n\in\N}$ that are eventually outside $K^*$ because of part (a) of Lemma \ref{laslambda} and the proof of part (a) above, because if $(\theta_m)_{m\in\N}$ is chosen as above 
for $m=n\to\infty$ and $P$ is replaced by the empirical distribution $P_n$, 
Glivenko-Cantelli \CHANGE{(which applies by the same argument as used
in the proof of Lemma \ref{lasmua3})} enforces $P_n(A_{n})-P(A_{n})\to 0$ a.s., which means that as in part (a), a.s., eventually $l_n(\theta_n)$ cannot converge to anything larger than $L_G$.
\end{proof}

\begin{theorem}[RIMLE existence]\label{th:asy_existence}
Assume A1 and any one of A2 or A3. There is a compact subset  $K\subset \Theta_G(P)$ so that there exists $\theta\in K:\ \infty>L(\theta)=L_G>-\infty$. Assuming A2, for $\theta\not\in K$, $L(\theta)$ is bounded away from $L_G$.
\end{theorem}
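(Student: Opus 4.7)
The strategy is to assemble a compact $K\subset\Theta_G$ from the eigenvalue and mean bounds of Lemmas \ref{laslambda}, \ref{lasmua3}, \ref{lasmu}, establish upper semicontinuity of $L$ on $K$, and then read off both conclusions directly.

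First I would pin down finiteness of $L_G$. Lemma \ref{lasexist} delivers $L_G>-\infty$. For the upper bound, Remark \ref{rasphimax} (a consequence of Lemma \ref{laslambda}(a) and \eqref{eq:phi_max}) yields $\phi(x,\mu_j,\Sigma_j)\le \phi_{\max}$ uniformly for $\theta$ with $L(\theta)>L_G-\epsilon$, hence $\psi_\delta(x,\theta)\le \delta+\phi_{\max}$ and $L(\theta)\le \log(\delta+\phi_{\max})<\infty$, so $L_G<\infty$.

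Next, let $K$ consist of $\theta\in\Theta_G$ with all eigenvalues in $[\lmin^*,\lmax^*]$ (from Lemma \ref{laslambda}), all component means in the compact $K_\mu$ of Lemma \ref{lasmua3}(a) under A2 or Lemma \ref{lasmu}(a) under A3, and mixing weights in the unit simplex. The symmetric positive-definite matrices with spectrum in $[\lmin^*,\lmax^*]$ form a compact set, the simplex is compact, and the defining constraints of $\Theta_G$ (the eigenratio and the $\pimax$-inequality $\E_P\frac{\pi_0\delta}{\psi_\delta(x,\theta)}\le\pimax$) are closed, so $K$ is a compact subset of $\Theta_G$. On $K$, $\theta\mapsto \log\psi_\delta(x,\theta)$ is continuous for each $x$ and bounded above by the integrable constant $\log(\delta+\phi_{\max})$; reverse Fatou then gives upper semicontinuity of $L$ on $K$. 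Compactness plus upper semicontinuity forces attainment of the supremum on $K$, and Lemmas \ref{laslambda}(a), \ref{lasmua3}(a), \ref{lasmu}(a) rule out the supremum over $\Theta_G$ being approached outside $K$, so the maximizer achieves $L(\theta)=L_G$.

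For the refined conclusion under A2, both Lemma \ref{laslambda}(a) and Lemma \ref{lasmua3}(a) supply a uniform gap $\epsilon'>0$: any $\theta\not\in K$ either violates the eigenvalue box or has some $\mu_j\notin K_\mu$, and in either case $L(\theta)\le L_G-\epsilon'$. The main obstacle I expect is verifying closedness of the $\pimax$-constraint on $K$ and, correspondingly, continuity of $\theta\mapsto \E_P\frac{\pi_0\delta}{\psi_\delta(x,\theta)}$; here the eigenvalue bound provides a uniform integrable envelope permitting dominated convergence, but the argument must be written out carefully because the boundary case $\pi_0=0$ degenerates the obvious lower bound $\psi_\delta\ge \pi_0\delta$. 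Under A3 the ``bounded away'' claim genuinely can fail, since Lemma \ref{lasmu}(a) only ensures attainment without a uniform gap, which is precisely why the second conclusion of the theorem is restricted to A2.
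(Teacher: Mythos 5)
Your proposal is correct and follows essentially the same route as the paper, which simply pieces the theorem together from Lemma \ref{lasexist}, Remark \ref{rasphimax}, and parts (a) of Lemmas \ref{laslambda}, \ref{lasmua3} and \ref{lasmu}; in particular the compactness-plus-Fatou attainment argument you describe is exactly the one the paper carries out inside the proof of Lemma \ref{lasmua3}(a). Your added care about continuity of $\theta\mapsto \E_P\frac{\pi_0\delta}{\psi_\delta(x,\theta)}$ is easily resolved since the integrand lies in $[0,1]$ and $\psi_\delta(x,\theta)>0$ for every $\theta$ in the simplex, so dominated convergence applies without issue.
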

\begin{proof} 
Pieced together from Lemmas  \ref{lasexist}-\ref{lasmua3} parts (a) and 
Remark \ref{rasphimax}.
\end{proof}
Theorem \ref{th:asy_existence} establishes existence of the RIMLE functional
\begin{equation}\label{eq_rimle_functional}
\theta^\star(\delta) = \argmax_{\theta \in \Theta_G(P)} L(\theta).
\end{equation}
Unfortunately neither $L(\theta)$ nor $l_n(\theta)$ can be expected to have a unique maximum. If we take the vector $\theta$ and we permute some of the triples $(\pi_j, \mu_j, \Sigma_j)$ we still obtain the same value for $L(\theta)$ and $l_{n}(\theta)$. This known as ``label switching'' in the mixture literature. There could be other causes for multiple maxima. Without strong restrictions on $P$, we cannot identify any specific source of multiple optima in the target function. Instead we show that asymptotically the sequence of estimators is close to some maximum of the pseudo-loglikelihood, which amounts to  consistency of the RIMLE with respect to a quotient space topology identifying all loglikelihood maxima, as done in \cite{Redner_1981}. 
By $\theta^\star(\delta)$ in \eqref{eq_rimle_functional} we mean any of the maximizer of $L(\theta)$. Define the sets
\begin{equation*}
S(\dot \theta)=\set{ \theta \in \Theta_G(P) : \int \log \psi_\delta (x;\theta)dP(x) = \int \log \psi_\delta (x;\dot \theta)dP(x) },
\end{equation*}
\begin{equation*}
\mathcal{K}(\dot \theta, \eps)= \set{\theta \in \Theta_G(P) : \|\theta - \ddot{\theta} \| <  \eps \;  \forall \; \ddot \theta \in S(\dot \theta)},
\quad \text{for any}  \quad \eps >0.
\end{equation*}
The following theorem makes a stronger statement assuming A2 than A3, because 
if A2 does not hold, the $G$th mixture component is asymptotically not needed 
and cannot be controlled for finite $n$ outside a compact set.
\begin{theorem}[Consistency]\label{th:rimle_consistency}
Assume A1 and A2. Then  for every $\eps >0 $ and every sequence of maximizers
$\theta_n (\delta)$ of $l_n$:  
\begin{displaymath}
\lim_{n\to\infty}P\set{ \theta_n (\delta)  \in \mathcal{K}(\theta^\star(\delta), \eps) } =1.
\end{displaymath}
Assuming A3 instead of A2, for every compact $K\supset \mathcal{K}(\theta^\star(\delta), \eps)$ there exists a sequence of $\theta_n$ that maximize $l_n$ 
locally in $K$ so that
\begin{displaymath}
\lim_{n\to\infty}P\set{ \theta_n   \in \mathcal{K}(\theta^\star(\delta), \eps) } =1.
\end{displaymath}
\end{theorem}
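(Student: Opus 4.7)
My plan is to adapt the classical Wald consistency argument to the quotient-space topology of \cite{Redner_1981}, exploiting the compactness reductions already supplied by Lemmas \ref{laslambda}, \ref{lasmua3} and \ref{lasmu}.

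First, in the A1$+$A2 case, by Lemmas \ref{laslambda}(b) and \ref{lasmua3}(b) any sequence of maximizers $\theta_n(\delta)$ lies eventually a.s.\ in a fixed compact set $\Theta^\star \subset \Theta_G$ on which $\lmin^*\le\lmin(\theta)$, $\lmax(\theta)\le\lmax^*$ and $\mu_j\in K$. A uniform SLLN applied to the continuous bounded map $\theta\mapsto\tau_0(\cdot,\theta)$ on $\Theta^\star$ also shows that the empirical constraint $n^{-1}\sum_i\tau_0(x_i,\theta)\le\pimax$ is uniformly close to its population counterpart $\E_P[\pi_0\delta/\psi_\delta(x,\theta)]\le\pimax$, so that a fixed maximizer $\theta^*\in S(\theta^\star(\delta))$ (perturbed slightly into the interior of the constraint by an amount $\kappa\downarrow 0$ if required, yielding $\theta^*_\kappa$ with $L(\theta^*_\kappa)\ge L_G-\kappa$) is an admissible reference point in $\Theta$ for all $n$ sufficiently large a.s.

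Next, I would consider the score increment $f_\kappa(x,\theta)=\log\psi_\delta(x,\theta)-\log\psi_\delta(x,\theta^*_\kappa)$, whose $P$-expectation equals $L(\theta)-L(\theta^*_\kappa)$. For any $\bar\theta\in\Theta^\star\setminus S(\theta^\star(\delta))$ one has $L(\bar\theta)<L_G$; by continuity of $\log\psi_\delta(x,\cdot)$ and dominated convergence---valid because on $\Theta^\star$, $\log\psi_\delta(x,\theta)$ is uniformly bounded above by the constant in Remark \ref{rasphimax}, while $\log\psi_\delta(x,\theta^*_\kappa)$ is $P$-integrable as $L(\theta^*_\kappa)>-\infty$ by Lemma \ref{lasexist}---there exists $r_{\bar\theta}>0$ with
\[
\E_P\sup_{\theta\in B(\bar\theta,r_{\bar\theta})}f_\kappa(x,\theta)\le\tfrac{1}{2}\bigl(L(\bar\theta)-L_G\bigr)+\kappa<0,
\]
provided $\kappa$ is small (noting that $L_G-L(\bar\theta)$ is bounded below by a positive constant uniformly on the compact set $\Theta^\star\setminus\mathcal{K}(\theta^\star(\delta),\eps)$). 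Covering this set by finitely many such open balls and applying SLLN to each envelope yields a uniform strictly negative bound $\sup_{\theta}[l_n(\theta)-l_n(\theta^*_\kappa)]\le-c<0$ over $\Theta^\star\setminus\mathcal{K}(\theta^\star(\delta),\eps)$ eventually a.s. Since $\theta_n(\delta)$ maximizes $l_n$ on $\Theta$, it therefore cannot lie outside $\mathcal{K}(\theta^\star(\delta),\eps)$ for large $n$, giving the required convergence in probability.

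Under A3 the same scheme applies, but Lemma \ref{lasmu}(b) is used to supply a locally maximizing sequence $\tilde\theta_n$ with $l_n(\tilde\theta_n)\to L_G$ a.s.\ inside any given compact $K\supset\mathcal{K}(\theta^\star(\delta),\eps)$ (by intersecting the lemma's local maximizer with $K$), and the Wald cover/separation argument restricted to $K$ confines $\tilde\theta_n$ to $\mathcal{K}(\theta^\star(\delta),\eps)$ for large $n$. The main obstacle I anticipate is the entanglement between the compactness reduction and the admissibility of the reference $\theta^*_\kappa$ under the empirical noise constraint: one must perturb into the interior of $\{\E_P\tau_0\le\pimax\}$ without spoiling the dominated-convergence step, which is handled by the $\kappa\downarrow 0$ device combined with the continuity of $L$ on the compact parameter region.
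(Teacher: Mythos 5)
Your argument is correct, and it reaches the conclusion by a genuinely different classical route than the paper. Both proofs start from the same compactness reduction (Lemmas \ref{laslambda}(b) and \ref{lasmua3}(b) under A2, Lemma \ref{lasmu}(b) under A3), but from there the paper establishes a two-sided uniform law of large numbers on the compact set via Theorem 2 of Jennrich (1969) --- using the bound $|\log\psi_\delta(x,\theta)|\le C$ from Theorem \ref{th:asy_existence} and Remark \ref{rasphimax} --- and then invokes the argmax/well-separated-maximum argument of van der Vaart's Theorem 5.7 to get $L(\theta_n(\delta))\to L_G$ and hence the event inclusion $A_n\subseteq B_n$. You instead run Wald's original scheme: upper envelopes of the log-likelihood increment over small balls, a dominated-convergence shrinkage of the ball radius to make each envelope's expectation strictly negative on the compact complement of $\mathcal{K}(\theta^\star(\delta),\eps)$, a finite subcover, and a one-sided SLLN per envelope. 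The two methods buy essentially the same thing here (the domination hypotheses needed for Jennrich and for your envelopes coincide on the compact set), but your version only needs one-sided control and, more importantly, you make explicit a point the paper's proof passes over: the reference maximizer $\theta^\star(\delta)$ lies in $\Theta_G$ (population noise constraint) but need not satisfy the empirical constraint \eqref{eq_noise_constraint} defining $\Theta$ for finite $n$, so that $l_n(\theta_n(\delta))\ge l_n(\theta^\star(\delta))$ is not automatic; your $\kappa$-perturbation into the interior of the constraint, combined with continuity of $L$, is a clean and correct repair of this feasibility issue. Your handling of the A3 case via the locally maximizing sequence of Lemma \ref{lasmu}(b) restricted to $K$ likewise matches what the theorem asserts.
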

\begin{proof}
Under A2, because of the parts (b) of the Lemmas \ref{laslambda} and \ref{lasmua3} it can be assumed that there is a compact set $K$ so that all $\theta_n(\delta)\in K$ for large enough $n$ a.s. Under A3, considerations are restricted to $K$ anyway.

Based on Theorem  \ref{th:asy_existence} and related Lemmas $|\log \psi_\delta(x, \theta) | \leq C$ for some finite constant $C$  for all $\theta \in K$.  Sufficient conditions for Theorem 2 in \cite{Jennrich_1969} are satisfied, which implies uniform convergence of $l_n(\theta)$, that is  $\sup_{\theta \in K} \abs{l_{n}(\theta)-L(\theta)} \conv 0$ $P$-a.s. Based on the latter, and  applying the  same argument as in proof of Theorem 5.7 in \cite{Vaart_2000}, it holds true that  $L({\theta}_n(\delta))  \conv  L({\theta}^\star(\delta))$ $P$-a.s. By continuity of $L(\theta)$ and Theorem \ref{th:asy_existence}  we have that for every $\eps>0$ there exists a $\beta>0$ such that $L(\theta^\star(\delta))-L(\theta)>\beta$ for all  $\theta \in K\setminus\mathcal{K}(\theta^\star(\delta), \eps)$.  Denote $(\Omega, \mathcal{A}, P)$ the probability space where the sample random variables are  defined and consider the following events
\begin{displaymath}
A_n=\set{\omega \in \Omega : \; \theta_n(\delta)  \in K\setminus\mathcal{K}(\theta^\star(\delta),\eps)}, 
\end{displaymath} 
and 
\begin{displaymath}
B_n = \set{\omega \in \Omega : \; L(\theta^\star(\delta)) -  L(\theta_n(\delta))> \beta}.
\end{displaymath} 
Clearly $A_n \subseteq B_n$ for all $n$. $P(B_n) \to 0$ for $n \to \infty$ implies  $P(A_n) \to 0$. The latter proves the result.
\end{proof}


\section{Algorithms and practical issues}\label{sec_practical}

\subsection{RIMLE computing}\label{subsec_rimle_computing}
In this section we develop  Expectation--Maximization type algorithms (EM) to compute the RIMLE (for a fixed $\delta$ ).  Let $s=0,1,\ldots$ be the iteration index. Let $a^\snew$ be the quantity $a$ computed at the $s$th step of the EM algorithm. Define 
\begin{align}\label{eq:Q_em}
Q(\theta, \theta^{(s)}) =& \sum_{i=1}^{n} \sum_{j=0}^{G} \tau_{j}(x_i, \theta^{(s)}) \log \pi_j + \; \sum_{i=1}^n \tau_0(x_i, \theta^{(s)})\log\delta + \nonumber \\ 
{}                     +& \sum_{i=1}^{n} \sum_{j=1}^{G}  \tau_{j}(x_i, \theta^{(s)}) \log \phi(x_i; \mu_j,\Sigma_j).%
\end{align}
Increasing \eqref{eq:Q_em} by an appropriate choice of $\theta$ increases $l_n(\cdot)$. An approximate candidate maximum of $l_{n}(\theta)$ can be found by the following EM--algorithm

\begin{algorithm}[H]
\caption{EM-algorithm}\label{algo:emfull}
\SetAlgoLined
\Input{$\{x_1,x_2,\ldots,x_n\}$, $\delta$, $\pimax$, $\gamma, \theta^{(0)}$, {\tt tol}>0}
\Output{$\theta^\emconv$ }
\BlankLine
\While{$|l_n(\theta^\snew) - l_n(\theta^\sold)| > ${\tt tol}}{
{\textsf{\textbf{E--step:}}} compute $\tau_j(x_i, \theta^\sold)$, for all $\is$ and $j=0,1,\ldots,G$\\
{\textsf{\textbf{M--step:}}} $\theta^\snew \gets {\argmax}_{\theta \in \Theta} Q(\theta, \theta^\sold)$ \\
}
$\theta^\emconv  \gets \theta^{(s+1)}$
\end{algorithm}

\begin{proposition}\label{prop-em-convergence-improper}
Assume A0. The sequence $\{\theta^{(s)}\}_{s \in \N}$ produced by Algorithm \ref{algo:emfull} converges to a point $\theta_n^{em} \in \Theta$, and $l_{n}(\theta^{(s)})$ is increased in every step.
\end{proposition}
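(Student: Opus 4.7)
My plan is to follow the standard EM convergence template, with the two complications being the improper $\pi_0 \delta$ component and the constraints defining $\Theta$.

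First I would establish the monotonicity $l_n(\theta^{(s+1)}) \ge l_n(\theta^{(s)})$ via the usual Jensen identity. Writing $\log \psi_\delta(x_i,\theta) = \log \sum_{j=0}^G \tau_j(x_i,\theta^{(s)}) \cdot \pi_j f_j(x_i,\theta)/\tau_j(x_i,\theta^{(s)})$, with $f_0 \equiv \delta$ and $f_j(x,\theta) = \phi(x;\mu_j,\Sigma_j)$ for $j \ge 1$, Jensen's inequality yields $l_n(\theta) - l_n(\theta^{(s)}) \ge n^{-1}[Q(\theta,\theta^{(s)}) - Q(\theta^{(s)},\theta^{(s)})]$ with equality at $\theta=\theta^{(s)}$. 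The constant $\delta$ enters symmetrically on both sides, so the improper component causes no difficulty. The M-step picks $\theta^{(s+1)}\in\Theta$ with $Q(\theta^{(s+1)},\theta^{(s)})\ge Q(\theta^{(s)},\theta^{(s)})$, giving monotonicity. To see that the M-step is well-posed on $\Theta$, I would note that $Q(\cdot,\theta^{(s)})$ diverges to $-\infty$ when any covariance eigenvalue collapses (by the argument of Lemmas \ref{lemma_finite_n_eigen_1}--\ref{lemma_finite_n_eigen_2} applied with fixed weights $\tau_j(x_i,\theta^{(s)})$, using A0 to supply enough sample points of positive weight), while the eigenratio and $\pimax$ constraints are closed and the unconstrained $\mu_j$-update is the weighted sample mean, hence in $\mathrm{conv}(\underline{x_n})$. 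A maximizer therefore exists in $\Theta$.

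For convergence of the iterates themselves, monotonicity combined with the upper boundedness of $l_n$ on $\Theta$ (Theorem \ref{th:finite_sample_existence}) gives $l_n(\theta^{(s)}) \nearrow L_n^\infty$ for some finite limit. The trajectory $\{\theta^{(s)}\}$ stays in a compact subset of $\Theta$ by quantitative bounds of the type used in Lemma \ref{laslambda}: eigenvalues are bounded away from $0$ and $\infty$, and means lie in $\mathrm{conv}(\underline{x_n})$. I would then invoke Wu's (1983) global convergence theorem within the Zangwill framework: the EM point-to-set map $M(\theta) = \argmax_{\theta' \in \Theta} Q(\theta',\theta)$ is closed on this compact set (via Berge's maximum theorem, which also handles the implicitly defined covariance update arising from the eigenratio constraint), $l_n$ is continuous, and $l_n$ strictly increases off the set of fixed points of $M$, which coincides with the Karush--Kuhn--Tucker points of $l_n$ on $\Theta$. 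Thus the limit set of $\{\theta^{(s)}\}$ is contained in this fixed-point set.

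The main obstacle I expect is upgrading this to convergence to a \emph{single} point $\theta_n^{em}$, rather than merely to a compact connected subset of stationary points. I would handle this via the standard companion argument: show $\|\theta^{(s+1)} - \theta^{(s)}\| \to 0$ using continuity of $M$ together with $l_n(\theta^{(s+1)}) - l_n(\theta^{(s)}) \to 0$ and the local strict concavity (hence uniqueness) of the Gaussian M-step update away from degenerate weights, and then combine with connectedness of the limit set to force collapse to a single point when the stationary points of $l_n$ on $\Theta$ are isolated. A secondary technical nuisance is that the covariance M-step under the active eigenratio constraint is defined only implicitly (and may need the one-dimensional search mentioned in the algorithm), so continuity of that step as a function of $\theta^{(s)}$ has to be extracted again from Berge's maximum theorem.
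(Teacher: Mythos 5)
Your proposal reaches the same conclusion by a genuinely different route. The paper's proof is a one-paragraph reduction: for a fixed dataset one can pick any set $A_n\supset\underline{x_n}$ of Lebesgue measure $1/\delta$, so that $\delta$ is the value of a proper uniform density on $A_n$ at every data point; hence $\psi_\delta$ coincides, on the sample, with an ordinary finite mixture density with one known (uniform) component, and the EM convergence theorem of Redner and Walker (their Theorem 4.1, with $Q(\theta,\theta^{(s)})$ in the role of their $Q$) applies verbatim. You instead rebuild the EM theory from scratch: Jensen's inequality for the ascent property (correctly observing that the constant $\delta$ passes through the decomposition unchanged), well-posedness of the constrained M-step via the degeneration argument of Lemmas \ref{lemma_finite_n_eigen_1}--\ref{lemma_finite_n_eigen_2} with frozen weights, confinement of the iterates to a compact subset of $\Theta$, and the Zangwill/Wu global convergence framework with closedness of the M-step map. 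What your route buys is self-containedness and an explicit audit of exactly where the improper component and the constraints \eqref{eq_noise_constraint} and \eqref{eq_cov_constraint} could interfere with the classical argument (they do not); what the paper's route buys is brevity, since the uniform-on-$A_n$ trick converts the statement into a literal special case of the cited theorem.

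The one substantive caveat concerns convergence of the iterates to a \emph{single} point $\theta_n^{em}$. Wu-type theory only yields that the limit set is a compact connected subset of stationary points on which $l_n$ is constant, and your upgrade to a single limit is explicitly conditional on the stationary points of $l_n$ on $\Theta$ being isolated, which you do not verify (and which cannot be verified in general for mixture likelihoods). You should be aware that the paper's proof does not resolve this either --- it is delegated wholesale to the citation --- so your proposal is at worst at the same level of rigor on this point, and arguably more transparent about where the residual assumption sits. If you want a fully self-contained argument, you would need either to restrict the claim to ``every limit point of $\{\theta^{(s)}\}$ is a stationary point of $l_n$ on $\Theta$ and $l_n(\theta^{(s)})$ is nondecreasing,'' or to add an explicit isolatedness hypothesis.
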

\begin{proof}
Find a set $A_n \subset \R^p$ that contains all points in   $\underline{x_n}$ with  Lebesgue measure
$M(A_n) = 1/\delta$. $\delta$ is then  a proper uniform density function on $A_n$. Hence, for a given dataset the pseudo-density $\psi_\delta(\cdot)$ can be written as proper density function. Therefore the convergence Theorem 4.1  in \cite{Redner_Walker_1984} holds,  with $Q(\theta, \theta^{(s)})$ playing the role of their $Q(\cdot)$ function.
\end{proof}

Algorithm \ref{algo:emfull} is the analog of the EM algorithm for plain Gaussian mixtures  \citep[see][]{Redner_Walker_1984} except that now the M-step is a constrained optimization. \cite{Wu_1983} showed that the  EM algorithm converges to the global maximum  if the likelihood function is unimodal and certain differentiability conditions are satisfied. In general the limit of the  EM algorithm   is not guaranteed to coincide with a global maximum of likelihood function. However, Proposition \ref{prop-em-convergence-improper} guarantees that $\theta_n^\emconv$ is a stationary point of $l_{n}(\cdot)$. Running the EM algorithm for a large number of starting values increases the chances of finding the optimal solution. For finite  Gaussian mixtures models it is well known that the likelihood surface is difficult to explore even when $p$ is not too large, and the main advantage of the EM algorithm is that the M-step can be divided in a number of simpler optimization problems each of which has a closed form solution. However, for the RIMLE  the constraints add some complexity, and in particular the noise proportion constraint does not allow to separate the M-step in simpler subprograms. One possibility is to perform the M-step using  numerical optimization packages, but the eigenratio constraints requires to  parameterize each $\Sigma_j$  terms of its spectral components. The latter has the drawback to add $G{\times}p(1-p)/2$ parameters. Furthermore, the eigenratio constraint has a non-smooth nature that would make numerical techniques hard to adapt.

In \cite{Coretto_Hennig_2014_comparison} computations are based on Algorithm \ref{algo:emfull} where the M-step is performed as if the problem would be unconstrained, and  breaking the iteration  when updates drive the parameters outside the constrained parameter space. \cite{Coretto_Hennig_2014_comparison} also propose a heuristic method to enforce the constraints at the end of the iterations if necessary. Of course in such situations there would be no guarantee that the delivered solution is a stationary point of $l_n(\cdot)$. Here, we propose an algorithm where constraints are applied exactly in each iteration. The M-step in Algorithm \ref{algo:emfull} is replaced with two conditional maximization (CM) steps. This transforms Algorithm \ref{algo:emfull} into an  Expectation-Conditional Maximization algorithm (ECM) as introduced  by  \cite{Meng_Rubin_1993}. For ease of notation, for  $j=0,1,\ldots,G$ define
$\toldj = \tau_{j}(x_i, \theta^\sold)$ and $\Toldj = \sumin \tau_{j}(x_i, \theta^\sold)$.
Rewrite \eqref{eq:Q_em}, using $\theta_1=(\mu_1,\ldots,\mu_G,\vect(\Sigma_1),\allowbreak \ldots,\allowbreak \vect(\Sigma_G))$, $\theta_2=(\pi_0,\pi_1,\ldots,\pi_G)^\prime$, and $\theta=(\theta_1,\theta_2)^\prime$, as 
\begin{equation}\label{eq:Q12_em}
Q(\theta_1,\theta_2, \theta^\sold)= Q_1(\theta_1, \theta^\sold) %
+ Q_2(\theta_2, \theta^\sold) + \text{const},
\end{equation}
where
\begin{equation*}
Q_1(\theta_1, \theta^\sold)= \sumin \sumjG \toldj \log \phi(x_i; \mu_j,\Sigma_j),\quad 
Q_2(\theta_2, \theta^\sold)= \sum_{j=0}^G \Toldj  \log \pi_j, 
\end{equation*}
and  $\text{const} =\Toldo \log(\delta)$ which does not depend on $\theta$. 
Consider the following programs: 
\begin{equation}
\underset{\theta_1}{\text{maximize}} \;  Q_1(\theta_1, \theta^\sold)
\quad \text{subject to} \quad \frac{\lmax(\theta_1)}{\lmin(\theta_1)} \leq \gamma,  \nonumber  \label{eq:cm1}\tag{\bf CM1}
\end{equation}
and
\begin{equation}
\underset{\theta_2}{\text{maximize}} \;  Q_2(\theta_2, \theta^\sold) %
\quad \text{subject to} \quad 
\sum_{j=0}^G \pi_j=1, 
\;  \sumin \tau_0(x_i,\theta_1^\snew, \theta_2) \leq n\pimax.
\label{eq:cm2}\tag{\bf CM2}
\end{equation}
The ECM algorithm consists of solving \eqref{eq:cm1} and then \eqref{eq:cm2}. The sequence of optimizations replaces the  M-step in Algorithm \ref{algo:emfull}.  Notice that in \eqref{eq:cm2}, $\pi_j=0$ for some $j=0,1,\ldots,G$ would drive the objective function toward $-\infty$, so we do not need to restrict the $\pi_j$ as $>0$. $\Toldj=0$ will not happen, see Remark \ref{remark:ecm_eat_component}. Also notice that for $\delta{=}0$ the noise proportion constraint is automatically fulfilled, and for more analysis on these cases see Remark \ref{remark:ecm_eat_component}.   
\begin{algorithm}[!t]
\caption{ECM}\label{algo:ecm}
\SetAlgoLined
\Input{$\{x_1,x_2,\ldots,x_n\}$, $\delta$, $\pimax$, $\gamma, \theta^{(0)} \in \Theta$, {\tt tol}>0}
\Output{$\theta^\text{ecm}$ }
\BlankLine
\While{$|l_n(\theta^\snew) - l_n(\theta^\sold)| > ${\tt tol}}{
\BlankLine
\BlankLine
{\textsf{\textbf{E--step}}} \newline 
compute $\toldj$ for all $\is$ and $j=0,1,\ldots,G$
\BlankLine
\BlankLine
{\textsf{\textbf{CM1--step}}} \newline 
\For{$\js$}{
$\mu^\snew_j \gets  \frac{1}{\Toldj} \sum_{i=1}^{n} \toldj  x_i $\newline
$S^\snew_j   \gets  \frac{1}{\Toldj} \sum_{i=1}^{n} \toldj  \;  (x_i - \mu_{j}^\snew) (x_i-\mu_{j}^\snew)', $\newline
$ V^\snew_j  \diag(e_{j,1},\ldots,e_{j,p})   V^{\snew\prime}_j \quad \gets \quad \spec(S^\snew_j)$
}
\uIf{$\max\left\{{e_{j,k}}/{e_{t,k}};  \quad  t,\js, \; j \neq t, \; \ks \right\}\leq \gamma$}{$\Sigma_j^\snew \gets S^{(s+1)}_j$}
\Else{
\mathleft 
\begin{equation*} 
m_* \gets  \argmin_{m >0} \quad \sum_{j=1}^G T_j^{(s)} \sum_{k=1}^p 
\left( \log(\ell_\gamma(e_{j,k},m)) + \frac{e_{j,k}}{\ell_\gamma(e_{j,k},m)}\right)
\end{equation*} 
\mathcenter
$\Sigma_j^\snew \gets V^\snew_j\diag(\ell_\gamma(e_{j,1},m_*), \ldots, \ell_\gamma(e_{j,p},m_*)) \; V^{\snew\prime}_j$ for all $\js$ 
}
\BlankLine
\BlankLine
{\textsf{\textbf{CM2--step}}} \newline 
\uIf{$\sumin \tau_0(x_i, (\theta_1^\snew, \dot \theta_2))\leq n\pimax$,
~~\textrm{where}  $\dot \theta_2=\left(\Toldo/n,\ldots, T^\sold_G/n \right)^\prime$}%
{$\pi_j^\snew \gets \Toldj/n$ for all $j=0,1,\ldots,G$}
\Else{
\mathleft 
\begin{equation*}
\text{compute~}\omega_*:\quad %
\left( \sumin \frac{\omega_* \delta}{\omega_* \delta + \frac{1-\omega_*}{n-\Toldo} \sumjG \Toldj \phi(x_i; \mu_j^\snew, \Sigma_j^\snew)}\right) - n\pimax = 0
\end{equation*}
\mathcenter
$\pi_0^\snew \gets \quad \omega_*$\newline 
$\pi_j^\snew \gets  \quad  \frac{1-\omega_*}{n-\Toldo} \Toldj $\newline

}

}$\theta^{\text{ecm}}  \gets \theta^{(s+1)}$
\end{algorithm}


Before presenting the ECM Algorithm \ref{algo:ecm} we introduce additional notations. For $a \in R^d$ let $\diag(a)$ be the $d{\times}d$ diagonal matrix with elements of $a$ on the main diagonal. For a matrix $A$ let $\spec(A) = \Gamma \Lambda \Gamma^\prime$ be the spectral decomposition of $A$, that is, $\Gamma$ contains the normalized eigenvectors of $A$ corresponding to the eigenvalues contained in the diagonal matrix $\Lambda$. Moreover for $a,m \in \R$ define the shrinkage operator $\ell_{\gamma}(a,m) = \min\{ \max \{m, a\}, \gamma m \}$.   \\
In each step of Algorithm \ref{algo:ecm} closed form expressions are computed except that for computing $m_*$ in (CM1), and $\omega_*$ in (CM2).  $m_*$ is the solution of a one-dimensional convex problem. The resulting updates for the eigenvalues  almost coincide with those of TCLUST. For TCLUST \cite{Fritz_etal_2013} show  that their analog of  $m_*$  can be computed by $2pG+1$ evaluations of the objective function. A similar result may hold here, however we do not consider it because  we found, \CHANGE{based on experimental evidence},  that the simple golden section search algorithm of \cite{Kiefer_1953} finds $m_*$ in less than 30 \CHANGE{objective function} evaluations on average independently of $p$ and $G$. Computation of  $\omega_*$  can be performed by a one-dimensional  root finder algorithm. Both are simple problems that do not require much computational effort.

Some additional results are given to show how the {CM1--step} and  the {CM2--step} solve \eqref{eq:cm1} and \eqref{eq:cm2} respectively.
\begin{lemma}\label{lemma:cm1_step}
Assume Algorithm \ref{algo:ecm}  has been run for $s$ iterations. The vector
$\theta_1^\snew=(\mu_1^\snew,\ldots,\mu_G^\snew, \vect(\Sigma_1^\snew),\ldots, \vect(\Sigma_G^\snew))^\prime$ computed in the CM1--step is the global optimal solution to  \eqref{eq:cm1}. Moreover,  $m_*$ exists and it is unique.
\end{lemma}
\begin{proof}
Based on standard normal likelihood theory, one can see that the unique maximum of 
$Q_1(\theta_1, \theta^\sold)$  with respect to mean parameters is $\mu_j^{(s+1)}$ for all $\js$. Substituting $\mu_j^{(s+1)}$ into $Q_1(\cdot)$, and rearranging the exponent of the Gaussian density by using the cyclic property of the matrix trace \citep[see][]{Anderson_Olkin_1985}, program \eqref{eq:cm1} is completed by choosing $\Sigma_1,\ldots,\Sigma_G$ maximizing 
\begin{equation}\label{eq_em_Q1_2}
Q_1(\dot \theta_1, \theta^\sold) = \text{const} - \frac{1}{2} \sumjG  \Toldj  \left( \tr(\Sigma_j^{-1} S_{j}^\snew) - \log \det(\Sigma_j^{-1}) \right)
\end{equation}
under the eigenratio constraint, where $\dot \theta_1 = (\mu_1^\snew, \ldots, \mu_j^\snew, \vect(\Sigma_1), \ldots, \vect(\Sigma_G))$. Consider the spectral decompositions 
\begin{equation*}
\spec(\Sigma_j)=\Gamma_j \Lambda_j \Gamma_j^\prime, %
\quad \text{and} \quad %
\spec(S^\snew_j)=V^\snew_j  E_j   V^{\snew\prime}_j,
\end{equation*}
where $\Lambda_j=\diag(\lambda_{j,1},\ldots,\lambda_{j,p})$ and  $E_j=\diag(e_{j,1},\ldots,e_{j,p})$. Theorem 1 and Corollary 1 in \cite{Theobald_1975} imply that
\begin{equation}
\tr\left(\Lambda_j^{-1} E_j\right) - \log\det(\Lambda_j^{-1})  \leq  \tr\left( \Sigma_j^{-1} S^{(s+1)}_j \right) - \log \det(\Sigma_j^{-1}),
\end{equation}
with the previous  holding with equality if and only if $\Gamma_j=V^{(s+1)}_j$.  Therefore, $\Sigma_j^\snew = V_j^\snew \Lambda_j V_j^{\snew\prime}$ is plugged into \eqref{eq_em_Q1_2}, and  \eqref{eq:cm1} reduces to 
\begin{equation}\label{eq_mstep_eigenvalues}
\begin{aligned}
& \underset{\lambda_{1,1}, \ldots, \lambda_{G,p} }{\text{minimize}}    & & \sum_{j=1}^G T_j^{(s)} \sum_{k=1}^p  \left( \log(\lambda_{j,k}) + \frac{e_{j,k}}{\lambda_{j,k}} \right), \\
& \text{subject to}      & & 0 < m  \leq \lambda_{j,1},\ldots, \leq \lambda_{j,p} \leq m\gamma \quad \forall \js.
\end{aligned}
\end{equation}
Program \eqref{eq_mstep_eigenvalues} is separable in the optimization variables, and therefore the summands of \eqref{eq_mstep_eigenvalues} can be minimized separately for a given $m$. Fix $m>0$, then $\ell_\gamma(e_{j,k},m)$ is the unique optimal solution to the minimization of  $\log(\lambda_{j,k}) + e_{j,k} \; \lambda_{j,k}^{-1}$. Notice that
$e_{j,k} \leq e_{j,t}  {\Longrightarrow} \ell_\gamma(e_{j,k},m) \leq \ell_\gamma(e_{j,t},m)$ for any $m>0$ and $t=1,2\ldots,p$. This means that the relative ordering of the elements on the diagonal of $E_j$ remains unchanged after having applied  the shrinkage operator $\ell(\cdot)$.  Replace $\lambda_{j,k}$ with $\ell_\gamma(e_{j,k},m)$ and \eqref{eq_mstep_eigenvalues} is transformed into 
\begin{equation}\label{eq_cem2_m}
\begin{aligned}
& \underset{m}{\text{minimize}}    & &  
\sum_{j=1}^G T_j^{(s)} \sum_{k=1}^p \left( \log(\ell_\gamma(e_{j,k},m)) + \frac{e_{j,k}}{\ell_\gamma(e_{j,k},m)}\right),\\
& \text{subject to}      & & m>0
\end{aligned}
\end{equation}
\eqref{eq_cem2_m} is now a convex  program in $m$. Therefore \eqref{eq:cm1} is solved by the unique $m_*$ that solves \eqref{eq_cem2_m}. This implies that the CM1--step is solved by taking   $\Sigma_j^\snew = V^\snew_j E_j^* V^{\snew\prime}_j$ where
$E_j^* = \diag(\ell_\gamma(e_{j,1},m_*), \ldots, \ell_\gamma(e_{j,p},m_*)).$ Notice that uniqueness of $m_*$ implies the uniqueness of the solution to \ref{eq:cm1}. Observe that when the eigenvalues of $S_1^\snew, \allowbreak \ldots, S_G^\snew$ fulfill the eigenratio constraint, then  $E_j^* = E_{j}$ and $\Sigma_j^\snew = S_j^\snew$ for all $j=1,\ldots,G$. The latter  completes the proof.   (The result is connected to Lemma 1 in \cite{Won_Lim_Kim_Rajaratnam_2013}, by which the last part of the proof is inspired.)  

\end{proof}

Based on the previous Lemma, the constrained eigenvalues can be found by simply solving a convex one-dimensional problem. The optimal choice of the covariances  is a form of Steinian-type nonlinear shrinkage \citep[see][]{Gavish_Donoho_2014}.

\begin{lemma}\label{lemma:cm2_step}
Assume Algorithm \ref{algo:ecm}  has been run for $s$ iterations.  The vector
$\theta_2^\snew=(\pi_0^\snew,\ldots,\pi_G^\snew)^\prime$ computed in the CM2--step is the global optimal solution to  \eqref{eq:cm2}. Moreover,  $\omega_*$ exists and it is unique.
\end{lemma}
\begin{proof}
The objective function in \eqref{eq:cm2} is strictly concave and  the equality constraint is linear. Take $\theta_2^\prime, \theta_2^{\prime\prime} \in \{\theta_2:\; \pi_0+\ldots,+\pi_G=1, \; \pi_j \in [0,1] \; \forall j=0,1,\ldots, G\}$, if $\pi_0^\prime \leq \pi_0^{\prime\prime}$
then $\tau_0(\cdot, \cdot, \theta_2^{\prime}) \leq \tau_0(\cdot, \cdot, \theta_2^{\prime\prime})$. Therefore, for $\beta \in (0,1)$ 
\begin{equation*}
\sumin \tau_0(x_i, \theta_1^\snew,\; \beta\theta_2^\prime+(1-\beta)\theta_2^{\prime\prime}) %
\leq \max\left\{ 
\sumin \tau_0(x_i, \theta_1^\snew, \theta_2^\prime), 
\sumin \tau_0(x_i, \theta_1^\snew, \theta_2^{\prime\prime})
\right\},
\end{equation*}
which implies that $\sumin \tau_0(x_i, \theta_1^\snew, \theta_2)$ is quasiconvex in the optimization variable $\theta_2$. It is concluded that  Karush–Kuhn–Tucker (KKT) conditions are necessary for a global optimal solution \citep[see][]{Bertsekas_1999}. Such a  solution will be a stationary point of the Langrangean function
\begin{equation*}
H(\theta_2,h_1,h_2):= Q_2(\theta_2, \theta^\sold)  %
+ h_1 \left(1-\sum_{j=0}^G \pi_j \right) %
+ h_2 \left(n\pimax - \sumin \tau_0(x_i, \theta_1^\snew,\theta_2) \right),
\end{equation*}
where $h_1$ and $h_2$ are the dual variables. Let  $\nabla_j$ denote derivatives  with respect to the $j$ component of  $\theta_2$. Let $\theta_2^\star$ the optimal solution, then based on  KKT conditions there exists  $h_1^\star,h_2^\star$ such that the following  hold
\begin{equation} \label{eq:kkt_gradient}
\nabla_j Q_2(\theta_2^\star,\theta^\sold) - h_1^\star - h_2^\star \nabla_j\sumin \tau_0(x_i, \theta_1^\snew,\theta_2^\star) =0   \qquad \text{for all}\quad {j=0,1,\ldots,G},
\end{equation}
\begin{equation} \label{eq:kkt_complementary}
h_2^\star\left( \sumin \tau_0(x_i, \theta_1^\snew,\theta_2^\star) - n\pimax\right)=0, %
\quad h_2^\star \geq 0.
\end{equation}
First consider the case when the noise proportion constraint does not bind, that is $h_2^\star=0$. Than \eqref{eq:kkt_gradient} becomes $\Toldj/\pi_j^\star - h_1^\star = 0$ for all ${j=0,1,\ldots,G}$. Solving the latter for $\pi_j^\star$, using   the equality constraints and that  $\sum_{j=0}^G \Toldj = n$, it results that $h_1^\star = n $ and $\pi_j^\star = \Toldj /n$ for all ${j=0,1,\ldots,G}$. \\
Now assume that the noise proportion constraints binds, hence $h_2^\star>0$. Let $\pi_0^\star=\omega$ and rewrite the equality constraint as $\pi_1^\star+, \allowbreak \ldots, \allowbreak +\pi_G^\star = 1- \omega$.  Stationary points of $H(\cdot)$ satisfy  $\Toldj/\pi_j^\star - h_1^\star = 0$.  Solving the latter for $\pi_j^\star$ and  using  the equality constraints it results that $h_1^\star = \sumjG \Toldj / (1-\omega) $. Since  $\sum_{j=1}^G \Toldj = n-\Toldo$, then
\begin{equation*}
\pi_j^\star = \frac{1-\omega}{n-\Toldo}\Toldj \quad \text{for all} \quad \js.
\end{equation*}
Now the solution for $\js$ is a function of $\omega$, which can be determined by using the fact that the inequality constraints binds. Define  
\begin{equation*}
g(\omega) = \left( \sumin \frac{\omega \delta}{\omega  \delta + \frac{1-\omega }{n-\Toldo} \sumjG \Toldj \phi(x_i; \mu_j^\snew, \Sigma_j^\snew)}\right) - n\pimax.
\end{equation*}
$g(\omega)$ is bracketed on the interval $[0,1]$, in fact $g(0)=-n\pimax<0$ and $g(1)=n(1-\pimax)>0.$ Moreover $g(\omega)$ is continuous, and it can be easily verified that it's derivative is continuous and positive at any $\omega \in (0,1)$. This implies that there exists a unique $\omega_*$ such that $g(\omega_*) = 0 $. 
Setting $\pi_0^\star = \omega_*$ and replacing $\omega_*$ into $\pi_j^\star$ gives the optimal solution. We now compare the two solutions in terms of objective function, and we show that there is hierarchy between them. Define
\begin{equation*}
\dot \theta_2  =  \left(\frac{\Toldo}{n},\frac{T^\sold_1}{n}, \ldots,   \frac{T^\sold_1}{n}\right)^\prime, \qquad
\ddot \theta_2 = \left(\omega_*,\frac{1-\omega_*}{n-\Toldo}T^\sold_1 , \ldots,  \frac{1-\omega_*}{n-\Toldo}T^\sold_G\right)^\prime.
\end{equation*}
Using Wald's information inequality it can be shown that 
\[
\frac{\Toldo}{n}\log\left(\omega_*\right) + 
\sum_{j=1}^G \frac{\Toldj}{n} \log\left(\frac{1-\omega_*}{n-\Toldo}\Toldj \right) %
\;\leq  \; %
\sum_{j=0}^G \frac{\Toldj}{n} \log\left(\frac{\Toldj}{n}\right), %
\]
with the previous holding  with equality if and only if $\omega_*=\Toldo/n$. The latter implies that $Q_2(\ddot \theta_2, \theta^\sold) <  Q_2(\dot \theta_2, \theta^\sold)$ whenever $\ddot \theta_2 \neq \dot \theta_2$. Hence $\dot \theta_2$ is the global optimal solution whenever it is feasible, otherwise the global optimal solution is $\ddot \theta_2$. The latter proves that the  updating in CM2-step selects the  global optimal  solution to \eqref{eq:cm2}.
\end{proof}

\begin{theorem}\label{theorem:ecm}
Assume A0. The $\{\theta^\sold\}{s \in \N}$ produced by Algorithm \ref{algo:ecm} converges to a point $\theta_n^{\text{ecm}} \in \Theta$, and $l_{n}(\theta^{(s)})$ is increased in every step.
\end{theorem}
\begin{proof}
As consequence of Lemma \ref{lemma:cm1_step} and \ref{lemma:cm2_step},  $Q(\theta, \theta^\sold)$ is never decreased, in fact for all $s=0,1,\ldots$
\[
Q(\theta_1^\snew,\theta_2^\snew, \theta^\sold) \geq %
Q(\theta_1^\snew,\theta_2^\sold, \theta^\sold) \geq %
Q(\theta_1^\sold,\theta_2^\sold, \theta^\sold),
\]
A0 ensures existence of $\theta_n(\delta)$, and  the  convergence Theorem 4.1  in \cite{Redner_Walker_1984} holds  with $Q(\theta, \theta^{(s)})$ playing the role of their $Q(\cdot)$ function.
\end{proof}

\begin{remark}\label{remark:ecm_eat_component}
The eigenratio constraint together with the noise proportion constraint rule out the possibility that at some point along the iteration  $\Toldj=0$ for some $\js$ and updates in CM1-step are guaranteed to exists. In fact  $\Toldj=0$  means that according to  $\theta^{(s-1)}$ none of points contributes to the $j$th Gaussian component. In theory this  can only happens if the $j$th component has an infinite dispersion according to $\theta^{(s-1)}$. However, in that case  the eigenratio constraint would force all eigenvalues in  $\theta^{(s-1)}$  to diverge to $+\infty$ at the same rate so that   $\Toldj \searrow 0$  for all $\js$, which is not possible because of the noise proportion constraint. Although in theory an appropriate choice of $\theta^{(0)} \in \Theta$ should not produce such a degeneracy, it may well be that in practice this is caused because of limited numerical resolution.
Notice also that for $\delta{=}0$ the noise proportion constraint is automatically fulfilled, and this  would take the problem back to the EM algorithm for the MLE of a finite Gaussian mixture model with the additional eigenratio constraint.  Therefore Algorithm \ref{algo:ecm} would become the EM Algorithm \ref{algo:emfull} where the M-step would coincide with CM1-step of Algorithm \ref{algo:ecm} plus the usual updating for the proportion parameters: $\pi_j^\snew \gets \Toldj/n$ for all $j=0,1,\ldots,G$.
\end{remark}

\subsection{Choice of initial values and input parameters}\label{sec_choice_of_delta}

Algorithms \ref{algo:emfull} and \ref{algo:ecm} require the initial value $\theta^{(0)}$, and the input parameters $\pimax, \gamma$ and $\delta$. The initial value $\theta^{(0)}$ can be set by randomly assigning points to $G$ clusters and then computing cluster parameters. \CHANGE{ Initialization like this needs to be performed a number of times so that the solution with the largest pseudo-likelihood is selected. Implementation of the RIMLE given in the \pkg{otrimle} software of \cite{CRAN_otrimle} relies on a more refined initialization strategy which consist in the following steps.
\begin{description}
\item[Initial denoising:] for each data point compute its $k$th-nearest neighbors distance ($k$-NND), for some $k$. All points with $k$-NND larger than the $(1-\pimax)$-quantile of the 
$k$-NND are initialized as noise. The interpretation of $k$ is that  $(k-1)$, but not $k$, points close together may still be interpreted as noise/outliers, whereas $k$ such points would constitute a cluster. The default value in the \pkg{otrimle} package is $k=3$. 

\item[Initial clusters:] agglomerative hierarchical clustering based on ML criteria for Gaussian mixture models as in \cite{Fraley1998} is performed on the remaining $\lfloor n(1-\pimax)\rfloor$ regular points to find the initial clusters. The sample mean and covariance matrix of points belonging to each cluster are computed to define $\theta^{(0)}$. This step is performed based on the \texttt{hc()} function from the \pkg{mclust} package.
\end{description}
}
The constraint defining quantities $\pimax$ and $\gamma$ are regularization parameters that allow solving an otherwise ill-posed optimization problem. $\pimax$ also controls robustness because it specifies the maximum proportion of points assignable to the noise component. In order to be as robust as possible $\pimax=1/2$ is a convenient choice that guarantees maximum protection. This implements a familiar condition in robust statistics that at most half of the data should be classified as ``outliers/noise''. \CHANGE{ A choice of $\pimax$ lower than the actual noise/outlier proportion will enforce some outliers to be assigned to clusters with potentially problematic implications. Hence, unless one has prior knowledge about the contamination process, we suggested to stick to $\pimax=1/2$.}

The role of the eigenratio can be twofold. If $\gamma$ is set to a low-value, strong restrictions on clusters' shape are imposed. In this respect, the eigenratio constraint acts as a model selector. Unless one knows precisely the implications of a low choice of $\gamma$, it is suggested to use the eigenratio constraint as a regularization parameter. In fact, a large value of $\gamma$  will regularize the covariance matrices without affecting clusters' shape too much.  For example, a large $\gamma$ would allow discovering an elongated concentrated cluster along with clusters having widespread spherical scatters. \cite{Ritter_2014} contains an in-depth analysis of constraints in model-based clustering. \CHANGE{ In Section \ref{sec:experiment} we present Monte Carlo experiments where the effect of different $\gamma$ values is investigated.}

Although through the presence of the product $\pi_0\delta$ in (\ref{eq:psi}) the parameters $\pi_0$ and $\delta$ may seem confounded, they actually play a very different role in the RIMLE. $\delta$ is not treated as a model parameter to be estimated, but rather as a tuning device to enable a good robust clustering. The interpretation is that $\delta$ is the density value below which groups of observations should rather be treated as ``noise'' than as ``cluster''. This means that a larger value of $\delta$ will normally yield a larger estimate of $\pi_0$ because more observations will be classified as noise, as opposed to the intuition suggested by having the product $\pi_0\delta$ in (\ref{eq:psi}). Whether small groups of observations of a certain size and with a certain density peak should rather count as ``cluster'' or rather as ``group of outliers'' cannot be identified from the data alone, but is rather a matter of interpretation.  RIMLE may be sensitive to the choice of $\delta$, and a good choice of $\delta$ is therefore important in practice. \CHANGE{For instance, in the example of Figure \ref{fig:rimle_profiling_asynoise} it has been shown that outside a certain interval of $\delta$ values the RIMLE does not perform well.} Occasionally, subject matter knowledge may be available aiding the choice of a fixed value of $\delta$, but often such knowledge may not exist.  The OTRIMLE, a data dependent method (``optimally tuned RIMLE'') to choose $\delta$ is presented in \cite{Coretto_Hennig_2014_comparison}. The basic idea is to find a $\delta$ that optimizes a weighted Kolmogorov-type distance measure between the Mahalanobis distances of all objects to their corresponding cluster centers and the $\chi^2$-distribution, which the Mahalanobis distances should follow if the clusters were indeed Gaussian. \CHANGE{The current implementation of the OTRIMLE in the \pkg{otrimle} package selects the best RIMLE solution computed with algorithm \ref{algo:ecm} on a selected grid of 50 $\log(\delta)$ values. The default grid includes  $\log(\delta)=-\infty$ so that a pure Gaussian mixture is always included in the competition (see the \pkg{otrimle} manual for more details).}


\section{Breakdown robustness of the RIMLE}\label{sec_breakdown}

Although robustness results for some clustering methods can be found in the literature, robustness theory in cluster analysis remains a tricky issue. 
Some work exists on breakdown points  \citep{GarciaEscudero_Gordaliza_1999,Hennig_2004,Gallegos_Ritter_2005},  addressing whether parameters can diverge to infinity (or zero, for covariance eigenvalues and mixture proportions) under small modifications of the data. An addition breakdown point of $r/(n+r)$ means that $r$, but not $r-1$, points can be added to a data set of size $n$ so that at least one of the parameters ``breaks down'' in the above sense.

It is well known \citep{GarciaEscudero_Gordaliza_1999,Hennig_2008}, assuming the fitted number of  clusters to be fixed, that robustness in cluster analysis has to be data dependent, for the following reasons:
\begin{itemize}
\item If there are two not well separated clusters in the data set, a very  small amount of ``contamination'' can merge them, freeing up a cluster to fit outliers converging to infinity.
\item Very small clusters cannot be robust because a group of outlying points can  legitimately be seen as a ``cluster'' and will compete for fit  with non-outlying clusters of the same size. 
Noise component-based and trimming methods are prone to trimming whole clusters if they are small enough. 
\end{itemize}
Therefore, all nontrivial breakdown results (i.e., with breakdown point larger than the minimum $1/(n+1)$) in clustering require a condition that makes sure that the clusters in the data set are strongly clustered in some sense, which usually means that the clusters are homogeneous and strongly separated. 

The theory for the RIMLE given here 
generalizes the argument given in \cite{Hennig_2004}, Theorem 4.11, to the 
multivariate setup. 
We consider fixed datasets $\underline{x_n}=(x_1, x_2, \ldots, x_n)$ and
sequences of estimators $(E_{n})_{n\in\N}$ mapping 
observations from $(\R^p)^n$ to $\Theta$. Denote 
the components of $E_n(\underline{x_n})$ by\\ 
$(\pi_{En0},\pi_{En1},\ldots,\pi_{EnG}, \mu_{En1},\ldots,\mu_{EnG} ,\Sigma_{En1},\ldots,\Sigma_{EnG})$, $G$ being the number of mixture components as usually. 

The following assumption in the definition of the breakdown point
makes sure that $E_n(\underline{x_n})$ indeed parametrizes
$G$ different mixture components; if there was a mixture component with 
proportion zero or two equal ones, one mixture component would be free to 
be driven to breakdown. 
\begin{description}
\item[A4] For
$j=1,\ldots,G:$ $\pi_{Enj}>0$, and all $(\mu_{Enj},\Sigma_{Enj})$ are pairwise
different.
\end{description}
\begin{definition}\label{def:breakdown} 
Assume that $(E_n)_{n\in\N}$ and $\underline{x_n}$ fulfil A4. Then,
\begin{eqnarray*}
  B(E_n,\underline{x_n})& = & \min_r \left\{\frac{r}{n+r}:\ \exists 
1\le j \le G \right. \\
 && \forall D=[\pi_{min},1]\times C \mbox{ for which }\pi_{min}>0, 
C\subset \R^p \times {\cal S}_p \mbox{ compact} \\
 && \exists \underline{x_{n+r}}=(x_1, x_2, \ldots, x_n, x_{n+1},\ldots x_{n+r})
\mbox{ so that for }\\ 
&& \left.E_{n+g}(\underline{x_{n+g}}):\
(\pi_{E(n+g)j},\mu_{E(n+g)j}, \Sigma_{E(n+g)j})\not\in D\right\},
\end{eqnarray*}
where ${\cal S}_p$ is the set of all positive definite real valued
$p\times p$-matrices,
is called the {\bf breakdown point} of $E_n$ at dataset $\underline{x_n}$.
\end{definition}

Denote the sequence of RIMLE estimators
defined in \eqref{eq:rimle} as $(\theta_{mH})_{m\in\N}$, write
$l_{mH}(\underline{x_m},\theta)$ for $l_m(\theta)$ with any $m\in \N$ and
number of components $H$ in \eqref{eq:ln}, 
$l_{mH}^o=l_{mH}(\underline{x_m},\theta_{mH}(\underline{x_m}))$.
Let $\theta^*= \theta_{nG}(\underline{x_n})$ for the specific $\underline{x_n}$ and $G$ 
considered here.
Components of $\theta^*$ and later $\theta^+$ are denoted with upper index
``$*$'' and ``$+$'', respectively. For $j=1,\ldots,G$, let 
$\phi_j^*(x)=\phi(x,\mu_j^*,\Sigma_j^*)$, same with upper index ``$+$''. 
Assume $\delta>0$ fixed throughout this section.  
We start with a straightforward extension of Lemma \ref{lemma_finite_n_eigen_2}.
\begin{lemma} \label{lemma:unifxeigen}
Assume A0 for $\underline{x_n}$.  If $(\theta_m)_{m \in N}$ is any sequence in $\Theta$ so that  for some $\js$ and $\ks$,  $\lambda_{k,j,m} \searrow 0$  as $m \to \infty$. For $\underline{x_{n+r}}=(x_1, x_2, \ldots,\allowbreak x_n, x_{n+1},\ldots x_{n+r})$: $\sup_{(x_{n+1},\ldots,x_{n+r})\in(\R^p)^r} l_{(n+r)G}(\underline{x_{n+r}},\theta_m) \to -\infty.$
\end{lemma}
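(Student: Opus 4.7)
The plan is to reduce the claim to Lemma \ref{lemma_finite_n_eigen_2} plus a uniform upper bound for the $r$ added points. First I would decompose
\begin{equation*}
l_{(n+r)G}(\underline{x_{n+r}}, \theta_m) \;=\; \frac{n}{n+r}\, l_{nG}(\underline{x_n}, \theta_m) + \frac{1}{n+r}\sum_{i=n+1}^{n+r}\log\psi_\delta(x_i, \theta_m),
\end{equation*}
so that the supremum over $(x_{n+1},\ldots,x_{n+r})$ acts only on the second summand. Since A0 is assumed for $\underline{x_n}$ and $\theta_m\in\Theta$, Lemma \ref{lemma_finite_n_eigen_2} directly yields $l_{nG}(\underline{x_n},\theta_m)\to -\infty$.

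Next I would control the adversarial sum by the universal bound from Remark \ref{remark:phi}, $\psi_\delta(x,\theta_m)\leq \pi_{0,m}\delta+(2\pi\lmin(\theta_m))^{-p/2}$, which does not depend on $x$. Taking sup and log gives
\begin{equation*}
\sup_{(x_{n+1},\ldots,x_{n+r})}\sum_{i=n+1}^{n+r}\log\psi_\delta(x_i, \theta_m) \;\le\; r\log\bigl[\pi_{0,m}\delta+(2\pi\lmin(\theta_m))^{-p/2}\bigr],
\end{equation*}
which, as $\lmin(\theta_m)\searrow 0$, grows at most at rate $\tfrac{rp}{2}\log(1/\lmin(\theta_m))+O(1)$.

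The main obstacle is that the first summand must diverge to $-\infty$ fast enough to absorb this logarithmic growth; merely $l_{nG}(\underline{x_n},\theta_m)\to -\infty$ is not in itself sufficient, a rate is needed. I would extract a quantitative rate from the proof of Lemma \ref{lemma_finite_n_eigen_2} as follows. The noise constraint encoded in $\Theta$ combined with A0 forces $\pi_{0,m}\delta = O(\lmin(\theta_m))$: otherwise, for non-peak points in $\underline{x_n}$ one would have $\tau_0(x_i,\theta_m)\to 1$ on more than $\lceil n\pimax\rceil$ points, contradicting the noise constraint. Once the common shrinking ``peak radius'' (chosen as in Remark \ref{remark:phi} so that $\phi_j$ outside the balls is $o(\lmin(\theta_m)^{q})$ for any fixed $q$) falls below the minimal pairwise distance among the distinct entries of $\underline{x_n}$, at most $G$ of those distinct entries lie in any such ball; the remaining $\#(\underline{x_n})-G>\lceil n\pimax\rceil$ distinct points each contribute $\log\psi_\delta(x_i,\theta_m)\leq \log\lmin(\theta_m)+O(1)$. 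Plugging these bounds in and combining with the bound for the peaked originals and the $r$ adversarial points gives an upper bound for the whole sup of the form $\bigl[\#(\underline{x_n})-G\bigr]\log\lmin(\theta_m) + O\bigl(\log(1/\lmin(\theta_m))\bigr)+O(1)$, whose leading behaviour drives the supremum to $-\infty$ via A0, completing the proof.
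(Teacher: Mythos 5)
Your overall reduction --- split off the $r$ added terms, bound each by the uniform bound $\psi_\delta(x,\theta_m)\le\pi_{0,m}\delta+(2\pi\lmin(\theta_m))^{-p/2}$, and let the original $n$ points do the damage --- is exactly the route the paper takes: its proof is a one-line appeal to Lemma \ref{lemma_finite_n_eigen_2}, noting that the added observations only contribute further positive terms to the noise-constraint sum \eqref{eq_npc_small_eigenvalues}. You are also right, and more explicit than the paper, that this only works if $l_{nG}(\underline{x_n},\theta_m)$ diverges fast enough to absorb an adversarial gain of order $\tfrac{rp}{2}\log(1/\lmin(\theta_m))$.

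The gap is in the rate you extract. You claim $\pi_{0,m}\delta=O(\lmin(\theta_m))$ and conclude that each non-peak original point contributes at most $\log\lmin(\theta_m)+O(1)$. Writing $n_G$ for the number of original observations sitting at the degenerating peaks, your total upper bound is then of the form $\bigl[(n_G+r)\tfrac{p}{2}-(n-n_G)\bigr]\log(1/\lmin(\theta_m))+O(1)$: the favourable and unfavourable contributions are of the \emph{same} logarithmic order, and A0 controls neither $n_G$ nor $r$ nor $p$, so the sign of the leading coefficient is not determined (e.g.\ $p=7$, $r$ moderate, most points at peaks makes it positive). So ``the leading behaviour drives the supremum to $-\infty$ via A0'' does not follow. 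The rate that the proof of Lemma \ref{lemma_finite_n_eigen_2} actually establishes is superpolynomial: the noise constraint forces $\pi_{0,m}=O\bigl(\max_j\phi(x_i;\mu_{j,m},\Sigma_{j,m})\bigr)$ for non-peak $x_i$, and by Remark \ref{remark:phi} these $\phi$-values are $o(\lmin(\theta_m)^q)$ for \emph{every} fixed $q$. Hence each non-peak point contributes at most $q\log\lmin(\theta_m)+O(1)$ for arbitrary $q$; taking $q>(n+r)p/2$, a single non-peak point (guaranteed by A0) already overwhelms the polynomial blow-up from the peak terms and the $r$ adversarial terms. With that substitution your argument closes. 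A secondary point, which your sketch shares with the paper's one-liner: the contradiction forcing $\pi_{0,m}$ to vanish uses the constraint normalized by $n+r$, and the added points may be placed at the peaks so that they contribute nothing to that sum; strictly this step needs $\#(\underline{x_n})-G>(n+r)\pimax$, slightly more than A0 as stated.
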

\begin{proof}
The proof of Lemma 
\ref{lemma_finite_n_eigen_2} still applies 
because adding further observations only adds further positive terms to the sum
in \eqref{eq_npc_small_eigenvalues}.
\end{proof}

\begin{corollary}\label{corollary:phimax} Assume that 
$\underline{x_n}$ is a fixed dataset fulfilling A0 and A4 for $E_n=\theta_{nG}$.
Then there is a $\lambda_0>0$ bounding from below all $\lambda_{min}$ for 
$\theta=\theta_{(n+r)G}(\underline{x_{n+r}})$ where 
$\underline{x_{n+r}}=(x_1, x_2, \ldots, x_n, x_{n+1},\ldots x_{n+r})$ for any $(x_{n+1},\ldots,x_{n+r})\in(\R^p)^r$. 
Consequently $\phi_{max}=(2\pi)^{-\frac{p}{2}}\lambda_0^{-\frac{p}{2}}$ is an upper 
bound for all $\phi(x;\mu,\Sigma)$ with $(\mu,\Sigma)$ occurring as component
parameters in any such $\theta$.
\end{corollary}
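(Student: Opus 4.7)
The plan is to exhibit a parameter vector $\theta^+$ that lies in the constrained parameter space of every extended dataset $\underline{x_{n+r}}$, uniformly in the added points, and whose pseudo-log-likelihood is bounded from below by a constant depending only on $\underline{x_n}$ and $r$. Combined with Lemma \ref{lemma:unifxeigen}, maximality of $\theta_{(n+r)G}(\underline{x_{n+r}})$ then forces $\lmin \geq \lambda_0 > 0$ uniformly, and \eqref{eq:phi_max} delivers the stated upper bound on $\phi$. To construct $\theta^+$, I would start from $\theta^* = \theta_{nG}(\underline{x_n})$, which exists by Theorem \ref{th:finite_sample_existence}, keep the Gaussian parameters $(\mu_j^*, \Sigma_j^*)_{j=1}^G$ unchanged (so the eigenratio constraint is automatically inherited), and reweight as $\pi_j^+ = \alpha \pi_j^*$ for $j = 1, \ldots, G$ and $\pi_0^+ = 1 - \alpha(1-\pi_0^*)$ with $\alpha \in [1,\, 1/(1-\pi_0^*))$.

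A direct differentiation gives $\frac{d}{d\alpha}\tau_0(x,\theta^+(\alpha)) = -\delta\,\eta(x,\theta^*)/[\pi_0^+(\alpha)\delta+\alpha\,\eta(x,\theta^*)]^2 \leq 0$, so at every fixed $x$ the pseudo-posterior is non-increasing in $\alpha$, equal to $\tau_0(x,\theta^*)$ at $\alpha = 1$ and vanishing as $\alpha \to 1/(1-\pi_0^*)$. Since $\theta^* \in \Theta$ gives $\sum_{i=1}^n \tau_0(x_i,\theta^*) \leq n\pimax$, continuity yields an $\alpha_r \in [1,\, 1/(1-\pi_0^*))$, depending only on $\underline{x_n}$ and $r$, such that $\sum_{i=1}^n \tau_0(x_i, \theta^+(\alpha_r)) \leq n\pimax - r(1-\pimax)$ whenever $r < n\pimax/(1-\pimax)$. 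Bounding $\tau_0 \leq 1$ at each added point, the noise constraint holds for $\theta^+(\alpha_r)$ on $\underline{x_{n+r}}$ uniformly over added points, so $\theta^+(\alpha_r) \in \Theta$ there. The uniform likelihood lower bound is then immediate: $\psi_\delta(x, \theta^+(\alpha_r)) \geq \pi_0^+(\alpha_r)\,\delta$ at every $x$ gives $l_{(n+r)G}(\underline{x_{n+r}}, \theta^+(\alpha_r)) \geq \log(\pi_0^+(\alpha_r)\delta) =: M_0$, with $M_0$ independent of the added points. By maximality, $l_{(n+r)G}^o \geq M_0$, and Lemma \ref{lemma:unifxeigen} contrapositively rules out any sequence of added points along which $\lmin(\theta_{(n+r)G}(\underline{x_{n+r}})) \searrow 0$ (else $\sup l_{(n+r)G}(\underline{x_{n+r}},\theta_m) \to -\infty$, contradicting $M_0 > -\infty$), giving the required $\lambda_0 > 0$.

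The main technical hurdle is arranging $\alpha_r$ so that $\theta^+(\alpha_r)$ is simultaneously (i) feasible against adversarial added points for which $\tau_0(x_{n+j},\cdot) \to 1$, and (ii) keeps $\pi_0^+(\alpha_r)$ bounded away from zero, so that $M_0$ remains finite. The monotonicity of $\tau_0$ in $\alpha$ and the slack left by $\theta^*$ in the noise constraint reconcile both requirements whenever $r < n\pimax/(1-\pimax)$, which is the natural window for breakdown analyses; for larger $r$ the argument would need to be modified, for instance by allowing the Gaussian centres to move in order to accommodate heavy contamination.
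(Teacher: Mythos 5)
Your argument is essentially the paper's: both proofs exhibit a fixed competitor whose pseudo-log-likelihood on $\underline{x_{n+r}}$ is bounded below uniformly in the added points (each added point contributing at least $\log(\pi_0\delta)$ through the improper component) and then derive a contradiction with Lemma \ref{lemma:unifxeigen}. The difference lies only in the choice of competitor. The paper evaluates $l_{(n+r)G}$ at $\theta^*$ itself (inequality \eqref{eq:lnrbound}) and does not verify that $\theta^*$ still satisfies the noise constraint \eqref{eq_noise_constraint} recomputed over the $n+r$ points; your reweighted $\theta^+(\alpha_r)$ addresses exactly this (and your derivative formula for $\tau_0$ in $\alpha$ checks out), which is a legitimate refinement, since $\frac{1}{n+r}(n\pimax+r)>\pimax$ whenever $\pimax<1$, so feasibility of $\theta^*$ against adversarial added points is genuinely not automatic. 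Two caveats on your version: (i) it only covers $r<n\pimax/(1-\pimax)$, a restriction absent from the corollary's statement, although this is precisely the regime in which Theorem \ref{theorem:breakdown} operates and in which condition \eqref{eq:breakdowncon2} plays the analogous feasibility role for the modified estimator $\hat\theta$; (ii) when $\pi_0^*=0$ your interval $[1,\,1/(1-\pi_0^*))$ collapses and $M_0=\log(\pi_0^+\delta)=-\infty$, so that case needs a separate competitor, e.g.\ $\pi_0^+=\eps$ small and $\pi_j^+=(1-\eps)\pi_j^*$, which is feasible for small enough $\eps$ because $\eta(x_i,\theta^*)>0$ at every original point and the constraint then has full slack. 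Neither caveat touches the core of the argument, which matches the paper's.
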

\begin{proof} 
\begin{equation} \label{eq:lnrbound}
\mbox{Observe }  l_{(n+r)G}(\underline{x_{n+r}},\theta^*)\ge \frac{1}{n+r} \left(
\sum_{i=1}^{n} \log \psi_\delta(x_i, \theta^*)+r\log(\pi_0^*\delta)\right)>-\infty.
\end{equation}
If the Corollary was wrong, it would be possible to construct a 
sequence $(\theta_m)_{m \in N}$ with $\lambda_{k,j,m} \searrow 0$ for some $\js$ and $\ks$ so that each $\theta_m=\theta_{(n+r)G}(\underline{x_{n+r}})$ for an admissible
$\underline{x_{n+r}}$. But \eqref{eq:lnrbound} implies that there is a lower
bound for $l_{(n+r)G}(\underline{x_{n+r}},\theta_m)$, contradicting Lemma 
\ref{lemma:unifxeigen}.
\end{proof}

The following theorem gives conditions under which the RIMLE estimator
is breakdown robust against adding $r$ observation to $\underline{x_n}$.
\eqref{eq:breakdowncon1} states that the dataset needs to be fitted by $G$
Gaussian components considerably better than by $G-1$ components, 
because otherwise the remaining mixture component would be available
for fitting the added observations without doing much damage to the original
fit. \eqref{eq:breakdowncon2} makes sure that the noise proportion in
$\underline{x_n}$ is low enough that the added observations can still be fitted
by the noise component without exceeding $\pi_{max}$.
\begin{theorem} \label{theorem:breakdown}
Assume that $\underline{x_n}$ fulfils A0 and A4 for 
$E_n=\theta_{nG}$. If
\begin{eqnarray}
  l_{n(G-1)}^o & < & \sum_{i=1}^n \log\left(\sum_{j=1}^G \pi_j^*
\phi_j^*(x_i)+\left(\pi_0^*+\frac{r}{n}\right)\delta\right)+
\nonumber\\
&& r\log\left(\left(\pi_0^*+\frac{r}{n}\right)\delta\right)+
(n+r)\log\frac{n}{n+r}-r\log \phi_{max}, \label{eq:breakdowncon1}
\end{eqnarray}
$\phi_{max}$ defined in Corollary \ref{corollary:phimax}, and 
\begin{equation}
  \label{eq:breakdowncon2}
\frac{1}{n+r}\left(\sum_{i=1}^n \frac{(n\pi_0^*+r)\delta}
{(n+r)\psi_\delta(x_i,\theta^*)} +r\right)<\pi_{max},
\end{equation}
then  $B(\theta_{nG},\underline{x_n})>\frac{r}{n+r}.$
\end{theorem}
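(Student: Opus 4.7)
The plan is proof by contradiction. Suppose $B(\theta_{nG},\underline{x_n})\le r/(n+r)$: then some component $j\in\{1,\ldots,G\}$ of $\theta^{+,(m)}:=\theta_{(n+r)G}(\underline{x_{n+r}^{(m)}})$ can be driven outside every product $[\pi_{min},1]\times C$ along a sequence of contaminations $(x_{n+1}^{(m)},\ldots,x_{n+r}^{(m)})$. Corollary \ref{corollary:phimax} rules out $\lmin(\Sigma_j^{+,(m)})\to 0$, so only three breakdown modes remain: (i) $\pi_j^{+,(m)}\to 0$; (ii) $\|\mu_j^{+,(m)}\|\to\infty$; or (iii) $\lambda_{\max}(\Sigma_j^{+,(m)})\to\infty$, which by the eigenratio constraint forces every Gaussian density of $\theta^{+,(m)}$ to vanish uniformly. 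The contradiction will come from exhibiting a feasible competitor $\tilde\theta\in\Theta$ whose pseudo-loglikelihood on $\underline{x_{n+r}^{(m)}}$ eventually beats that of $\theta^{+,(m)}$, violating its maximality.

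The natural competitor keeps the Gaussian part of $\theta^*=\theta_{nG}(\underline{x_n})$ intact and channels the extra $r$ probability units into the noise: $\tilde\pi_0=(n\pi_0^*+r)/(n+r)$, $\tilde\pi_j=(n/(n+r))\pi_j^*$, $\tilde\mu_j=\mu_j^*$, $\tilde\Sigma_j=\Sigma_j^*$. A direct computation gives $\psi_\delta(x_i,\tilde\theta)=(n\psi_\delta(x_i,\theta^*)+r\delta)/(n+r)$ for $i\le n$ and $\psi_\delta(x_{n+k},\tilde\theta)\ge\tilde\pi_0\delta$ for the added points, so summing logs produces a lower bound on $(n+r)\,l_{(n+r)G}(\underline{x_{n+r}^{(m)}},\tilde\theta)$ equal to the right-hand side of \eqref{eq:breakdowncon1} except for the $-r\log\phi_{max}$ correction. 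The eigenratio constraint on $\tilde\theta$ is inherited from $\theta^*$; the noise constraint $(n+r)^{-1}\sum_{i=1}^{n+r}\tau_0(x_i,\tilde\theta)\le\pimax$ is precisely what condition \eqref{eq:breakdowncon2} is tailored to guarantee, so $\tilde\theta$ is feasible for the $(n+r)$-point problem.

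To upper bound $l_{(n+r)G}(\underline{x_{n+r}^{(m)}},\theta^{+,(m)})$ I would split the sum at index $n$. Corollary \ref{corollary:phimax} yields $\psi_\delta(x_{n+k},\theta^{+,(m)})\le\phi_{max}$, contributing at most $r\log\phi_{max}$ from the contamination. For the original points the key observation is that in each of the three breakdown modes the pointwise contribution $\pi_j^{+,(m)}\phi_j^{+,(m)}(x_i)$ of the defective component tends to zero on the fixed finite set $\underline{x_n}$ (in (ii) because $\mu_j^{+,(m)}$ escapes a bounded set containing $\underline{x_n}$, in (iii) because every Gaussian density vanishes, in (i) directly). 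Reallocating this vanishing mass $\pi_j^{+,(m)}$ onto $\pi_0^{+,(m)}$ yields a $(G-1)$-component parameter $\hat\theta^{-,(m)}$ satisfying $\psi_\delta(x_i,\hat\theta^{-,(m)})=\psi_\delta(x_i,\theta^{+,(m)})+\pi_j^{+,(m)}(\delta-\phi_j^{+,(m)}(x_i))$; the optimality of $\theta_{n(G-1)}(\underline{x_n})$ then gives $\sum_{i=1}^n\log\psi_\delta(x_i,\theta^{+,(m)})\le n\,l_{n(G-1)}^o+o(1)$ along the breakdown sequence. Combining with the lower bound for $\tilde\theta$ and invoking \eqref{eq:breakdowncon1} delivers $l_{(n+r)G}(\underline{x_{n+r}^{(m)}},\theta^{+,(m)})<l_{(n+r)G}(\underline{x_{n+r}^{(m)}},\tilde\theta)$ for large $m$, contradicting the maximality of $\theta^{+,(m)}$.

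The main obstacle is the case (ii) bookkeeping: when $\pi_j^{+,(m)}$ stays bounded away from zero and the eigenvalues of $\Sigma_j^{+,(m)}$ remain in a compact set while $\mu_j^{+,(m)}$ drifts off, possibly shadowing some of the contamination points. One must still argue that component $j$ does not help the fit on the fixed sample $\underline{x_n}$ and that the reallocated $(G-1)$-component parameter $\hat\theta^{-,(m)}$ actually lies in the $(G-1)$-component feasible set -- in particular that its noise constraint is preserved when the mass $\pi_j^{+,(m)}$ is added to $\pi_0$. Everything else is routine combination of the two bounds with the optimality of $\theta^{+,(m)}$.
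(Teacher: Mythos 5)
Your proposal follows essentially the same route as the paper's proof: the same competitor $\hat\theta$ with $\hat\pi_0=(n\pi_0^*+r)/(n+r)$ made feasible by \eqref{eq:breakdowncon2}, the same split of the pseudo-loglikelihood at index $n$ with Corollary \ref{corollary:phimax} bounding the contamination terms by $r\log\phi_{max}$, and the same reduction of the original-sample contribution to $l_{n(G-1)}^o$ once the escaping component's density vanishes on $\underline{x_n}$. The loose end you flag --- feasibility (in particular the noise constraint) of the reduced $(G-1)$-component parameter --- is likewise left implicit in the paper, which simply drops the escaping components' mass and bounds the remaining sum by $\max_{H<G}l_{nH}^o$.
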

\begin{proof}
For $\underline{x_{n+r}}=(x_1,\ldots x_{n+r})$, let 
$\theta^+=\theta_{(n+r)G}(\underline{x_{n+r}})$. Let $H<G$. Then,
\begin{displaymath}
  l_{(n+r)G}^o\le\sum_{i=1}^n\log\left(\sum_{j=1}^{H}
  \pi_j^+\phi_j^+(x_i)+\sum_{j=H+1}^G \pi_j^+\phi_j^+(x_i)+\pi_0^+\delta\right)+r\log \phi_{max}.
\end{displaymath}
Assume w.l.o.g. that the parameter estimators of the
mixture components $H+1,\ldots,G$ leave a compact set $D$ 
of the form $D=[\pi_{min},1]\times C,\
C\subset \R^p \times {\cal S}_p$ compact, $\pi_{min}>0$.   
Then there exists $\phi_{min}$ bounding $\phi^+_j(x_i)$ from below for
$j=1,\ldots,H$ and $i=1,\ldots,n$, so 
$\sum_{j=1}^H\pi_j^+\phi_j^+(x_i)\ge H\pi_{min}\phi_{min}$.

Consider sequences $(\theta_m)_{m\in \N}\in \Theta$ with 
$l_{(n+r)G}(\underline{x_{n+r}},\theta_m)\to l_{(n+r)G}^o$ and leaving any $D$ for 
$j=H+1\ldots,G$, i.e., 
$\|\mu_{mj}\|\to\infty$ or $\lambda_{k,j,m}\to\infty$ or $\pi_{mj}\to 0$, but
with all $\lambda_{k,j,m}\ge\lambda_0$ as established in Corollary 
\ref{corollary:phimax}. Observe that for such sequences
$\sum_{j=H+1}^G \pi_{mj}\phi_{mj}(x_i)$ becomes arbitrarily small for
$i=1,\ldots,n$. Thus, for arbitrary $\epsilon>0$ and
$D$ large enough:
\begin{eqnarray*}
 l_{(n+r)G}^o & \le & \sum_{i=1}^n\log\left(\sum_{j=1}^H
  \pi_j^+\phi_j^+(x_i)+\pi_0^+\delta\right)+r\log \phi_{max}+\epsilon 
 \\
& \le & \max_{H<G} l_{nH}^o+r\log \phi_{max}+\epsilon
\le l_{n(G-1)}^o+r\log \phi_{max}+\epsilon.
\end{eqnarray*}
But a potential estimator $\hat\theta$ could be defined by
$\hat\pi_0=\frac{n\pi_0^*+r}{n+r},\ \hat\pi_j=\frac{n}{n+r}\pi_j^*,\
\hat \mu_j=\mu_j^*,\ \hat\Sigma_j=\Sigma_j^*,\ j=1,\ldots,G.$ 
Note that $\hat\theta\in \Theta$ because of \eqref{eq:breakdowncon2}.
Therefore,
\begin{eqnarray*}
 l_{(n+r)G}^o &\ge & \sum_{i=1}^n\log\left(\sum_{j=1}^G \pi_j^*
   \phi_j^*(x_i)+\left(\pi_0^*+\frac{r}{n}\right)\delta\right) \\ 
   && +r\log\left[\left(\pi_0^*+\frac{r}{n}\right)\delta\right]+(n+r)\log\frac{n}{n+r} \\
 \Rightarrow  
 l_{n(G-1)}^o & \ge & \sum_{i=1}^n\log\left(\sum_{j=1}^G \pi_j^*
   \phi_j^*(x_i)+\left(\pi_0^*+\frac{r}{n}\right)\delta\right) \\ 
   && +r\log\left[\left(\pi_0^*+\frac{r}{n}\right)\delta\right]+
(n+r)\log\frac{n}{n+r}-r\log
   \phi_{max}-\epsilon. 
\end{eqnarray*}
This contradicts \eqref{eq:breakdowncon1} by $\epsilon\to 0$.
\end{proof}



\CHANGE{

\section{Numerical experiments}\label{sec:experiment}

In this section, we perform Monte Carlo experiments to compare robust clustering methods on the two sampling designs introduced in Section \ref{sec_data}. There is already a comprehensive simulation study involving OTRIMLE and competitors in \cite{Coretto_Hennig_2014_comparison}, so here we use different setups. Note though that the algorithm \ref{algo:ecm} introduced here is different from the one used in \cite{Coretto_Hennig_2014_comparison} and in our view preferable. Below, apart from involving competing methods from the literature, the two OTRIMLE algorithms are compared.

The AsyNoise design of Figure \ref{fig:dataset_asynoise}  generates $G=5$ clusters in $p=20$ dimensions and an expected noise proportion of 33\%. 
The five clusters are generated from a mixture of  t-distributions with parameters given in Table \ref{tab:pars_asynoise}.  The five clusters show a combination of structures that are often difficult to handle together. Some of them are not well separated, they are of different size, and although they are all elliptically shaped, there are strong differences in cluster scatters, and deviations from normality. The noise originates from a distribution obtained as the product of two independent one-dimensional uniform distributions with support on the interval $[-25, 25]$, and $18$ independent one-dimensional  $\chi^2$-distributions with 1 degree of freedom. The first and the third marginal are distributed uniformly, producing background noise on both clustered and non-clustered dimensions, and the $\chi^2$-distribution adds a strong dose of asymmetry. 

\begin{table}[!t]
\centering
\caption{Parameters of the AsyNoise sampling design. Let $\pi$ and $\nu$ be the expected proportion and the degrees of freedom. $m_1,v_1$ and $m_2,v_2$ are the mean parameters ($m$) and variance parameters ($v$) along dimensions 1 and 2 respectively. $c_{1,2}$ denotes the covariance between marginals 1 and 2. All remaining variances are set equal to 1, while all remaining mean and covariance parameters are set equal to 0. }
\label{tab:pars_asynoise}
\begin{tabular}{lrrrrr}
	\toprule
	Parameter                                  &     \multicolumn{5}{c}{Cluster}      \\
	                                           &   1   &   2   &  3   &   4   &   5   \\
	\cmidrule{2-6}
    $\pi$                                    & 10.05\% & 20.10\% & 6.70\% & 10.05\% & 20.10\% \\
	$\nu$                                    &  10   &  11   &  12  &  13   &  14   \\
	$m_1$                                      &   0   &   7   &  5   &  -11  &  -7   \\
	$m_2$                                      &   3   &   1   &  9   &  11   &   5   \\
	$v_1$                                      &   1   &   2   &  2   &  0.5  &  2.5  \\
	$v_2$                                      &   1   &   2   &  2   &  0.5  &  2.5  \\
	$c_{12}$                                   &  0.5  & -1.5  & 1.3  &   0   &   0   \\
	\bottomrule
\end{tabular}
\end{table}
\begin{table}[!t]
\centering
\caption{Parameters of the GEM sampling design. Let $\pi$ be the expected proportion. $m$ is the mean parameter constant across all marginals for the same cluster. Each clusters has unit variance across all marginals and correlation matrix given by $C(\rho)$.}
\label{tab:pars_gem}
\begin{tabular}{lrr}
	\toprule
	Parameter                  & \multicolumn{2}{c}{Cluster} \\
	                           &      1 &                  2 \\
	\cmidrule{2-3}
    $\pi$                    & 29.4\% &             68.6\% \\
	$m$                      &      0 &                  4 \\
	$\rho$                   &   0.99 &                  0 \\
	\bottomrule
\end{tabular}
\end{table}

The second sampling design is referred to as GEM (see Figure \ref{fig:dataset_gem}), which stands for Gross Error Model. In this case, the sampling design is a mixture of two Gaussian distributions in $p=20$ dimensions, with the addition of a few potential outliers. In this design, the first cluster has strongly  correlated marginals, whereas the second one is spherical, and this produces a large discrepancy between the clusters' shapes. Define the $p{\times}p$ correlation matrix  $C(\rho):=(\rho^{|l-k|})_{l,k}$  for $l,k=1,\ldots, p$ (also called AR(1) correlation model). The parameters of the GEM design are specified in Table \ref{tab:pars_gem}.  An expected $2\%$ of points are generated from a 20-dimensional t-distribution with 3 degrees of freedom, centered at  $(0,0, -7, \ldots, -7)^{\prime}$, with unit variances and correlation matrix $C(0.9999)$. This produces a few points far from both clusters, although these outliers are not extremely separated from the regular data. While non-robust methods can cope with weakly separated outliers at the expense of large estimation bias, some robust methods capable of handling extreme outliers might get in trouble if the separation gap between regular and nonregular points is modest. 

In this experiment ML for Gaussian mixtures with uniform noise and TCLUST are compared with the RIMLE optimally tuned according to the OTRIMLE method introduced in \cite{Coretto_Hennig_2014_comparison}. Methods under comparison are set up as follow:
\begin{description}
\item[OtrimleECM:] RIMLE is computed based on the ECM algorithm \ref{algo:ecm} on a grid of 50 $\log(\delta))$ values as described in Section \ref{sec_choice_of_delta}. The OTRIMLE criterion proposed in \cite{Coretto_Hennig_2014_comparison} selects the best solution. 
The input parameter $\pimax$ is always set to the conventional 50\%. The eigenratio constraint is varied between the strongest restriction ($\gamma=0$), and no restriction at all ($\gamma=+\infty$). In particular, $\log_{10}{(\gamma)}=\{0, 0.5, 1, 2, 3, 6, +\infty$\}. The initial partition is computed as described in Section \ref{sec_choice_of_delta}. OtrimleECM is computed using the \pkg{otrimle} package of \cite{CRAN_otrimle}.

\item[OtrimleAEM:] RIMLE is computed using the approximate EM-algorithm introduced in \cite{Coretto_Hennig_2014_comparison}. Both $\pimax$ and $\gamma$ are set for OtrimleECM as well as initial values. Software for OtrimleAEM is available as part of the supplementary materials in \cite{Coretto_Hennig_2014_comparison}. 

\item[TclustOracle:] TCLUST with trimming rate set to the true underlying noise proportion. Eigenratio constraint is also treated as for OtrimleECM. TclustOracle is computed using the \pkg{tclust} package of \cite{Fritz_etal_2012} which does not allow the user to choose an initial partition. TCLUST initialization is random, and we increased the default number of random starts to the sample size. Default maximum number of iterations is also increased to 500 because several convergence problems were recorded. 

\item[TclustFix:]same as TclustOracle but with trimming rate fixed to a low 5\% for the GEM design, and a high 50\% for the AsyNoise design. We tried to run the ctlcurves-tool of the \pkg{tcluts}  package on several data sets without a clear indication.  

\item[MCLUST:] ML for Gaussian mixtures with uniform noise as implemented in the \pkg{mclust} package of \cite{Mclust_Software_Manual_2012}. Regularization of the covariance matrices is done by choosing an appropriate covariance parameterization based on the BIC (Bayesian Information Criterion). Mclust requires noise initialization, and this is initialized as for the OtrimleECM. Note that the \pkg{otrimle} package uses \pkg{mclust} initialization for the regular points, hence OtrimleECM and MCLUST both start from the same partition. 
\end{description}
Although we considered DBSCAN in Section \ref{sec_data}, it is not considered here because its results strongly depend on a pair of interdependent tunings which needs to be carefully selected based on data. 

Sample size is set to $n=500$ for AsyNoise, and $n=100$ for GEM. With these relatively low sample size, the regularization of the covariance matrices becomes crucial because often small clusters are found compared to the dimensionality $p$.   For both data sets 1000 Monte Carlo replicates have been considered.  The true cluster label of a point is defined based on the component of the sampling distribution that generates it. Misclassification rates are computed with respect to the minimizing permutation of clusters' indexes not involving the estimated noise, which is always matched to the true noise. The underlying eigenratio behavior of this designs is largely varying. The true $\gamma$ is 7 for AsyNoise, and it is 3704.7 for GEM. However,  if one computes the eigenratio of sample clusters' covariances based on true labels the figure can be completely different. In fact, we computed the (5\%, 95\%)-quantiles of the  Monte Carlo distribution of these quantities, and we obtained (44.5, 273.4) for AsyNoise, and (19899.3, 246826.8) for GEM. In the examples given in Section \ref{sec_data} we fixed $\gamma=100$, because in real world applications one typically does not have information on it, and we used the median value adopted in these experiments. 

\begin{table}[!t]
\centering
\caption{Monte Carlo averages, with standard errors in parenthesis,  of misclassification rates (\%) for the AsyNoise sampling design. ``na'' is reported if the software did not produce a valid answer in more than 50\% of the replicates.}
\label{tab:mcr_asynoise}
\begin{tabular}{rrrrrr}
	\toprule
	$\log_{10}(\gamma)$ &  OtrimleECM &  OtrimleAEM & TclustOracle &   TclustFix &      MCLUST \\
	\midrule
	                  0 & 15.31(0.00) & 31.96(0.03) &   4.33(0.00) & 18.63(0.00) &         --- \\
	                0.5 & 11.25(0.01) & 25.57(0.03) &   5.31(0.00) & 18.65(0.00) &         --- \\
	                  1 &  9.46(0.00) & 22.14(0.02) &  16.52(0.00) & 23.30(0.00) &         --- \\
	                  2 & 11.48(0.00) & 19.40(0.01) &  50.26(0.00) & 48.66(0.00) &         --- \\
	                  3 & 12.37(0.01) & 19.71(0.01) &  57.88(0.00) & 56.72(0.00) &         --- \\
	                  6 & 12.05(0.01) & 19.90(0.01) &  57.26(0.00) & 56.89(0.00) &         --- \\
	          $+\infty$ & 12.08(0.00) & 19.92(0.01) &           na &          na & 27.05(0.02) \\
	\bottomrule
\end{tabular}
\end{table}

\begin{table}[!t]
\centering
\caption{Monte Carlo averages, with standard errors in parenthesis,  of misclassification rates (\%) for the GEM sampling design. ``na'' is reported if the software did not produce a valid answer in more than 50\% of the replicates.}
\label{tab:mcr_gem}
\begin{tabular}{rrrrrr}
	\toprule
	$\log_{10}(\gamma)$ & OtrimleECM &  OtrimleAEM & TclustOracle &   TclustFix &      MCLUST \\
	\midrule
	                  0 & 1.10(0.00) & 63.97(0.04) &   1.78(0.00) &  3.32(0.00) &         --- \\
	                0.5 & 0.87(0.00) & 11.96(0.03) &   1.50(0.00) &  3.17(0.00) &         --- \\
	                  1 & 1.57(0.01) &  3.49(0.01) &   1.25(0.00) &  3.11(0.00) &         --- \\
	                  2 & 0.52(0.00) &  2.25(0.00) &   8.10(0.01) &  9.68(0.01) &         --- \\
	                  3 & 0.50(0.00) &  0.71(0.00) &  16.41(0.00) & 18.08(0.00) &         --- \\
	                  6 & 3.82(0.01) &  1.18(0.01) &  14.07(0.00) & 16.80(0.00) &         --- \\
	          $+\infty$ & 4.66(0.01) &  1.17(0.01) &           na &          na & 41.60(0.01) \\
	\bottomrule
\end{tabular}
\end{table}

\begin{figure}[!t]
\centering
\includegraphics[width=\linewidth]{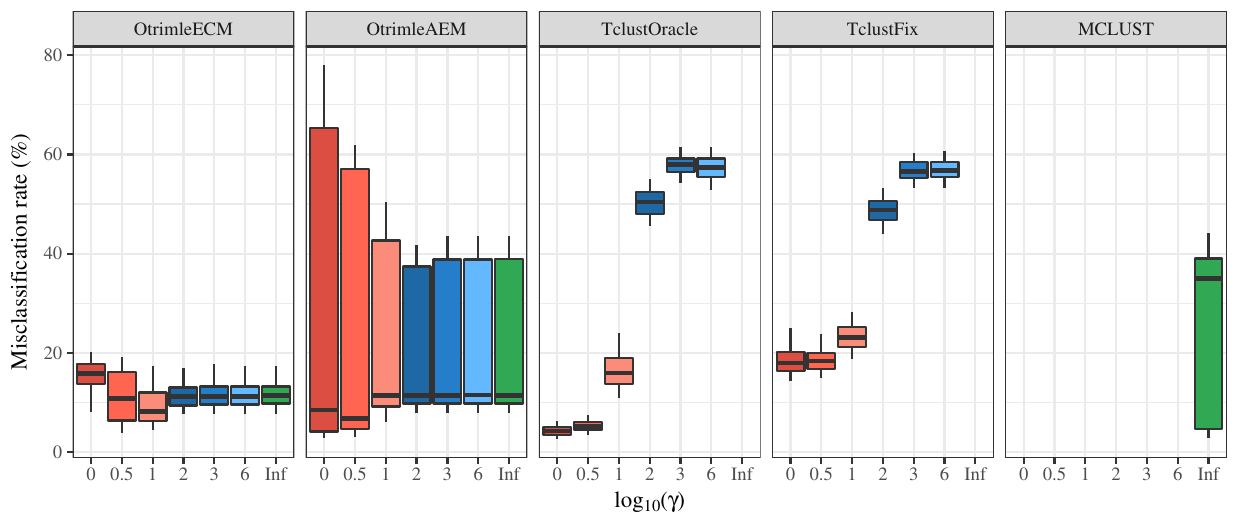}
\caption{Modified boxplots of the Monte Carlo distribution of misclassification rates for the AsyNoise design: whiskers coincide with (5\%, 95\%)--quantiles of the distribution.}
\label{fig:boxplot_AsyNoise}
\end{figure}

\begin{figure}[!t]
\centering
\includegraphics[width=\linewidth]{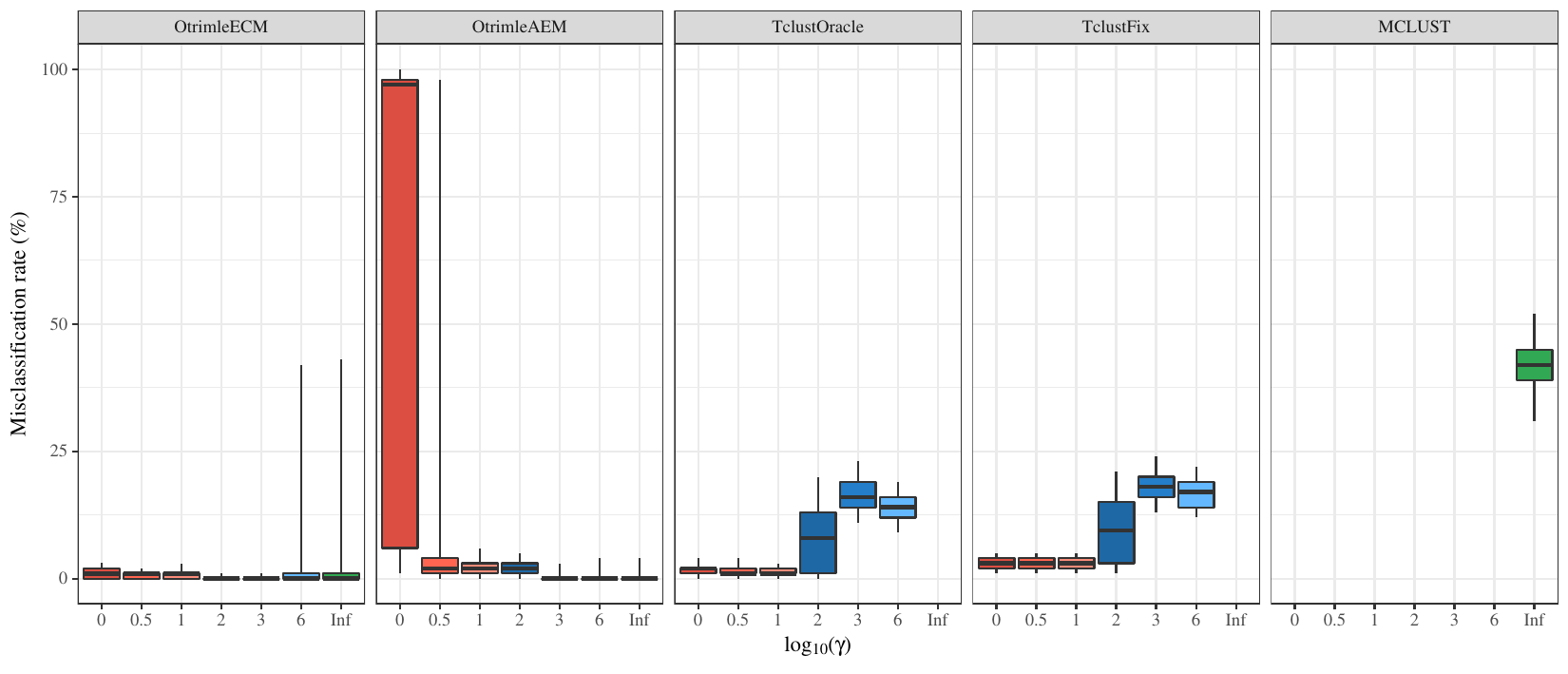}
\caption{Modified boxplots of the Monte Carlo distribution of misclassification rates for the GEM design: whiskers coincide with (5\%, 95\%)--quantiles of the distribution.}
\label{fig:boxplot_GEM}
\end{figure}

Results are summarized in Tables \ref{tab:mcr_asynoise} and \ref{tab:mcr_gem}, and Figures \ref{fig:boxplot_AsyNoise} and \ref{fig:boxplot_GEM}. Since MCLUST does not enforce an eigenratio constraint, results are recorded at  $\gamma=+\infty$, although MCLUST has its own covariance regularization. MCLUST is seriously affected by contamination in both designs, its performance is better for the AsyNoise design for which the boxplot of Figure \ref{fig:boxplot_AsyNoise} shows that in some replica it can produce misclassification rates below 10\%. Regarding the Otrimle and Tclust versions, the performance depends on the setting of the eigenratio constraint. However, OtrimleECM offers the most stable performance in both designs.  OtrimleECM achieves the best misclassification performance in all situations except for few cases where TclustOracle does better, but in fact, TclustOracle is run with the assumption that one knows exactly the expected amount of noise, which is never true in reality.  It is worth to note that TclustOracle seems to not tolerate large values of $\gamma$ in both designs.  This is counterintuitive at least for GEM, where the true $\log_{10}(\gamma)$ is between 3 and 6, but  in this range both TclustOracle and TclustFix have serious problems. OtrimleAEM is  the second best overall, although it shows a large positive skewness in the distribution of the misclassification rates for AsyNoise, and in both designs it is not able to enforce low $\gamma$ values appropriately. This is because in OtrimleAEM an approximate eigenratio constraint is applied at the end of the EM iteration, while during the iteration only a minimum determinant condition is controlled. These results show a remarkable improvement of the ECM algorithm \ref{algo:ecm} over the approximate solution proposed in \cite{Coretto_Hennig_2014_comparison}.


In practice the user has to specify $\gamma$. According to the results shown here, OtrimleECM is not very sensitive to this choice. Also the results show that good misclassification rates can be achieved in GEM, with a true $\gamma>3000$, using a much lower $\gamma$ for ORIMLE; actually for TCLUST a much lower $\gamma$ is even required to achieve good results. Choosing a lower $\gamma$ in such situations may provide some welcome regularization. $\gamma=100$ often seems to be a sensible choice. However, the user needs to have in mind that a straight interpretation of $\gamma$ requires that a variance of 1 (say) along a one-dimensional projection has the same meaning in all directions in data space, which is particularly doubtful if variables have different measurement units or variable-wise variations are not meaningfully comparable. In such cases sphering or at least variable standardization may be advisable.

}

\section{Concluding Remarks}\label{sec_concluding_remarks}

The RIMLE robustifies the MLE in the Gaussian mixture model by adding  an improper constant mixture component to catch outliers and points that cannot appropriately assigned to any cluster. Characteristics of the method compared to other robust clustering methods aiming for approximately Gaussian clusters are a smooth mixture-type transition between clusters and noise, and the fact  that noise and outliers are not modelled by a specific and usually misspecified distribution, but rather as anything where the estimated mixture density is so low that the observation is rather classified to the constant noise than to any mixture component. If needed, the density value of the  improper constant noise component can be chosen in a data-adaptive based on the OTRIMLE criterion developed in   \citep{Coretto_Hennig_2014_comparison}. The RIMLE/OTRIMLE  has shown competitive performance when compared with state of the art methods for robust model-based clustering methods. In this paper we investigated theoretical properties of the  RIMLE, and it is shown existence, consistency, breakdown behaviour, and convergence of algorithms.  Since the RIMLE coincides with the MLE for Gaussian finite mixture models (when $\delta=0$), the present paper also  gives a comprehensive treatment for it which was  missing in the literature.


\bibliographystyle{chicago}
\bibliography{REF}%
\end{document}